\begin{document}

\title{Kernelization of Counting Problems}

\author{
Daniel Lokshtanov\thanks{University of California, Santa Barbara, USA. \texttt{daniello@ucsb.edu}}
 \and Pranabendu Misra\thanks{Chennai Mathematical Institute (CMI), India. \texttt{pranabendu@cmi.ac.in}}
 \and Saket Saurabh\thanks{University of Bergen and The Institute of Mathematical Sciences, HBNI, Chennai, India. \texttt{saket@imsc.res.in}}
 \and Meirav Zehavi\thanks{Ben-Gurion University, Beersheba, Israel. \texttt{meiravze@bgu.ac.il}} 
}

\maketitle




\begin{abstract} 

We introduce a new framework for the analysis of preprocessing routines for parameterized counting problems. Existing frameworks that encapsulate parameterized counting problems permit the usage of exponential (rather than polynomial) time either explicitly or by implicitly reducing the counting problems to enumeration problems. Thus, our framework is the only one  in the spirit of classic kernelization (as well as lossy kernelization). Specifically, we define a compression of a counting problem $P$  into a counting problem $Q$ as a pair of polynomial-time procedures: $\mathsf{reduce}$ and $\mathsf{lift}$. Given an instance of $P$, $\mathsf{reduce}$ outputs an instance of $Q$ whose size is bounded by a function $f$ of the parameter, and given the number of solutions to the instance of $Q$, $\mathsf{lift}$ outputs the number of solutions to the instance of $P$. When $P=Q$, compression is termed kernelization, and when $f$ is polynomial, compression is termed polynomial compression. Our technical (and other conceptual) contributions can be classified into two categories:

\smallskip\noindent {\bf Upper Bounds.} We prove two theorems: {\em (i)} The {\sc \#Vertex Cover} problem parameterized by solution size admits a polynomial kernel; {\em (ii)} Every problem in the class of {\sc \#Planar $\cal F$-Deletion} problems parameterized by solution size admits a polynomial compression.

\smallskip\noindent{\bf Lower Bounds.} We introduce two new concepts of cross-compositions: EXACT-cross-composition and SUM-cross-composition. We prove that if a \#P-hard counting problem $P$ EXACT-cross-composes into a parameterized counting problem $Q$, then $Q$ does not admit a polynomial compression unless the polynomial hierarchy collapses. We conjecture that the same statement holds for SUM-cross-compositions. Then, we prove that: {\em (i)} {\sc \#Min $(s,t)$-Cut} parameterized by treewidth does not admit a polynomial compression unless the polynomial hierarchy collapses; {\em (ii)} {\sc \#Min $(s,t)$-Cut} parameterized by minimum cut size, {\sc \#Odd Cycle Transversal} parameterized by solution size, and {\sc \#Vertex Cover} parameterized by solution size minus maximum matching size,  do not admit polynomial compressions unless our conjecture is false.

\end{abstract}

\newpage


\section{Introduction}\label{sec:intro}

Preprocessing is an integral part of almost any application, ranging from lossless data compression to microarray data analysis for the classification of cancer types.  Therefore, {\em kernelization} (or, more generally, {\em compression}), the mathematical paradigm to analyze preprocessing procedures, is termed ``the lost continent of polynomial time''~\cite{fellows2006lost}. Formally, a decision problem $P$ admits a {\em compression} into a decision problem $Q$ if there exists a polynomial-time algorithm that, given an instance $(I,k)$ of $P$, translates it into an equivalent\footnote{That is, $(I,k)$ is a yes-instance if and only if $(I',k')$ is a yes-instance.} instance $(I',k')$ of $Q$ of size $f(k)$ for some computable function $f$ that depends only on $k$.  When $P=Q$, a compression is termed {\em kernelization}. It is known that  a (decidable) problem admits a kernel if and only if it is in {\em fixed-parameter tractable (FPT)}~\cite{cai1997advice}.\footnote{We refer to Section \ref{sec:prelims} for basic definitions in parameterized complexity and graph theory.} Thus, the most central question in kernelization is: Which problems admit compressions (or kernels) of size $f(k)$ where $f $ is polynomial in $k$, termed {\em polynomial compressions}?
Techniques to show upper bounds on (polynomial or other) kernel sizes have already emerged in the early 1990s~\cite{fomin2019kernelization}. On the other hand, Bodlaender et al.~\cite{DBLP:journals/jcss/BodlaenderDFH09} proved that, unless the polynomial hierarchy collapses, there exist problems that do not admit a polynomial compression (and, hence, neither a polynomial kernel). 

Due to the centrality and mathematical depth of compression/kernelization, the underlying framework has been extended to capture optimization problems, and, more generally, the computation of approximate (rather than only exact) solutions for optimization problems, by Lokshtanov et al.~\cite{DBLP:conf/stoc/LokshtanovPRS17} (building upon \cite{DBLP:journals/jcss/FellowsKRS18}). In particular, a compression of an optimization problem $P$ into an optimization problem $Q$ is a pair of polynomial-time procedures: $\mathsf{reduce}$ and $\mathsf{lift}$. Given an instance of $P$, $\mathsf{reduce}$ outputs an instance of $Q$ whose size is bounded by a function $f$ of the parameter, and given an optimal solution to the instance of $Q$, $\mathsf{lift}$ outputs an optimal solution to the instance of $P$. More generally, to encompass the computation of approximate solutions with a loss of factor $\alpha\geq 1$, given a $\beta$-approximate solution to the instance of $Q$, for any $\beta\geq 1$, $\mathsf{lift}$ must output an $\alpha\cdot\beta$-approximate solution to the instance of $P$. Since its introduction,  this notion of compression/kernelization (termed {\em lossy compression/kernelization}) has already found a wide range of applications; see, e.g., \cite{DBLP:conf/soda/Manurangsi19,DBLP:conf/esa/0001KW19,eiben2019lossy,krithika2016lossy,krithika2018revisiting,agarwal2019parameterized,van2020approximate,eiben2017lossy}
 for just a few illustrative examples.
 
In this paper, we introduce a new framework for the analysis of preprocessing routines for parameterized counting problems. Existing frameworks that encapsulate parameterized counting problems permit the usage of exponential (rather than polynomial) time either explicitly or by implicitly reducing counting problems to enumeration problems (see Section \ref{sec:comparison}). Thus, our framework is the only one  in the spirit of classic compression/kernelization in particular, and lossy compression/kernelization in general. Specifically, we define a compression of a counting problem $P$  into a counting problem $Q$ as a pair of polynomial-time procedures: $\mathsf{reduce}$ and $\mathsf{lift}$. Given an instance of $P$, $\mathsf{reduce}$ outputs an instance of $Q$ whose size is bounded by a function $f$ of the parameter, and given the number of solutions to the instance of $Q$, $\mathsf{lift}$ outputs the number of solutions to the instance of $P$. We demonstrate the depth of our framework by proofs of both positive and negative results (see Section \ref{sec:contribution}). In particular, in terms of conceptual contribution, in addition to the framework itself, we also introduce two new types of cross-compositions, termed EXACT- and SUM-cross-compositions, aiming to provide analogs to the classic OR- and AND-cross-compositions used to derive negative results for (classic) kernels.
 
Over the past two decades, the body of works on parameterized counting problems has grown quite rapidly (see, e.g., \cite{DBLP:conf/stoc/FockeR22,DBLP:conf/stoc/Curticapean21,DBLP:conf/soda/LokshtanovSZ21,DBLP:conf/focs/0002R21,DBLP:conf/soda/DellLM20,DBLP:conf/soda/CurticapeanLN18,DBLP:conf/stoc/CurticapeanDM17} for a few illustrative examples of recent developments). In both theory and practice, there are various scenarios where counting the number of solutions might be equally (or more) important than only detecting a single solution (if one exists)~\cite{DBLP:conf/iwpec/Curticapean18}. This includes, for example, the computation of graph motifs to observe certain phenomena in social and biological networks~\cite{milo2002network}, and determination of thermodynamic properties of discrete systems by partition functions~\cite{isihara2013statistical}. However, most natural counting problems are not known to (and unlikely to) admit polynomial-time algorithms: Beyond problems whose decision versions are NP-hard, there also exist numerous problems whose decision versions are solvable in polynomial time, but whose counting versions are unlikely to be (e.g, a prime example of such problems is the {\sc Maximum Matching} problem on bipartite graphs~\cite{valiant1979complexity1,valiant1979complexity}). Naturally, this makes the study of the parameterized complexity of counting problems very attractive.

\subsection{Related Frameworks}\label{sec:comparison}

Prior to our work, there existed three frameworks relevant to the analysis of preprocessing routines for parameterized counting problems. However, all of these three frameworks (explicitly or implicitly)  correspond  to computation in exponential (rather than  polynomial) time, as well as to either enumeration (rather than counting) or data reduction other than compression/kernelization. Thus, they serve purposes that are very different than what {\em compression/kernelization} of parameterized {\em counting} problems should be (though, of course, they are of interest on their own right). Below, we elaborate on each of these three frameworks.

Among the three  aforementioned frameworks, the one whose utility is most similar to ours was developed by Thurley~\cite{DBLP:conf/tamc/Thurley07}, yet, even this framework concerns, implicitly, enumeration and computation in exponential time  (indeed, it is referred to as a formalization of so-called enumeration compactors in \cite{DBLP:conf/isaac/0002ST18}, and as a reduction of counting to enumeration in \cite{DBLP:journals/jcss/GolovachKKL22}). Roughly speaking, the definition of Thurley~\cite{DBLP:conf/tamc/Thurley07} can be interpreted as follows when using two polynomial-time procedures (as we do), $\mathsf{reduce}$ and $\mathsf{lift}$. Here, given an instance of a {\em counting} problem $P$, $\mathsf{reduce}$ outputs an instance of an {\em enumeration} problem $Q$ whose size is bounded by a function $f$ of the parameter. We suppose that each solution to the instance of $Q$ corresponds to a set of solutions to the instance of $P$; then, the collection of sets of solutions to the instance $P$ corresponding to the different solutions to the instance of $Q$ should form a partition of the set of solutions to the instance of $P$. Accordingly,  given a particular {\em solution} $s$ to the instance of $Q$, $\mathsf{lift}$ outputs the number of solutions to the instance of $P$ that correspond to $s$. In particular, given an {\em enumeration} of the solutions to the instance of $Q$, by calling $\mathsf{lift}$ for each one of them, we can obtain (in exponential time, depending on the number of solutions) the number of solutions to the instance of $P$.

The second framework is explicitly designed for enumeration problems. Still, we briefly discuss it here, since it shares some similarity to the framework of Thurley~\cite{DBLP:conf/tamc/Thurley07}. This framework was introduced by Creignou et al.~\cite{creignou2017paradigms} and refined by  Golovach et al.~\cite{DBLP:journals/jcss/GolovachKKL22}. Roughly speaking, in its latter incarnation, we are also given two polynomial-time procedures, $\mathsf{reduce}$ and $\mathsf{lift}$. Here, given an instance of an {\em enumeration} problem $P$, $\mathsf{reduce}$ outputs an instance of an {\em enumeration} problem $Q$ whose size is bounded by a function $f$ of the parameter. Then, $\mathsf{lift}$ is defined similarly as before, except that now, given a particular solution $s$ to the instance of $Q$, it enumerates (either in polynomial time or with polynomial delay) the solutions to the instance of $P$ that correspond to $s$. Like before, to derive the number of solutions to the instance of $P$, it is required to spend exponential time. 

The third framework is designed specifically for counting, but it is less in the spirit of compression/kernelization, and, accordingly, it is termed {\em compaction}. Additionally and similarly to the two aforementioned frameworks, it corresponds to computation in exponential time.  This framework was introduced by Kim et al.~\cite{DBLP:conf/isaac/0002ST18} (and further surveyed in \cite{thilikos2021compactors}). Roughly speaking, here we consider a polynomial-time procedure $\mathsf{compactor}$ (that can be thought of as $\mathsf{reduce}$) and an {\em exponential-time (or worse)} procedure $\mathsf{extractor}$ (that is very different in spirit than $\mathsf{lift}$). Here, given an instance of a counting problem $P$, $\mathsf{compactor}$ outputs an instance of a counting problem $Q$ whose size is bounded by a function $f$ of the parameter. Having computed the output instance, one can essentially discard all knowledge of the input instance, yet call the procedure $\mathsf{extractor}$ to solve the input instance. In a sense, the definition of compaction can be viewed as an ``intermediate'' concept that lies in between those of a fixed-parameter algorithm and a compression algorithm, which is of interest on its own right. Perhaps the main drawback of this third framework is that, because $\mathsf{extractor}$ is allowed (and must be allowed, if we deal with a \#P-hard problem) to spend  exponential-time (or worse)  in the size of the output of $\mathsf{compactor}$, we might often want to employ, in the first place, a fixed-parameter algorithm directly on the instance of $P$. 

We note that we are not aware, with respect  to any of the three frameworks discussed above, of the establishment of any non-trivial lower bound---that is, a lower bound that does not simply follows from fixed-parameter intractability.

\medskip
\noindent{\bf Remark.} We were very recently made aware that independently of our work, Jansen and van der Steenhoven~\cite{newIPEC} have just presented results that are  more in-lined in spirit with ours: Specifically, they either solve the given instance, or output an instance of size polynomial in the parameter and with the same number of solutions. They also speculate on developing a meaningful theory of counting kernelization. We answer this speculation as in this paper, as we develop a framework counting kernelization, along with a framework for proving lower-bounds.

\subsection{Our Contribution}\label{sec:contribution}

Our technical (and other conceptual) contributions can be classified into two categories: upper bounds and lower bounds. Here, we discuss the statements our results, and the new concepts that we introduce in the context of lower bounds. The technical aspects of our work are overviewed later, in Section \ref{sec:overview}. (We remark that some additional simple statements concerning our notion of compression/kernelization are proved in Section \ref{sec:kernelDef}.)

\medskip
\noindent{\bf Upper Bounds.} Let us start with the discussion of our upper bounds. We begin by the analysis of the {\sc \#$k$-Vertex Cover} problem, whose decision version is the most well studied problem in parameterized complexity~\cite{DBLP:books/sp/CyganFKLMPPS15,downey2013fundamentals}. The objective is  to count the number of vertex covers of size at most $k$ in a given graph $G$. Here, it is important to note that we count {\em all} vertex covers of size at most $k$, and not only the minimal ones (which is a significantly easier task; see Section \ref{sec:vc}). For the {\sc \#$k$-Vertex Cover} problem, we prove the following theorem in Section \ref{sec:vc}.

\begin{restatable}{theorem}{vcKernel}
\label{thm:vcKernel}
{\sc \#$k$-Vertex Cover} admits a polynomial kernel.
\end{restatable}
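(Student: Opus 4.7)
The plan is to adapt Buss's classical \probVC kernelization to the counting setting, with the crucial difference that reductions must preserve the exact number of vertex covers of size at most $k$, not merely their existence.

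First I would apply the high-degree rule: if a vertex $v$ has degree greater than $k$, it must lie in every vertex cover of $G$ of size at most $k$, since otherwise all of its more than $k$ neighbors would have to be in the cover. Thus $S \mapsto S \setminus \{v\}$ bijects vertex covers of $G$ of size $\leq k$ with those of $G - v$ of size $\leq k - 1$, and replacing $(G,k)$ by $(G - v, k - 1)$ preserves the count exactly. Applying the rule exhaustively in polynomial time yields an instance $(G_1, k_1)$ with $k_1 \leq k$ and maximum degree at most $k_1$. If $|E(G_1)| > k_1^2$, no vertex cover of size $\leq k_1$ exists (each vertex covers at most $k_1$ edges), so the count is $0$ and I would output the trivial zero-count instance $(K_2, 0)$. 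Otherwise the non-isolated subgraph $G_1^\circ := G_1 - I$ already has at most $2k_1^2 \leq 2k^2$ vertices.

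The heart of the argument is then to absorb the number $i := |I|$ of isolated vertices, which may be much larger than any polynomial in $k$. Since each isolated vertex multiplies the generating polynomial $\sum_s c_s(G_1^\circ) x^s$ by $(1+x)$, one has
\[
\#\{\text{VCs of } G_1 \text{ of size } \leq k_1\} \;=\; \sum_{t=0}^{k_1} c_t(G_1^\circ) \sum_{j=0}^{k_1-t} \binom{i}{j}.
\]
A direct substitution of $I$ by a small graph with the same truncated generating polynomial is impossible: any graph $H$ has $c_1(H) \le |V(H)|$, so matching $\binom{i}{1} = i$ alone would already force at least $i$ vertices in the substitute.

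My plan is instead to build a polynomial-size \emph{encoding gadget} $W$ and choose an output parameter $k'$ so that the single count $\#\{\text{VCs of } G_1^\circ \cup W \text{ of size } \leq k'\}$ is a base-$B$ representation of the tuple $(c_0(G_1^\circ), \dots, c_{k_1}(G_1^\circ))$, with $B$ chosen larger than the largest possible $c_t \leq (2k^2)^k$ (so that $\log B = O(k \log k)$). The $\mathsf{lift}$ procedure then decodes the $k_1+1$ coefficients by base-$B$ digit extraction, reads $i$ from the original $G$, and returns the right-hand side of the displayed identity in polynomial time. The main obstacle will be the explicit construction of $W$ as a plain \probVC instance---rather than a weighted or colored variant---whose vertex cover counts realize exactly the required base-$B$ weights, avoiding carry-overs between coefficients and spurious contributions arising from intermediate ways of splitting the parameter budget between $G_1^\circ$ and $W$.
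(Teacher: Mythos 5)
Your first two steps (exhaustive Buss rule, the zero-count branch when $|E(G_1)|>k_1^2$, and the identity relating the count of $G_1$ to the exact-size coefficients $c_t(G_1^\circ)$ via the isolated vertices) match the paper's opening exactly, and your diagnosis of the core difficulty---that knowing a single aggregate count is not enough to recover the individual $c_t$'s---is also precisely the one the paper confronts. The idea of making the $c_t$'s recoverable by digit extraction in a geometrically separated ``base'' is likewise the right one, and this is where the actual work of the theorem lies. However, you explicitly leave the construction of the encoding gadget $W$ as ``the main obstacle,'' and that gap is the theorem: everything before it is the classical Buss kernel, and nothing after it is automatic.

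Moreover, the specific architecture you propose---a disjoint union $G_1^\circ \cup W$---is not obviously repairable into a polynomial kernel. In $G_1^\circ\cup W$ with budget $k'$, the count factors as $\sum_t c_t\,A_{k'-t}$ where $A_s$ is the cumulative number of vertex covers of $W$ of size at most $s$. Digit extraction needs $A_s/A_{s-1}\gtrsim k_1\cdot(2k^2)^{k_1}$ for $k_1$ consecutive values of $s$, which forces $c'_s(W)\geq\Theta\cdot c'_{s-1}(W)$ with $\Theta=k^{\Omega(k)}$ on a window of length $k_1$. Since each extra selected vertex of $G_1^\circ$ only frees a \emph{single} budget unit for $W$, the separation must come entirely from how sharply the level-$s$ vertex-cover counts of $W$ grow in one step; achieving a $k^{\Omega(k)}$ jump for $k$ consecutive levels with a graph on $\mathrm{poly}(k)$ vertices is exactly the ``explicit construction'' you flag as an obstacle, and it is far from clear that a plain (unweighted) vertex-cover gadget of polynomial size can realize it.

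The paper's construction sidesteps this by changing the cost model rather than building a side gadget: it replaces every vertex of $G_2 := G_1^\circ$ by $d=n_2$ false twins and adds $t=d+dk_2+2(dk_2)^2$ fresh isolated vertices, with new budget $k_3=dk_2$ (Definition~\ref{def:defineG3}). Now a vertex cover of $G_3$ of size $\leq k_3$ projects to a vertex cover $X$ of $G_2$ (the vertices with all $d$ copies selected), and the fiber over each $X$ with $|X|=i$ has size $w_i$, given explicitly in Lemma~\ref{lem:sizes}, so $x_3=\sum_i c_i\,w_i$. Crucially, selecting one more vertex of $G_2$ now costs $d$ budget units rather than $1$, which multiplies the separation between consecutive fiber sizes by a factor of roughly $(t/d)^{d}$-type magnitude; the paper proves $w_i>\sum_{j>i}\binom{n_2}{j}w_j$ (Lemma~\ref{lem:bound}) with all of $d,t,k_3$ polynomial in $k$, and the $w_i$'s are polynomial-time computable via dynamic programming (Lemma~\ref{lem:computewi}), which makes the greedy digit extraction in $\mathsf{lift}$ go through. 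So the paper's kernel is a blow-up of the reduced graph itself, not a disjoint encoding gadget; if you want to pursue the disjoint-gadget route, you would first need an argument that such a $W$ exists at polynomial size, and the back-of-envelope bounds above suggest it may not.
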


Next, we turn to consider a wide  class of parameterized counting problems, termed the class of {\sc \#$k$-Planar ${\cal F}$-Deletion} problems. In particular, the class of {\sc $k$-Planar ${\cal F}$-Deletion} problems encompasses a wide variety of well-known problems that have been extensively studied from the viewpoint of parameterized complexity, such as {\sc Vertex Cover}, {\sc Feedback Vertex Set}, {\sc Treewidth $\eta$-Deletion}, and more~\cite{fomin2012planar}. While we present a meta-theorem that resolves every problem in this class, we do not generalize our previous theorem---our meta-theorem yields compressions rather than kernelizations. Formally, the class of {\sc \#$k$-Planar ${\cal F}$-Deletion} problems contains one problem for every (finite) set of connected graphs $\cal F$ that contains at least one planar graph---here, given a graph $G$ and $k\in\mathbb{N}_0$, the objective is to count the number of vertex sets of size at most $k$ whose removal from $G$ yields a graph that does not contain any graph from $\cal F$ as a minor. For the class of {\sc \#$k$-Planar ${\cal F}$-Deletion} problems, we prove the following theorem in Section \ref{sec:compression}.

\begin{restatable}{theorem}{compression}
\label{thm:compression}
{\sc \#$k$-Planar ${\cal F}$-Deletion} admits a polynomial compression. 
\end{restatable}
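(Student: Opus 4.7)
The plan is to build on the decision-version polynomial kernel of Fomin, Lokshtanov, Misra, and Saurabh for \#Planar $\mathcal{F}$-Deletion, carefully accounting for multiplicities through the protrusion-replacement step. First, I invoke their polynomial-time constant-factor approximation to compute an approximate deletion set $\hat{S}$ of size $O(k^c)$ with $G - \hat{S}$ being $\mathcal{F}$-minor-free. Since $\mathcal{F}$ contains a planar graph, $\mathrm{tw}(G - \hat{S}) \le t$ for a constant $t = t(\mathcal{F})$. A standard protrusion decomposition then partitions $V(G) \setminus \hat{S}$ into protrusions $P_1, \dots, P_r$ with $r = O(k^c)$, each of boundary size $\le t+1$ and treewidth $\le t$. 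For each $P_i$, I compute via bounded-treewidth dynamic programming its \emph{counting profile} $c_i \colon \Sigma \times \{0, \dots, k\} \to \mathbb{N}$, where $\Sigma$ is the finite state set of the tree automaton recognising $\mathcal{F}$-minor-freeness through a boundary of size $\le t+1$, and $c_i(\sigma,j)$ counts subsets $X \subseteq V(P_i) \setminus \partial(P_i)$ of size $j$ such that $(P_i - X, \partial(P_i))$ has state $\sigma$.

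Next, I design the replacement step to preserve counts. Whereas the decision kernel replaces each large $P_i$ by a constant-size protrusion of the same automaton state $\sigma(P_i)$, the counting analog requires preserving the entire profile $c_i$; since ``same profile'' has infinitely many equivalence classes, this cannot be done inside \#Planar $\mathcal{F}$-Deletion itself. The target problem $Q$ of the compression is therefore taken to be an annotated variant of \#Planar $\mathcal{F}$-Deletion whose replacement gadgets $\Gamma_i$ carry labels encoding the generating polynomials $\pi_{i,\sigma}(z) = \sum_j c_i(\sigma,j)\, z^j$ as weights. The counting measure of $Q$ is the weighted sum that convolves these polynomials across the $\Gamma_i$'s and ranges over deletions $Y \subseteq \hat{S}$ with $|Y| \le k$; by construction this weighted count equals $n(P)$, so $\mathsf{lift}$ simply returns the value of $Q$.

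The principal obstacle is keeping the compressed instance of size strictly $\mathrm{poly}(k)$: each $c_i(\sigma,j)$ can be as large as $n^k$, requiring $\Omega(k \log n)$ bits, which totals $\Omega(k^{c+2}\log n)$ across all profile entries and exceeds a strict $\mathrm{poly}(k)$ budget. To overcome this, I let $\mathsf{lift}$ exploit its access to the original instance $I$ and recompute each profile $c_i$ via the same bounded-treewidth DP used by $\mathsf{reduce}$. The compressed instance $Q'$ then only needs to encode the $\mathrm{poly}(k)$-sized \emph{structural skeleton} $(G[\hat{S}], \{\partial(P_i)\}_i, \text{gadget identifier labels})$, and the single count of $Q$ together with the $\mathsf{lift}$-computed profiles determines $n(P)$ via a polynomial-time convolution formula. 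Verifying this is the most delicate step: correctness relies on the finiteness of $\Sigma$, the bounded boundary and treewidth of each $P_i$, and the degree bound $\le k$ on the generating polynomials, while the size bound $O(k^{c+O(1)})$ follows from $|\hat{S}| = O(k^c)$ and $r = O(k^c)$.
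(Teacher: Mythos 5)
Your overall architecture matches the paper's: an approximation algorithm gives a modulator $\hat{S}$ of size $k^{\OO(1)}$, the remainder has bounded treewidth, and you record per-component counting information indexed by a finite set of boundary states (the paper uses equivalence classes of $\equiv_{\cal F}$ with progressive representatives, which plays the same role as your tree-automaton states $\Sigma$), storing the result in an annotated target problem $Q$ rather than in {\sc \#Planar $\cal F$-Deletion} itself. The paper additionally makes the reduction to $O(k^{\OO(1)})$ pieces explicit via an irrelevant-component rule, but your protrusion-decomposition invocation plays the same role. Up to this point the two proofs agree in substance.

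The gap is in your fix for the size bound. You correctly observe that each table entry $c_i(\sigma,j)$ may be as large as $n^k$, so storing all entries naively would cost $\Omega(k\log n)$ bits each and break the $\mathrm{poly}(k)$ budget. But your repair --- strip the numeric profile values from the compressed instance and have $\mathsf{lift}$ recompute them from $(G,k)$ --- cannot work. The target problem's count $Q(x',k')$ must be a function of $(x',k')$ alone; if $(x',k')$ encodes only the structural skeleton without the profile values, then the $Q$-oracle's answer carries none of the counting information that it was supposed to carry, and the entire burden of combining the profiles falls on $\mathsf{lift}$. That combination, however, is not a ``polynomial-time convolution formula'': it is a sum over all assignments of a state from $\Sigma$ and a budget $j_i \le k$ to each of the $r = k^{\OO(1)}$ protrusions together with a choice of $S \cap \hat{S}$, subject to a global $\cal F$-minor-freeness check, and this sum has $|\Sigma|^{r}\cdot k^{\OO(r)}$ terms --- exponential in $k$ (the paper's Observation~\ref{obs:count-time} records this explicitly). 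So with your fix, $\mathsf{lift}$ would have to do the exponential-time work itself, which it cannot. This is precisely the work that must be delegated to the $Q$-oracle, and for that the profiles must live inside the compressed instance.

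The paper's resolution is a case split on the relationship between $k$ and $n$. If $\log n \le k^{\tau_n}$ for a suitable constant $\tau_n$, then each count of magnitude at most $n^k$ needs only $k^{\tau_n+1} = k^{\OO(1)}$ bits, so the tables $T_C$ (profiles) do fit in the $\mathrm{poly}(k)$-size compressed instance, and the $Q$-oracle evaluates $\mathsf{count}(k')$ for each $k'\le k$ while $\mathsf{lift}$ performs only the easy polynomial-time adjustment for the deleted irrelevant components. If instead $\log n > k^{\tau_n}$, then $n$ dominates $2^{k^{\OO(1)}}$, so the exponential-in-$k$ evaluation of $\mathsf{count}(k')$ is polynomial in $n$, $\mathsf{reduce}$ outputs a trivial instance, and $\mathsf{lift}$ ignores the oracle and computes everything directly. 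You need to incorporate this (or an equivalent) case distinction; without it the size bound and the polynomial-time requirement on $\mathsf{lift}$ cannot both be satisfied.
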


\medskip
\noindent{\bf Lower Bounds.} We present two new types of cross-compositions, which we term EXACT-cross-composition and SUM-cross-composition. To understand the roots of these notions, let us first briefly present the classic notion of OR-cross-composition. Roughly speaking, we say that a decision problem $P$ OR-cross-composes into a parameterized problem $Q$ if, given a set of instances $x_1,x_2,\ldots,x_t$ of $P$, we can, in polynomial time, output a single instance $(y,k)$ of $Q$ with the following properties: {\em (i)} the parameter $k$ is bounded by a polynomial function of $\max_{i=1}^t|x_i|$ and $\log t$, and {\em (ii)} $(y,k)$ is a yes-instance if and only if at least one $x_i$ is a yes-instance. The importance of the notion of OR-cross-composition to compression/kernelization is rooted at the following theorem: If an NP-hard problem $P$ OR-cross-composes into a parameterized problem $Q$, then, $Q$ does not admit a polynomial compression (and, hence, neither a polynomial kernel), unless coNP $\subseteq$ NP/poly~\cite{bodlaender2009problems,bodlaender2014kernelization}. The intuition behind the correctness of this theorem is that, if $Q$ did admit a polynomial compression, then that would have meant that, in polynomial time, we are able to turn $t$ instances of an NP-hard problem to a single instance whose size depends (roughly) only on that size of a polylogarithmic number of them rather than all of them---intuitively, this means that we were able to resolve instances of an NP-hard problem in polynomial time.

Now, let us first discuss our notion of EXACT-cross-composition.\footnote{In the manuscript, we consider SUM-cross-composition first since the reduction we give in the context of EXACT-cross-composition builds upon one of the reductions that we give in the context of SUM-cross-composition.} Roughly speaking, we say that a counting problem $P$ EXACT-cross-composes into a parameterized counting problem $Q$ if, given a set of instances $x_1,x_2,\ldots,x_t$ of $P$, we can, in polynomial time, output a single instance $(y,k)$ of $Q$ with the following properties: {\em (i)} the parameter $k$ is bounded by a polynomial function of $\max_{i=1}^t|x_i|$ and $\log t$, and {\em (ii)} given the number of solutions to $(y,k)$, we can output, in polynomial time, the number of solutions to $x_i$ for every $i\in\{1,2,\ldots,t\}$. For EXACT-cross-composition, we prove the following theorem in Section \ref{sec:lower}.

\begin{restatable}{theorem}{lowerThmEXACT}
\label{thm:exact}
Assume that a \#P-hard counting problem $P$ EXACT-cross-composes into a parameterized counting problem $Q$. Then, $Q$ does not admit a polynomial compression, 
 unless \#P $\subseteq$ ``NP/poly'' (which implies that coNP $\subseteq$ NP/poly).
\end{restatable}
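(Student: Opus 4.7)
The plan is to adapt the Fortnow--Santhanam framework (which underpins the classical lower-bound machinery for polynomial compressions of decision problems) to the counting setting, via a distillation argument.

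First, I would combine the two hypotheses: if $Q$ admitted a polynomial compression into some counting problem $Q'$, then composing it with the assumed EXACT-cross-composition of $P$ into $Q$ would yield an \emph{EXACT-distillation} of $P$, i.e., a pair of polynomial-time procedures $(D,L)$ such that, given $t$ instances $x_1,\ldots,x_t$ of $P$ each of length at most $n$, $D$ outputs a single instance $z$ of $Q'$ of size polynomial in $n+\log t$, and $L$, given the tuple together with the number $\#Q'(z)$, outputs in polynomial time the vector of counts $(\#P(x_1),\ldots,\#P(x_t))$. Thus it suffices to show that a \#P-hard problem possessing such an EXACT-distillation must lie in \textsf{NP/poly} in the counting sense, i.e., $\{(x,v):\#P(x)=v\}\in\textsf{NP/poly}$.

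To build such a verifier, I would fix an input length $n$, set $t$ to a sufficiently large polynomial in $n$ so that $|z|$ is polynomial in $n$, and take as advice polynomially many ``witness outputs'' $z^*_1,\ldots,z^*_m$ together with their counts $\#Q'(z^*_i)$. On input $(x,v)$, the verifier nondeterministically guesses an index $i$, a position $j\in\{1,\ldots,t\}$, and the remaining $t-1$ coordinates of a tuple $\vec{x}$ whose $j$-th entry is $x$; it then verifies in polynomial time that $D(\vec{x})=z^*_i$, runs $L$ with the advice value $\#Q'(z^*_i)$ to recover the vector of counts, and accepts iff the $j$-th component equals $v$. Soundness is immediate from the correctness of the lift.

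The main obstacle will be the completeness of this verifier, i.e., showing that the advice can be chosen of polynomial size while guaranteeing that every $x$ of length $n$ is ``covered'' in the sense that some tuple containing $x$ maps under $D$ to some $z^*_i$. My approach is a probabilistic/combinatorial covering argument in the style of Fortnow--Santhanam: pigeonhole on the compact output space yields an output $z^*$ hit by an exponentially large set of tuples; an averaging argument over the coordinates shows that such a $z^*$ covers a significant fraction of inputs $x$; one then peels off the covered inputs and iterates on the residual instance-space, terminating after polynomially many rounds. The counting nature of the setting makes this step more delicate than in the OR case, since the lift must reproduce the entire vector of counts rather than a single membership bit, so the pigeonholing must interact with the \emph{joint} distribution of counts on the coordinates rather than a single bit.

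Finally, once $\{(x,v):\#P(x)=v\}\in\textsf{NP/poly}$ is established for a \#P-hard $P$, the containment $\#P\subseteq\textsf{NP/poly}$ follows by reducing arbitrary \#P functions to $P$. The implication $\coNP\subseteq\textsf{NP/poly}$ in turn follows because every $\coNP$ language can be written as ``$f(x)=0$'' for some \#P function $f$, whose membership question then lies in \textsf{NP/poly} by the established containment.
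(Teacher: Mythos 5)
Your high-level plan matches the paper's proof almost exactly: compose the cross-composition with the hypothesized polynomial compression to get a distillation, then convert a polynomial-size-output distillation of a \#P-hard problem into an NP/poly algorithm via a Fortnow--Santhanam covering argument with advice consisting of representative outputs together with their counts, and finally push \#P $\subseteq$ ``NP/poly'' down to coNP $\subseteq$ NP/poly. So the architecture is correct.

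There is, however, one genuine gap in the step ``composing it with the assumed EXACT-cross-composition yields an EXACT-distillation whose \textsf{reduce} outputs a single instance $z$ of $Q'$.'' The EXACT-cross-composition of $P$ into $Q$ is only guaranteed to work on batches $x_1,\dots,x_t$ that are pairwise equivalent under the polynomial equivalence relation $R$; your distillation must handle arbitrary tuples of $P$-instances of length $n$. You therefore have to partition the batch into $R$-classes (there are only $n^{O(1)}$ of them), apply the cross-composition and then the compression \emph{separately} on each class, and then aggregate the resulting $W$-instances $(z_1,k_1),\dots,(z_r,k_r)$ into a \emph{single} instance of a \emph{new} counting problem whose answer still determines every $W(z_i,k_i)$. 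This aggregation is not free in the counting world: the paper has to define a compound problem $\mathrm{EXACT}(W)$ whose value on the concatenated instance is $\sum_i M_n^i\,W(z_i,k_i)$ for a carefully chosen base $M_n$ (a polynomial bound $N_n$ on any $W$-count times the number of classes), so that the individual $W(z_i,k_i)$ can be peeled off by repeated division. In the OR world the analogous aggregation is trivial (just take OR); in the EXACT world it is exactly this positional-numeral-system trick, and without it the distillation does not exist as described.

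One further remark: the concern that the covering/pigeonhole step ``must interact with the joint distribution of counts on the coordinates'' and is therefore ``more delicate than in the OR case'' is misplaced. The covering lemma (Proposition~\ref{prop:cover} in the paper) is purely about which inputs appear in some tuple mapping to a small set of representative outputs; it is agnostic to what the output encodes. Once $x$ is covered by some tuple mapping to a stored representative $z^*$ with stored count $Q'(z^*)$, the lift procedure recovers $\#P(x)$ \emph{exactly}, so soundness and completeness are in fact simpler to argue here than in the OR case, where one has to separately worry that no-instances are never wrongly covered by a yes-representative. The real delicacy is entirely in the aggregation step described above, not in the covering argument.
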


For an application of Theorem \ref{thm:exact}, we consider the classic {\sc \#Min $(s,t)$-Cut} problem. Here, given a graph $G$ and two vertices $s,t$ in $G$, the objective is to count the number of minimum $(s,t)$-cuts in $G$. Notably, the decision version of this problem is solvable in polynomial time~\cite{cormen2001introduction} (and, hence, it trivially admits a polynomial, and even  constant-size, kernel, with respect to any parameter). Moreover, it is easy to see  that {\sc \#Min $(s,t)$-Cut} parameterized by treewidth is in FPT. So, it is natural to ask whether  {\sc \#Min $(s,t)$-Cut} parameterized by treewidth admits a polynomial kernel (or at least a polynomial compression). We answer this question negatively in Section \ref{sec:lower}. 

\begin{restatable}{theorem}{lowerEXACT}
\label{thm:lowerEXACT}
{\sc \#$w$-Min $(s,t)$-Cut} does not admit a polynomial compression, 
unless \#P $\subseteq$ ``NP/poly'' (which implies that coNP $\subseteq$ NP/poly).
\end{restatable}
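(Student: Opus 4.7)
The strategy is to apply Theorem~\ref{thm:exact}, so it suffices to exhibit an EXACT-cross-composition from a \#P-hard counting problem into \#$w$-\textsc{Min $(s,t)$-Cut}. I take as source the (unparameterized) \#\textsc{Min $(s,t)$-Cut} problem on general (multi-)graphs, whose \#P-completeness is classical. Given instances $(G_1,s_1,t_1),\ldots,(G_t,s_t,t_t)$ with $n := \max_i |V(G_i)|$, let $c_i$ and $N_i$ denote respectively the value and number of minimum $(s_i,t_i)$-cuts of $G_i$. First normalize to a common minimum cut value $c := \max_i c_i$ by adding $c - c_i$ parallel $(s_i,t_i)$-edges to each $G_i$; this preserves $N_i$ since every new minimum cut must sever exactly these added edges. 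Then fix a base $B := 4^{n+1}$, so that $N_i < B$ for every $i$ and $\log_4 B = n+1$ is an integer.

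The combined instance $(G,s,t)$ is built along a backbone $s = v_0, v_1, \ldots, v_t = t$, with the $i$-th ``branch'' sitting between $v_{i-1}$ and $v_i$. Branch $i$ is the parallel composition (along the shared endpoints $v_{i-1}, v_i$) of three subgraphs: the input $G_i$ with $s_i, t_i$ identified with $v_{i-1}, v_i$; a multiplier $M_i$ formed by $(i-1)(n+1)$ internally vertex-disjoint ``diamonds'' (4-cycles) joining $v_{i-1}$ to $v_i$; and a padding $P_i$ of $2(t-i)(n+1)$ parallel edges from $v_{i-1}$ to $v_i$. A single diamond between two ports has minimum cut value $2$ with exactly $4$ distinct minimum cuts, so $M_i$ has minimum cut value $2(i-1)(n+1)$ and exactly $4^{(i-1)(n+1)} = B^{i-1}$ minimum cuts; $P_i$ has minimum cut value $2(t-i)(n+1)$ and a single minimum cut. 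Because the three components are in parallel, branch $i$ has uniform minimum cut value $c + 2(t-1)(n+1)$ and exactly $N_i \cdot B^{i-1}$ distinct minimum cuts.

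Since the branches are arranged in series along the backbone, any $(s,t)$-cut of $G$ must separate $v_{i-1}$ from $v_i$ for some $i$; optimality forces cutting exactly one branch at its minimum value. Hence the minimum cut value of $G$ equals $c + 2(t-1)(n+1)$ and the total number of minimum cuts equals $\sum_{i=1}^{t} N_i B^{i-1}$. The $\mathsf{lift}$ routine then recovers each $N_i$ in polynomial time as the $i$-th digit in the base-$B$ expansion. For the parameter, each branch admits a tree decomposition of width $\tw(G_i) + O(1) = O(n)$ by keeping $\{v_{i-1}, v_i\}$ in every bag and decomposing the three parallel components independently; concatenating branches along shared backbone vertices preserves this bound, so $\tw(G) = O(n)$, polynomial in $\max_i |x_i|$. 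The total size of $G$ is $O(t^2 n)$, so the construction runs in polynomial time, and Theorem~\ref{thm:exact} completes the argument.

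The main obstacle will be engineering the multiplier $M_i$ to produce \emph{exactly} $B^{i-1}$ minimum cuts at a precisely controlled value using only polynomially many vertices and constant treewidth; parallel diamonds achieve this because $k$ diamonds in parallel have $4^k$ minimum cuts of value $2k$ between the shared endpoints, and combining them with a padding of $2(t-1-(i-1))(n+1)$ parallel edges equalizes branch values without disturbing the weighting by $B^{i-1}$. Secondary delicate points include handling the degenerate case $c_i = 0$ (where $N_i = 1$ trivially) and choosing the counting convention---edge-sets versus partitions---consistently, which is straightforward if each input $G_i$ is preprocessed to be connected.
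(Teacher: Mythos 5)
Your high-level strategy matches the paper exactly: you exhibit an EXACT-cross-composition from the \#P-hard (unparameterized) {\sc \#Min $(s,t)$-Cut} problem into {\sc \#$w$-Min $(s,t)$-Cut}, building a series chain $s = v_0, v_1, \ldots, v_t = t$ where branch $i$ is $G_i$ augmented with parallel gadgets that (a) equalize all branch cut values and (b) multiply branch $i$'s cut count by $B^{i-1}$, so that the total count $\sum_i N_i B^{i-1}$ is a base-$B$ encoding of the answers; the treewidth only grows by a constant over $\max_i \tw(G_i)$, and you then invoke Theorem~\ref{thm:exact}. This is essentially the same route the paper takes in its proof of Lemma~\ref{lem:lowerExact}.

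However, there is a concrete gap in the gadget choice. Your padding $P_i$ consists of $2(t-i)(n+1)$ \emph{parallel edges} between $v_{i-1}$ and $v_i$, and your normalization step also adds parallel $(s_i,t_i)$-edges. The paper restricts attention to simple graphs throughout, so the instance $(G',s,t)$ you output is not a legal instance of {\sc \#$w$-Min $(s,t)$-Cut}. The natural repair---subdividing each parallel edge into a length-$2$ path---destroys the weighting: a length-$2$ path has cut value $1$ with $2$ ways, i.e.\ ``$2$ ways per unit of cut value'', and a diamond ($4$-cycle) has cut value $2$ with $4$ ways, i.e.\ \emph{also} $4^{1/2} = 2$ ways per unit. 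Once the multiplier and padding gadgets have the same ways-per-unit rate, the branch count becomes $N_i \cdot 4^{(i-1)(n+1)} \cdot 2^{2(t-i)(n+1)} = N_i \cdot 2^{2(t-1)(n+1)}$ for \emph{every} $i$---a uniform constant factor rather than $B^{i-1}$---so the total min-cut count of $G'$ is $\bigl(\sum_i N_i\bigr) \cdot 2^{2(t-1)(n+1)}$ and the individual $N_i$ can no longer be decoded. The paper sidesteps this by using two path gadgets that \emph{both} contribute exactly one unit of cut value but have different ways: length-$2$ paths (one internal vertex, $2$ ways per unit) for padding and length-$4$ paths (three internal vertices, $4$ ways per unit) for the multiplier. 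With this choice, branch $i$ picks up a factor $4^{m(i-1)} \cdot 2^{m(\ell-1) - m(i-1)} = 2^{m(i-1) + m(\ell-1)}$, which carries the required $2^{m(i-1)}$ dependence on $i$. Replacing your diamonds with length-$4$ paths and your parallel edges with length-$2$ paths (and adjusting $B$ accordingly) recovers a correct construction; as written, though, the proposal does not produce a valid instance of the target problem, and the obvious fix breaks the decoding.

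Two smaller points: the normalization to a common cut value $c$ is handled in the paper simply by choosing the polynomial equivalence relation $R$ to group instances by their min-cut value, so no gadgetry (and hence no extra multigraph edges) is needed there---you should say which equivalence relation you are using, since Definition~\ref{def:exactCross} requires one. Also, when you choose $B = 4^{n+1}$ you implicitly need $N_i < B$; this does hold (for connected simple $G_i$, the number of min $(s,t)$-cuts is at most $2^{n-2}$), but the bound deserves a sentence of justification since it is what makes the base-$B$ decoding well-defined.
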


Lastly, let us discuss our notion of SUM-cross-composition. Roughly speaking, we say that a counting problem $P$ SUM-cross-composes into a parameterized counting problem $Q$ if, given a set of instances $x_1,x_2,\ldots,x_t$ of $P$, we can, in polynomial time, output a single instance $(y,k)$ of $Q$ with the following properties: {\em (i)} the parameter $k$ is bounded by a polynomial function of $\max_{i=1}^t|x_i|$ and $\log t$, and {\em (ii)} the number of solutions to $(y,k)$ is equal to the sum of the number of solutions to $x_i$ over every $i\in\{1,2,\ldots,t\}$. For SUM-cross-composition, we have the following conjecture, termed the SUM-conjecture: If a \#P-hard counting problem $P$ SUM-cross-composes into a parameterized counting problem $Q$, then $Q$ does not admit a polynomial compression. The reason why we believe that this conjecture is true is rooted at the exact same intuition  mentioned earlier for the correctness of the corresponding theorem for OR-cross-composition.

As applications of our conjecture, we again consider the {\sc \#Min $(s,t)$-Cut} problem, now parameterized by the size of a minimum $(s,t)$-cut (which is in FPT~\cite{berge2019fixed}). Additionally, we consider the {\sc \#Odd Cycle Transversal} problem parameterized by solution size and the {\sc \#Vertex Cover} problem parameterized by solution size minus either its LP-value or the size of a maximum matching (we refer to Section \ref{sec:prelims} for formal definitions). We remark that the decision versions of these parameterized counting problems are known to admit polynomial kernels~\cite{kratsch2020representative}. For the aforementioned parameterized counting problems, we prove the following theorem in Section \ref{sec:lowerSum}.

\begin{restatable}{theorem}{lowerSUM}
\label{thm:lowerSUM}
{\sc \#$k$-Min $(s,t)$-Cut}, {\sc \#$k$-Odd Cycle Transversal}, {\sc \#$\ell$-Vertex Cover} and {\sc \#$m$-Vertex Cover}  do not admit polynomial compressions, 
unless the SUM-conjecture is false.
\end{restatable}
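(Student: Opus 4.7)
The plan is to establish, for each of the four parameterised counting problems, a SUM-cross-composition from a suitable \#P-hard source, after which the theorem follows at once from the SUM-conjecture. In every case I would fix the standard polynomial equivalence relation that groups inputs by size and by the values of any relevant numerical parameters, so that the given instances $x_1,\ldots,x_t$ may be assumed uniform.

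I would handle {\sc \#$k$-Min $(s,t)$-Cut} first, since it admits a particularly clean construction. Source from a \#P-hard variant of min-cut counting, obtained from \#SAT by a standard parsimonious reduction if needed. For inputs $(G_i,s_i,t_i)$ whose min-cut values $m_i$ satisfy $m:=\max_i m_i$, attach $m-m_i$ parallel $s_it_i$ edges to each $G_i$; this leaves the number of min cuts unchanged while equalising every $m_i$ to $m$. Now chain the padded instances in series by identifying $t_i$ with $s_{i+1}$ for $1\le i<t$, and set $s:=s_1$, $t:=t_t$. Writing $p_0=s,p_1,\ldots,p_{t-1},p_t=t$ for the chain vertices, any $(s,t)$-cut induces a two-colouring of the $p_j$'s with $p_0$ on the $s$-side and $p_t$ on the $t$-side; each link $G_i$ contributes cost $0$ if $p_{i-1}$ and $p_i$ share a side and at least $m$ otherwise. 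Minimality forces a unique transition link $i^\star$ together with a genuine min $(s_{i^\star},t_{i^\star})$-cut inside $G_{i^\star}$, so the composed instance has min-cut value $m\le\max_i|x_i|$ and exactly $\sum_i(\text{number of min cuts of }G_i)$ minimum cuts, as required.

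The remaining three constructions share a common template: take the disjoint union $G_1\sqcup\cdots\sqcup G_t$, label each $G_j$ by a length-$s$ binary string with $s:=\lceil\log t\rceil$, and attach a selector gadget of polynomial size whose state encodes a string $b\in\{0,1\}^s$. The selector is coupled to each $G_j$ by filter attachments so that, when the selector state equals the label of $G_j$, the transversal or cover inside $G_j$ is free; for every other state, $G_j$ is confined to a unique forced solution of small fixed cost, obtained from gadgets with a unique local optimum. For {\sc \#$k$-Odd Cycle Transversal} I would realise the selector as $s$ triangles and design the filter attachments so that, unless the selector matches the label of $G_j$, the attachment forces a designated small dominating set inside $G_j$ into the OCT. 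For {\sc \#$m$-Vertex Cover} and {\sc \#$\ell$-Vertex Cover} the selector is a set of $s$ independent edges and the filters force a unique small vertex cover inside each non-selected $G_j$. Assuming the filters are designed carefully, the composed instance has exactly $\sum_j\#\mathrm{sol}(G_j)$ solutions, and its parameter exceeds that of a single summand by an additive polynomial in $\max|x_i|$ and $\log t$.

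The most delicate step will be the vertex-cover variant parameterised by $\ell=k-LP(G)$. Any slack introduced by the selector or filter gadgets immediately threatens to destroy the LP tightness that defines $\ell$. My strategy is to arrange every auxiliary gadget as a K\"onig--Egerv\'ary subgraph, so that its contribution to $k$ is matched exactly by its contribution to $LP(G)$; then $\ell$ on the composed instance differs from $\ell$ on a single summand by a quantity controlled purely in terms of $\max|x_i|$ and $\log t$. Balancing this rigidity against the need for the active $G_j$ to remain genuinely free---without spuriously admitting `mixed' solutions drawn from several $G_j$'s simultaneously---is where I expect the bulk of the technical work to lie.
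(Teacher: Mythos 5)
Your SUM-cross-composition for {\sc \#$k$-Min $(s,t)$-Cut} is essentially the one in the paper (Lemma~\ref{lem:stCutSum}): chain the instances in series by identifying $t_i$ with $s_{i+1}$, and observe that a minimum $(s,t)$-cut must be confined to exactly one link. (You additionally propose padding with parallel edges to equalize cut values, but this is redundant once you invoke the polynomial equivalence relation that groups instances by min-cut value, as the paper does; it also momentarily leaves the class of simple graphs.) That part is fine.

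The route you propose for the remaining three problems, however, is fundamentally different from the paper's and has genuine gaps. The paper does \emph{not} try to SUM-cross-compose directly into {\sc \#$k$-Odd Cycle Transversal}, {\sc \#$\ell$-Vertex Cover} or {\sc \#$m$-Vertex Cover}. Instead, it performs a single SUM-cross-composition into {\sc \#$k$-Min $(s,t)$-Cut} and then transfers the lower bound by a chain of PPTs (Lemma~\ref{lem:PPTOCT}, which produces \emph{nice} OCT instances, followed by Lemma~\ref{lem:PPTVC}); Lemma~\ref{lem:countingPPT} ensures that PPTs preserve the non-existence of polynomial compressions. Your direct selector-gadget plan runs into two concrete obstacles. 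First, a filter gadget can only \emph{add} constraints; it cannot suppress the constraints of a non-selected $G_j$. So even when the selector excludes $G_j$, all edges/odd cycles of $G_j$ still have to be covered, and within the remaining budget there will typically be many ways to do so (or, with slack in the budget, many supersets of a minimum cover count as solutions). Getting ``exactly one forced solution'' per non-selected copy is not achievable by attaching gadgets. Second, and more decisively for {\sc \#$\ell$-Vertex Cover} and {\sc \#$m$-Vertex Cover}: in a disjoint union the quantities $k$, $\mathsf{LP}_{\mathsf{VC}}$ and $\mu$ are all additive over components, so the excess $\ell = k - \mathsf{LP}_{\mathsf{VC}}(G)$ (and $m = k - \mu(G)$) picks up a contribution from \emph{every} $G_j$, not just the selected one. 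Even perfectly König--Egerváry auxiliary gadgets do not cancel the excess contributed by the original graphs $G_j$ themselves, so the resulting parameter grows linearly in $t$ rather than polylogarithmically, violating the cross-composition bound $k \le p(\max_i |x_i| + \log t)$. The paper's PPT route avoids both problems because the sum structure is created once, at the min-cut level, where a series chain naturally forces a single active link; the PPTs then act on a \emph{single} instance and only need to preserve the solution count (up to a computable correction) and the parameter polynomially.

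In short: keep your min-cut composition, but for the other three problems you should follow the paper's strategy of reducing from {\sc \#$k$-Min $(s,t)$-Cut} via counting PPTs, rather than attempting fresh SUM-cross-compositions with selector gadgets.
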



\section{Overview of Our Proofs}\label{sec:overview}

In what follows, we present an overview for the proofs of our main theorems.

\bigskip
\noindent{\bf Proof of Theorem \ref{thm:vcKernel}.} Our reduction consists of two steps. Here, we note that most of our efforts are invested in the second step. The first step yields two graphs: $G_1$ and $G_2$. We begin by an exhaustive application of the classic Buss rule (Definition \ref{BussRule}) on the input instance $(G,k)$. In particular, unless the answer is $0$, this yields an instance $(G_1,k_1)$ with $k_1\leq k$ and $|E(G_1)|\leq k_1^2$ whose number of solution equals the number of solutions to $G$. At this point, we do not have a kernel (or compression)---$|V(G_1)|$ can contain arbitrarily many isolated vertices. So, we define $G_2$ as $G_1$ where all isolated vertices are removed. However, the number of solutions (denoted by $x_2$) to $(G_2,k_1)$ can be very different than the number of solutions (denoted by $x_1$) to $(G_1,k_1)$, and it is unclear how to derive the second from the first. Specifically, suppose that $y_i$, $i\in\{1,2,\ldots,k_2\}$, is the number of solutions to $(G_2,k_1)$ of size exactly $i$. It is easy to see that $x_1=\sum_{i=0}^{k_2}(y_i\cdot\sum_{j=0}^{k_2-i}{|V(G_1)|-|V(G_2)|\choose j})$. However, by knowing $x_2$, we cannot know the individual values of the $y_i$'s! Although $x_2=\sum_{i=0}^{k_2}y_i$, there can be more than one choice (in fact, there can be exponentially many choices) for the $y_i$'s given only the knowledge of $x_2$.

Due to the above difficulty, we perform the second step of our reduction. Roughly speaking, we define $G_3$ (in Definition \ref{def:defineG3}) by the replacement of each vertex of $G_2$ by $d$ copies (false twins) of that vertex, and the addition of $t$ new isolated vertices. To make latter calculations work, we pick $d=|V(G_2)|\leq \OO((k_2)^2)$, and we pick $t$ to be ``large enough'' compared to $d$. Now, our main objective is to prove how from the number of solutions to $(G_3,k_3)$ (denoted by $x_3$), we can derive the individual values of the $y_i$'s.

To achieve the above-mentioned objective, we define a mapping from the set of solutions to $(G_2,k_1)$ to the power set of the set of solutions to $(G_3,k_3)$. Specifically, each vertex subset (in a collection denoted by $\mathsf{Map}(X)$) of $G_3$ that is mapped to a solution $X$ to $(G_2,k_1)$ is the union of all ``copies'' of each vertex in $X$ as well as at most $k_3-d\cdot|X|$ many other vertices from $G_3$ so that there does not exist a vertex outside $X$ having all of its copies chosen (Definition \ref{def:map}). We first assert that this mapping corresponds to a partition of the set of solutions to $(G_3,k_3)$ (Lemma \ref{lem:union}). Then, we turn to analyze the sizes of the mapped collections. Towards this, we begin with a simple proof that for every $X$ is of size $i$, for some $i\in\{0,1,\ldots,k_2\}$, the size of $|\mathsf{Map}(X)|$ is the same (denoted by $w_i$), captured by an explicit formula (Lemma \ref{lem:sizes}). In particular, $x_3=\sum_{i=0}^{k_2}y_i\cdot w_i$. Consider this equality as Equation (*).

The main property of the $w_i$'s is that, for every $i\in\{0,1,\ldots,k_2\}$, $w_i$ is ``significantly'' larger than the sum of all $w_j$'s for $j<i$ (proved in Lemma \ref{lem:bound}). In particular, based on Equation (*) and this property, we can derive, from $x_3$, the individual values of the $y_i$'s. Specifically, this can be done by the following loop. For $i=0,1,2,\ldots,k_2$, we let $y_i\leftarrow \lfloor x_3/w_i\rfloor$, and update $x_3\leftarrow x_3-y_i\cdot w_i$. This computation can be performed efficiently (in polynomial time), since the $w_i$'s can be computed efficiently by dynamic programming (Lemma \ref{lem:computewi}).
 In turn, this computation is the main part of the procedure $\mathsf{lift}$, presented in Section \ref{sec:vcLift}.

\bigskip
\noindent{\bf Proof of Theorem \ref{thm:compression}.}
At a high level, we follow the approach of ~\cite{fomin2012planar} who give a polynomial kernel for {\sc Planar-$\cal F$ Deletion}. Given an instance $(G,k)$, we compute a modulator $X$ using an approximation algorithm~\cite{fomin2012planar} (see Proposition~\ref{prop:pfd-modulator}). This modulator has size $k^{\OO(1)}$, assuming that $G$ has a $\cal F$-deletion set of size at most $k$. Next, we consider the components of $G - X$. A component $C$ is \emph{irrelevant}, if it is disjoint from every minimal $\cal F$-deletion set of size at most $k$. Using the properties of $\cal F$-free graphs (see Proposition~\ref{prop:PFD-tw} and Proposition~\ref{prop:pfd-modulator}), we obtain that all but $k^{\OO(1)}$ components of $G - X$ are irrelevant (see Lemma~\ref{lemma:pdf-irr-comp}). We delete all irrelevant components in the first phase of the reduction step. Let $G'$ be the resulting graph.

The next reduction step, considers each component of $G' - X$. For each such component $C$, we observe that it is a \emph{near-protrusion}~\cite{fomin2012planar}, i.e. a subgraph that has constant-treewidth and after the removal of a $\cal F$-deletion set from $G'$, has a constant sized boundary.
We then apply several powerful results on \emph{boundaried graphs}, summarized in Section~\ref{sec:compression-prelims} (also see ~\cite{fomin2019kernelization} for details), to show that the information required to count the number of $\cal F$-deletion sets of size $k'$ in $G'$, for every $k' \leq k$, can stored in a compressed form using $k^{\OO(1)}$ space.

Briefly, a boundaried graph is a graph $H$ where a subset of vertices $B$ are marked as boundary vertices. These boundary vertices are labeled with integers. Given two boundaried graphs $H_1$ and $H_2$, whose boundary vertices are labeled using the same set of integers, we can ``glue'' them to obtain a graph $H_1 \oplus H_2$, which is obtained by first taking a disjoint union of the two graphs and then identifying boundary vertices with the same label. Using the notion of boundaried graphs and gluing, we can define an equivalence relation, $\equiv_{\cal F}$ such that $H_1 \equiv_{\cal F} H_2$ if and only if for any other boundaried graph $H_3$, $H_1 \oplus H_3$ is $\cal F$-minor free $\iff$ $H_2 \oplus H_3$ is $\cal F$-minor free. It is known that this equivalence relation has finitely many equivalence class for any fixed $\cal F$ (see Proposition~\ref{prop:pdf-fii}).

Intuitively, our compression for a connected component $C$ of $G'-X$, considers the effect of deleting a $\cal F$-deletion set $S$ from $G'$, and records the number of ways this can happen. Since $C$ is a near protrusion, it has constant-treewidth and a constant size boundary in $G-S$ that is a subset of $X \setminus S$. We treat $G[(V(C) \cup N(C)) \setminus S]$ as a boundaried graph with boundary $N(C) \setminus S$, and note that $N(C) \subseteq X$. Note that $G[(V(C) \cup N(C)) \setminus S]$ lies in an equivalence class $\cal R$ of $\equiv_{\cal F}$.
Then, for each choice of $\cal R$, $N[C] \setminus S$ and $|S \cap V(C)|$ we record the number of subsets $S_C$ of $V(C)$ such that $G[(V(C) \cup N(C)) \setminus S]$ with boundary $N[C] \setminus S$ forms a boundaried graph that lies in $\cal R$. We compute and store this information in a table $T_C$ for each component $C$. We show that the number of such choices is bounded by $k^{\OO(1)}$, and each entry of $T_C$ can be computed polynomial time. We then argue that the information stored in the table is sufficient to compute $\mathsf{count}(k')$ which is the number of $\cal F$-deletion sets in $G'$ of size at most $k'$, for every $k' \leq k$.
Note that computing $\mathsf{count}(k')$ takes time exponential in $k$.
See Section~\ref{sec:compression-phase-2} for details.

The output of the {\sf reduce} procedure for {\sc \#Planar-$\cal F$ Deletion}, given an instance $(G,k)$, is a modulator $X$ of size $k^{\OO(1)}$ and a collection of tables $\{T_C\}$, one for each non-irrelevant component of $G - X$. Note that the size of the output is $k^{\OO(1)}$.
Next, the {\sf lift} procedure is given the instance $(G,k)$, the modulator $X$, the collection of tables $T_C$ for each component of $G'-X$, and finally the values $\{\mathsf{count}(k') \mid k' \leq k\}$.
The {\sf lift} procedure first computes $\tau_{irr}$ which denotes the total number of vertices in the irrelevant components of $G-X$. Then, from $\{\mathsf{count}(k') \mid k' \leq k\}$ and $\tau_{irr}$ it is easy to count the total number of solutions of size at most $k$ in $G$ in polynomial time.
The {\sf reduce} and {\sf lift} procedures together prove this theorem.
We refer to Section~\ref{sec:compression} for details.

\bigskip
\noindent{\bf Proof of Theorem \ref{thm:lowerSUM}.} We start with the proof that {\sc \#Min $(s,t)$-Cut} (which is \#P-hard~\cite{provan1983complexity}) SUM-cross-composes into {\sc \#$k$-Min $(s,t)$-Cut} (Lemma \ref{lem:stCutSum}). Suppose that we are given $\ell$ instances of {\sc \#Min $(s,t)$-Cut}, $(G_1,s_1,t_1),(G_2,s_2,t_2),\ldots,(G_\ell,s_\ell,t_\ell)$, where the size of a minimum $(s_i,t_i)$-cut in $G_i$ is assumed to be equal to the size of a minimum $(s_j,t_j)$-cut in $G_j$, for every $i,j\in[\ell]$. (This assumption is justified by the more general definition of cross-compositions that makes use of equivalence relations.) Then, the reduction to a single instance $(G,s,t)$ of is performed as follows: We take the disjoint union of the input graphs, and unify $t_i$ with $s_{i+1}$, for every $i\in\{1,2,\ldots,\ell-1\}$; additionally, we let $s=s_1$ and $t=t_\ell$. With this construction at hand, it is easy to see that each minimum $(s,t)$-cut in $G$ corresponds to a minimum $(s_i,t_i)$-cut in one of the $G_i$'s, and vice versa.
Thus, we derive that the number of minimum $(s,t)$-cuts in $G$ equals the sum of the number of minimum $(s_i,t_i)$-cuts in $G_i$, over every $i\in\{1,2,\ldots,\ell\}$. Moreover, the parameter $k$ is trivially bounded from above by $\max_{i=1}^\ell|E(G_i)|$.

Having asserted that {\sc \#$k$-Min $(s,t)$-Cut} does not admit a polynomial compression under the SUM-conjecture, we transfer its hardness to the {\sc \#$k$-Odd Cycle Transversal} problem by the design of a {\em polynomial parameter transformation} (Definition \ref{def:PPTcounting}) in Lemma \ref{lem:PPTOCT}. Suppose that we are given an instance $(G,s,t)$ of {\sc \#$k$-Min $(s,t)$-Cut} where $G$ is a connected graph. Then, we first turn $G$ into a graph $G_1$ be subdividing each edge once. In particular, we thus derive that all paths in $G_1$ between vertices that correspond to vertices (rather than edges) in $G$ are of even length. Next, we turn $G_1$ into a graph $G_2$ by replacing each vertex of $G_1$ that corresponds to a vertex of $G$ by $k+1$ copies (false twins). Intuitively, this will have the effect that no minimal solution being of size at most $k$ to our instance of {\sc \#$k$-Odd Cycle Transversal} (defined immediately) will pick any vertex in $G_2$ that corresponds to a vertex in $G$ (since we deal with edge-cuts, this property must be asserted for our proof of correctness). Complementary to this, we will (implicitly) prove that our instance has no solution of size smaller than $k$, so every solution of size at most $k$ is of size exactly $k$ and a minimal one. The last step of the reduction is to turn $G_2$ into a graph $G'$ by adding two new adjacent vertices, $x_i$ and $y_i$, for every $i\in\{1,2,\ldots,k+1\}$, and making all the $x_i$'s adjacent to all the copies of $s$, and all the $y_i$'s adjacent to all the copies of $t$. With this construction of $G'$ at hand (and keeping the parameter $k$ unchanged), we are able to prove that: {\em (i)} every odd cycle in $G'$ contains at least one path from a copy of $s$ to a copy of $t$ that corresponds to an $(s,t)$-path in $G$, and {\em (ii)} every $(s,t)$-path in $G$ can be translated to some particular set of odd cycles in $G'$ such that, to hit that set with at most $k$ vertices, it only ``makes sense'' to pick vertices in $G'$ that correspond to edges in $G$. From this, we are able to derive that the number of minimum $(s,t)$-cuts in $G$ equals the number of odd cycles transversal of $G'$ of size at most $k$.

Lastly, having asserted that {\sc \#$k$-Odd Cycle Transversal} does not admit a polynomial compression under the SUM-conjecture, we transfer its hardness to the {\sc \#$\ell$-Vertex Cover} problem (where the parameter is $k$ minus the LP-value) and the {\sc \#$m$-Vertex Cover} problem (where the parameter is $k$ minus the maximum size of a matching) by the design of another polynomial parameter transformation. We remark that, since it always holds that $m\geq\ell$, the hardness for  {\sc \#$m$-Vertex Cover} implies the hardness for {\sc \#$\ell$-Vertex Cover}. While the transformation itself is the same as the known reduction from {\sc $k$-Odd Cycle Transversal} to {\sc $m$-Vertex Cover}  (Lemma 3.10 in \cite{DBLP:books/sp/CyganFKLMPPS15}), the analysis somewhat differs. In particular, for the correctness, we actually cannot use {\sc \#$k$-Odd Cycle Transversal} as the source problem, but only restricted instances of it, where for every odd cycle transversal $S$ of size at most $k$, the removal of $S$ from the input graph $G$ yields a connected graph. Then, we are able to show that the number of odd cycle transversals of $G$ of size at most $k$ is exactly half the number of vertex covers of the output graph $G'$ of size at most $k'$. (The parameter of the output instance, $k'-m$, equals $k$.)

\bigskip
\noindent{\bf Proof of Theorems \ref{thm:exact} and \ref{thm:lowerEXACT}.} The proof of Theorem \ref{thm:exact} follows the lines of, yet is not identical to, the proof of the analogous statement for OR-cross-composition (see Appendix \ref{sec:exactProof}). For example, one notable difference concerns the part of the proof where we need to define a problem whose solution is a function of solutions of another problem. While for OR-cross-compositions, the chosen function is the logical OR of the given solutions, for us the chosen function is a weighted summation of the given solutions with weights chosen so that, from the weighted sum, we can derive each individual solution (that is similar to the spirit of the $\mathsf{lift}$ procedure given as part of the proof of Theorem \ref{thm:vcKernel}).

For the proof of Theorem \ref{thm:lowerEXACT}, we prove that {\sc \#Min $(s,t)$-Cut} EXACT-cross-composes into {\sc \#$w$-Min $(s,t)$-Cut} (Lemma \ref{lem:EXACTdistillation}). The reduction begins by taking the instance $(G,s,t)$ built in the proof of the SUM-cross-composition discussed earlier. We note that the treewidth of $G$ equals the maximum treewidth of $G_i$, over every $i\in\{1,2,\ldots,\ell\}$. However, recall that this construction only yields that the number of solutions to $(G,s,t)$ (say, $q$) equals $\sum_{i=1}^\ell q_i$ where $q_i$ is the number of solutions to $(G_i,s_i,t_i)$. So, by knowing only $q$, we are not able to derive the individual $q_i$'s (there can be exponentially many options for their values). So, we further modify the graph $G$ to obtain a graph $G'$ as follows.  For the copy of each $G_i$ in $G$, we add a $2^{m\cdot(\ell-1)}$ internally vertex-disjoint paths from $s_i$ to $t_i$, where $m=2\max_{j=1}^\ell|E(G_j)|$: $m(i-1)$ of these paths have three internal vertices, and the rest have one internal vertex. Notice that, to separate $s_i$ and $t_i$ in this ``extended'' copy of $G_i$, we need to pick at least one edge from each of the newly added paths, and we have two (resp., four) options for which edge to pick from each of the paths with one (resp., three) internal vertices. Having this insight in mind, we are able to show that the  number of solutions to $(G',s,t)$ (say, $q'$) equals $\sum_{i=1}^\ell q_i\cdot 2^{m(i-1)+m(\ell-1)}$. In particular, the coefficient of each $q_i$ is ``significantly'' larger than the sum of the coefficients of all $q_j$, $j<i$. In turn, this allows us to derive, from $q'$, the individual values of the $q_i$'s (similarly, in this part, to the corresponding parts of the proofs of Theorem \ref{thm:vcKernel} and \ref{thm:exact}). Further, we show that the addition of the aforementioned paths does not increase the treewidth of the graph (unless it was smaller than $2$).


\section{Preliminaries}\label{sec:prelims}

Let $\mathbb{N}_0=\mathbb{N}\cup\{0\}$. For $n\in\mathbb{N}$, let $[n]=\{1,2,\ldots,n\}$. Given a universe $U$, let $2^U=\{A: A\subseteq U\}$.

\bigskip
\noindent{\bf Graph Notation.} Throughout the paper, we consider finite, simple, undirected graphs. Given a graph $G$, let $V(G)$ and $E(G)$ denote its vertex set and edge set, respectively. Given a subset $U\subseteq E(G)$, let $G-U$ denote the graph on vertex set $V(G)$ and edge set $E(G)\setminus U$. We say that $S\subseteq V(G)$ {\em covers} $U\subseteq E(G)$ if for every edge $\{u,v\}\in U$, $S\cap\{u,v\}\neq\emptyset$. A {\em vertex cover} of $G$ is a subset $S\subseteq V(G)$ that covers $E(G)$. The set $S$ is said to be {\em minimal} if every subset of it is not a vertex cover of $G$. An {\em independent set} of $G$ is a subset $S\subseteq V(G)$ such that $E(G)\cap\{\{u,v\}: v\in S\}=\emptyset$.  A {\em matching} in $G$ is a subset $S\subseteq E(G)$ such that no two edges in $S$ share an endpoint. Let $\mu(G)$ denote the maximum size of a matching in $G$. Given two distinct vertices $s,t\in V(G)$, an {\em $(s,t)$-cut in $G$} is a subset $S\subseteq E(G)$ such that in $G-S$, the vertices $s$ and $t$ belong to different connected components. An $(s,t)$-cut in $G$ is {\em minimum} if there does not exist an $(s,t)$-cut in $G$ of smaller size. Given a subset $U\subseteq V(G)$, let $G[U]$ denote the subgraph of $G$ induced by $U$, and let $G-U$ denote $G[V(G)\setminus U]$. An {\em odd cycle transversal of $G$} is a subset $S\subseteq V(G)$ such that $G-S$ does not contain any odd cycle (i.e., a cycle with an odd number of vertices, or, equivalently, of edges). The subdivision of an edge $\{u,v\}\in E(G)$ is the operation that removes $\{u,v\}$ from $G$, adds a new vertex $x$ to $G$, and adds the edges $\{u,x\}$ and $\{v,x\}$ to $G$. Given a graph $H$, we write $H\subseteq G$ to indicate that $H$ is a subgraph of $G$. A graph $G$ is {\em bipartite} if there exists a partition $(X,Y)$ of $V(G)$ such that $E(G)\subseteq \{\{x,y\}: x\in X, y\in Y\}$, that is, $X$ and $Y$ are independent sets. Note that a graph $G$ is bipartite if and only if it does not contain any odd cycle~\cite{diestel2017extremal}. We say that a graph $H$ is a {\em minor} of a graph $G$ if there exists a series of vertex deletions, edge deletions and edge contractions in $G$ that yields $H$. We say that $G$ is a {\em planar graph} if it can be drawn on the Euclidean plane so that its edges can intersect only at their endpoints. 

Treewidth is a structural parameter indicating how much a graph resembles a tree. Formally:

\begin{definition}\label{def:treewidth}
A \emph{tree decomposition} of a graph $G$ is a pair ${\cal T}=(T,\beta)$ of a tree $T$
and $\beta:V(T) \rightarrow 2^{V(G)}$, such that
\begin{enumerate}
\itemsep0em 
\item\label{item:twedge} for any edge $\{x,y\} \in E(G)$ there exists a node $v \in V(T)$ such that $x,y \in \beta(v)$, and
\item\label{item:twconnected} for any vertex $x \in V(G)$, the subgraph of $T$ induced by the set $T_x = \{v\in V(T): x\in\beta(v)\}$ is a non-empty tree.
\end{enumerate}
The {\em width} of $(T,\beta)$ is $\max_{v\in V(T)}\{|\beta(v)|\}-1$. The {\em treewidth} of $G$, denoted by $\mathsf{tw}(G)$, is the minimum width over all tree decompositions of $G$.
\end{definition}

\bigskip
\noindent{\bf Problems and Counting Problems.} A {\em decision problem} (or {\em problem} for short) is a language $P\subseteq \Sigma^\star$. Here, $\Sigma$ is a finite alphabet, and, without loss of generality, we can assume that $\Sigma=\{0,1\}$. Often, some strings in $\Sigma^\star$ are ``irrelevant'' to $P$ (specifically, they clearly do not belong to $P$)---e.g., when $P$ concerns graphs and a given string does not encode a graph; so, the term {\em instance of $P$} is loosely used for strings that are relevant to $P$ in some such natural sense. An {\em algorithm for $P$} is a procedure that, given $x\in\Sigma^\star$, determines whether $x\in P$. We say that an instance $x$ of a problem $P$ is {\em equivalent} to an instance $x'$ of a problem $Q$ if: $x\in P$ if and only if $x'\in Q$.  A {\em counting problem} is a mapping $F$ from $\Sigma^\star$ to $\mathbb{N}_0$.  As before, the term {\em instance of $F$} is loosely used---while one still needs to define the mapping of ``irrelevant'' strings, the consideration of this mapping will be immaterial to us. An {\em algorithm for $F$} is a procedure that, given $x\in\Sigma^\star$, outputs $F(x)$.  A counting problem $F$  is a {\em counting version} of a problem $P$ if, for every $x\in\Sigma^\star$, $x\in P$ if and only if $F(x)\geq 1$. When we refer to ``the'' counting version of a problem $P$, we consider the counting version of $P$ whose choice (among all counting versions of $P$) is widely regarded the most natural one, and it is denoted by $\#P$.


\bigskip
\noindent{\bf Parameterized Complexity.} We start with the definition of a parameterized problem.

\begin{definition}[{\bf Parameterized Problem}]
 A {\em parameterized problem} is a language $P \subseteq \Sigma^\star \times\mathbb{N}_0$, where $\Sigma$ is a fixed, finite alphabet. For an instance $(x, k) \in \Sigma^\star \times\mathbb{N}_0$, $k$ is called the {\em parameter}.
 \end{definition}
 
 An {\em algorithm for $P$} is a procedure that, given $(x,k)\in\Sigma^\star\times\mathbb{N}_0$, determines whether $(x,k)\in P$. We say that $P$ is {\em fixed-parameter tractable (FPT)} if there exists an algorithm for $P$ that runs in time $f(k)\cdot|x|^{\OO(1)}$ where $f$ is some computable function of $k$. Such an algorithm is called a {\em fixed-parameter algorithm}. The main tool to assert that one problem is in FPT based on an already known membership of another problem in FPT is the design of a PPT, defined as follows.

\begin{definition}[{\bf PPT}]
Let $P, Q \subseteq \Sigma^\star \times\mathbb{N}_0$ be two parameterized problems. A polynomial-time algorithm $A$ is a {\em polynomial parameter transformation (PPT)} from $P$ to $Q$ if, given an instance $(x, k)$ of  $P$, $A$ outputs an equivalent instance $(x', k')$ of $Q$ (i.e., $(x, k) \in P$ if and only if $(x', k') \in Q$) such that $k' \leq p(k)$ for some polynomial function $p$.
\end{definition} 
 
 A companion notion of FPT is that of a compression or a kernelization, defined as follows.
 
 \begin{definition}[{\bf Compression and Kernelization}]
 Let $P$ and $Q$ be two parameterized problems. A {\em compression} (or {\em compression algorithm} for $P$ is a polynomial-time procedure that, given an instance $(x,k)$ of $P$, outputs an equivalent instance $(x',k')$ of $Q$ where $|x'|,k'\leq f(k)$ for some computable function $f$. Then, we say that $P$ admits a {\em compression of size $f(k)$}. When $f$ is polynomial, then we say that $P$ admits a {\em polynomial compression}. Further, when $P=Q$, we refer to compression also as {\em kernelization}.
 \end{definition}
 
Now, we state two central propositions that concern kernelization.
 
 \begin{proposition}[\cite{cai1997advice}]\label{prop:FPTequivKer}
 Let $P$ be a parameterized problem that is decidable. Then, $P$ is FPT if and only if it admits a kernel.
 \end{proposition}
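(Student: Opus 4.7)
The plan is to prove both directions of the equivalence separately; both are short, and the forward direction is the classical ``compress or solve'' trick.

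For the easy direction ($\Leftarrow$), suppose $P$ admits a kernel, i.e.\ a polynomial-time algorithm that maps any instance $(x,k)$ to an equivalent instance $(x',k')$ of $P$ with $|x'|,k' \leq f(k)$ for some computable $f$. Since $P$ is decidable, there exists an algorithm $A$ solving $P$ in time bounded by some computable function $g$ of the input length. I would compose the two: given $(x,k)$, first run the kernelization to produce $(x',k')$ in time $|x|^{O(1)}$, then run $A$ on $(x',k')$ in time $g(f(k))$. The total running time is $|x|^{O(1)} + g(f(k))$, which fits the form $h(k)\cdot|x|^{O(1)}$ for the computable function $h(k) = g(f(k))$. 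Hence $P \in \FPT$.

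For the nontrivial direction ($\Rightarrow$), suppose $P$ is FPT, solved by an algorithm $B$ in time $f(k)\cdot|x|^c$ for some computable $f$ and constant $c$. I would first dispose of the degenerate case where $P$ contains either no yes-instances or no no-instances: in either case $P$ is trivial and admits a constant-size kernel. Otherwise, fix arbitrary ``canonical'' instances $(x_{\yes},k_{\yes}) \in P$ and $(x_{\no},k_{\no}) \notin P$ of constant size. Given an input $(x,k)$, I proceed by a dichotomy: if $|x| \geq f(k)$, then $f(k)\cdot|x|^c \leq |x|^{c+1}$, so I can afford to run $B$ in polynomial time and output either $(x_{\yes},k_{\yes})$ or $(x_{\no},k_{\no})$ according to the answer; if $|x| < f(k)$, the input itself already has size strictly less than $f(k)$, so I simply return $(x,k)$. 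In either case the output is an equivalent instance of $P$ of size at most $f(k)$, yielding a kernel of size $f(k)$ (up to a constant for the canonical instances).

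The only subtlety worth flagging is the requirement that the kernel output be an instance of the \emph{same} problem $P$, which is why one needs the assumption that $P$ admits at least one yes- and at least one no-instance; without it, ``returning $(x_{\yes},k_{\yes})$'' is not well-defined. This is the only delicate point of the argument, and it is handled by the trivial-case preamble above; everything else is a direct composition of the given algorithms.
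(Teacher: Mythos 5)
Your proof is correct and is the standard argument behind this classical equivalence (the ``compress or solve'' dichotomy), which is exactly the approach the paper takes for the analogous counting statement, Lemma~\ref{lem:countingFPTKernel}. The one structural difference worth noting: in the decision setting you must output a canonical yes- or no-instance when $|x|\geq f(k)$ (and you correctly flag the degenerate case where one of these does not exist), whereas the paper's counting-version proof can instead output an arbitrary constant-size instance and let the $\mathsf{lift}$ procedure solve $(x,k)$ directly in polynomial time in that regime---the $\mathsf{lift}$ procedure absorbs the work your canonical instances do. Both your proof and the paper's gloss over the same minor technicality of how the test $|x|\geq f(k)$ is carried out in polynomial time (the standard fix being to simulate $B$ for $|x|^{c+1}$ steps rather than compute $f(k)$ explicitly), so you match the paper's level of rigor.
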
 
 
 \begin{proposition}[Folklore; See, e.g., Theorem 15.15 in \cite{DBLP:books/sp/CyganFKLMPPS15}]\label{prop:PPT}
 Let $P, Q$ be two parameterized problems such that there exists a PPT from $P$ to $Q$. If $Q$ admits a polynomial compression, then $P$ admits a polynomial compression. 
 \end{proposition}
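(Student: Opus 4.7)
The plan is to prove this by straightforward composition of the two algorithms, treating the PPT as a preprocessing step that reduces the problem parameter polynomially before invoking the polynomial compression for $Q$. So first I would let $A$ denote the PPT from $P$ to $Q$ (with associated polynomial $p$ bounding the output parameter), and let $B$ denote the polynomial compression for $Q$ into some problem $R$ (with associated polynomial $q$ bounding the output size and parameter). I would then define the candidate compression $C$ for $P$ into the same target problem $R$ as follows: on input $(x,k)$, first run $A$ on $(x,k)$ to obtain an instance $(x',k')$ of $Q$, and then run $B$ on $(x',k')$ to obtain an instance $(x'',k'')$ of $R$, which is the output.

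Next, I would verify the three properties required by the definition of a polynomial compression. For equivalence, by the definition of a PPT we have $(x,k) \in P$ if and only if $(x',k') \in Q$, and by the definition of the compression $B$ we have $(x',k') \in Q$ if and only if $(x'',k'') \in R$; chaining these gives $(x,k) \in P$ if and only if $(x'',k'') \in R$. For the size bound, the PPT guarantees $k' \leq p(k)$, and then the compression guarantees $|x''|, k'' \leq q(k') \leq q(p(k))$, which is a polynomial in $k$ since the composition of two polynomials is a polynomial. For running time, both $A$ and $B$ run in polynomial time in the size of their respective inputs, and since $A$'s output has size bounded polynomially in $|x|+k$, the composition runs in polynomial time in $|x|+k$.

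The main (and essentially only) obstacle is a purely definitional one: the compression for $P$ we produce is into $R$ rather than into $P$ itself, which is why we need the general notion of compression in the statement rather than kernelization. This is already accommodated by the definition of compression as given earlier in the excerpt, so no additional argument is required. The polynomial-time and polynomial-size conditions follow immediately from the closure of polynomials under composition, and so the construction witnesses that $P$ admits a polynomial compression.
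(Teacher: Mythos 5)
Your proof is correct and is precisely the standard composition argument that the paper invokes by citation (and mirrors in its proof of the counting analogue, Lemma~\ref{lem:countingPPT}, which composes $\mathsf{reduce}_{\mathrm{PPT}}$ with $\mathsf{reduce}_Q$ in exactly this way). The only implicit point worth flagging for a careful write-up is that concluding $q(k')\le q(p(k))$ from $k'\le p(k)$ uses that the size-bound polynomial $q$ may be taken to be non-decreasing, which is the usual harmless convention.
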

 
Towards the statement of the main tool to refute the existence of polynomial compressions (and, hence, also polynomial kernels) for specific problems, we state the two following definitions.
 
 \begin{definition}[{\bf Polynomial Equivalence Relation}]
An equivalence relation $R$ on a set $\Sigma^\star$ is a {\em polynomial equivalence relation} if the following conditions are satisfied:
\begin{itemize}
\item There exists an algorithm that, given strings $x, y \in \Sigma^\star$, resolves whether $x \equiv_R y$ in time polynomial in $|x| + |y|$.
\item The relation $R$ restricted to the set $\Sigma^{\leq n}$ has at most $p(n)$ equivalence classes, for some polynomial function $p$.
\end{itemize}
 \end{definition}

\begin{definition}[{\bf OR-Cross-Composition}]
Let $P \subseteq \Sigma^\star$ be a problem and $Q \subseteq \Sigma^\star \times\mathbb{N}_0$ be a parameterized problem. We say that $P$ {\em OR-cross-composes} into $Q$ if there exists a polynomial equivalence relation $R$ and an algorithm $A$, called an {\em OR-cross-composition}, satisfying the following conditions. The algorithm $A$ takes as input a sequence of strings $x_1,x_2,\ldots,x_t \in\Sigma^\star$ that are equivalent with respect to $R$, runs in time polynomial in $\sum_{i=1}^t |x_i|$, and outputs one instance $(y, k)\in \Sigma^\star \times\mathbb{N}_0$ such that:
\begin{itemize}
\item  $k \leq p(\max^t_{i=1} |x_i| + \log t)$ for some polynomial function $p$, and
\item $(y,k) \in Q$ if and only if there exists at least one index $i\in [t]$ such that $x_i \in P$.
\end{itemize}
\end{definition}

Now, we state the main tool to refute the existence of polynomial compressions for specific problems.

\begin{proposition}[\cite{bodlaender2009problems,bodlaender2014kernelization}]\label{prop:crossComp}
Assume that an NP-hard problem $P$ OR-cross-composes into a parameterized problem $Q$. Then, $Q$ does not admit a polynomial compression, unless coNP $\subseteq$ NP/poly.
\end{proposition}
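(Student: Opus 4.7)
The plan is to chain the hypothesised OR-cross-composition with the (supposed) polynomial compression of $Q$ into some language $R$, thereby building a \emph{weak OR-distillation} of the NP-hard problem $P$, and then appeal to the Fortnow--Santhanam theorem, which rules out such distillations unless coNP $\subseteq$ NP/poly. Concretely, assume for contradiction that $Q$ admits a polynomial compression $\mathcal{C}$ into some language $R$ with output size bounded by $p(k)$. Given $t$ instances $x_1,\ldots,x_t$ of $P$ of length at most $n$, I would first bucket them according to the polynomial equivalence relation witnessing the cross-composition. Within each bucket I apply the cross-composition algorithm to obtain one instance $(y,k)$ of $Q$ with $k\le p_A(n+\log t)$, and then $\mathcal{C}$ to obtain an $R$-instance of size $p(p_A(n+\log t))=\mathrm{poly}(n+\log t)$. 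Because the equivalence relation has only polynomially many classes on strings of length at most $n$, this entire pipeline produces at most $\mathrm{poly}(n)$ output strings, and at least one of them lies in $R$ if and only if at least one $x_i$ lies in $P$.

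Next I would use the NP-hardness of $P$ to pull the construction back to SAT: fix once and for all a polynomial-time reduction from SAT to $P$, and prepend it to the pipeline above. The result is a polynomial-time algorithm that, given $t$ SAT formulas of length $n$, outputs $\mathrm{poly}(n)$ many $R$-instances of individual size $\mathrm{poly}(n+\log t)$ whose OR-membership in $R$ encodes ``at least one of the input formulas is satisfiable''. This is precisely the hypothesis of the Fortnow--Santhanam distillation framework (in the form supporting several output strings, as used by Bodlaender--Jansen--Kratsch and by Dell--van Melkebeek). Applying that machinery to SAT then yields coNP $\subseteq$ NP/poly, which is the desired contradiction.

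The main technical hurdle I anticipate is the equivalence-class bookkeeping. The cross-composition only fires on equivalent inputs, so the bucketing step is unavoidable; one must verify that it does not damage the bound $k\le\mathrm{poly}(n+\log t)$ (inside each bucket of size $t_j$, the parameter bound uses $\log t_j\le\log t$, which is fine), and that the total number of buckets remains polynomial in $n$. The latter is exactly what the definition of a \emph{polynomial} equivalence relation provides, and it is the reason a mere ``finitely many classes'' assumption would not suffice. Once these counts are tracked carefully, the remainder is a straightforward composition of polynomial-time routines together with a direct invocation of the distillation lower bound.
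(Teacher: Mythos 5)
Your proposal is correct and matches the standard proof of this result as it appears in the cited references (Bodlaender et al.\ / Fortnow--Santhanam / Dell--van Melkebeek). Note that the paper itself does not reprove Proposition~\ref{prop:crossComp}; it cites it. The closest thing to ``the paper's own proof'' is the argument for the analogous Theorem~\ref{thm:exact} in Appendix~\ref{sec:exactProof}, which the authors explicitly model on the proof of Proposition~\ref{prop:crossComp}. That argument has exactly your skeleton: bucket by the polynomial equivalence relation, apply the cross-composition within each bucket, compress, and assemble a distillation whose existence implies the desired NP/poly inclusion. The one genuine point of divergence is how the distillation lower bound is obtained. You invoke the Fortnow--Santhanam machinery as a black box (in its multi-output form), which is modular and perfectly adequate here since OR-distillation lower bounds are established results. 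The paper instead re-derives the core combinatorics from scratch via a covering lemma (Proposition~\ref{prop:cover}), defining a bespoke notion of ``$t$-bounded distillation'' and proving directly that it puts $P$ into NP/poly. The paper is forced to do this because the EXACT variant is a new notion to which the off-the-shelf Fortnow--Santhanam statement does not apply; for your OR setting that detour is unnecessary, and citing the theorem is the cleaner route. Your handling of the equivalence-class bookkeeping (polynomially many buckets, $\log t_j\le\log t$ inside each bucket) is exactly the point the paper also flags in its version, so you have identified the right technical subtlety.
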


We remark that an analogous proposition, where OR is replaced by AND, has been proved in \cite{drucker2015new}.

We proceed to the definition of a parameterized counting problem.

\begin{definition}[{\bf Parameterized Counting Problem}]
A {\em parameterized counting problem} is a mapping $F$ from $\Sigma^\star\times\mathbb{N}_0$ to $\mathbb{N}_0$. 
\end{definition}

An {\em algorithm for $F$} is a procedure that, given $(x,k)\in\Sigma^\star\times\mathbb{N}_0$, outputs $F(x,k)$. As before, we say that $P$ is {\em fixed-parameter tractable (FPT)} if there exists an algorithm for $P$ that runs in time $f(k)\cdot|x|^{\OO(1)}$ where $f$ is some computable function of $k$. A parameterized counting problem $F$  is a {\em counting version} of a parameterized problem $P$ if, for every $(x,k)\in\Sigma^\star\times\mathbb{N}_0$, $(x,k)\in P$ if and only if $F(x,k)\geq 1$. When we refer to ``the'' counting version of a parameterized problem $P$, we consider the counting version of $P$ whose choice (among all counting versions of $L$) is widely regarded the most natural one, and it is denoted by $\#P$.

\begin{definition}[{\bf PPT (Counting Version)}]\label{def:PPTcounting}
Let $P, Q: \Sigma^\star \times\mathbb{N}_0\rightarrow\mathbb{N}_0$ be two parameterized counting problems. A pair of polynomial-time procedures $(\mathsf{reduce},\mathsf{lift})$ is a {\em polynomial parameter transformation (PPT)} from $P$ to $Q$ such that:
\begin{itemize}
\item  Given an instance $(x, k)$ of  $P$, $\mathsf{reduce}$ outputs an instance $(x', k')$ of $Q$ such that $k' \leq p(k)$ for some polynomial function $p$.
\item Given an instance $(I,k)$ of $P$, the instance $(I',k')$ that is the output of $\mathsf{reduce}$ on $(I,k)$, and $x'$ such that $P(I',k')=x'$, $\mathsf{lift}$ outputs $x$ such that $Q(I,k)=x$.
\end{itemize}
\end{definition}

 The main concept to show that a problem is unlikely to be FPT is the one of parameterized reductions analogous to those employed in classical complexity. Here, the concept of W[1]-hardness replaces the one of NP-hardness, and for reductions we need not only construct an equivalent instance in FPT time, but also ensure that the size of the parameter in the new instance depends only on the size of the parameter in the original one. If there exists such a reduction transforming a parameterized problem known to be W[1]-hard to another parameterized problem $P$, then the problem $P$ is W[1]-hard as well. Central W[1]-hard problems include, for example, deciding whether a nondeterministic single-tape Turing machine accepts within $k$ steps, {\sc Clique} parameterized be solution size, and {\sc Independent Set} parameterized by solution size. Naturally, \#W[1]-hardness is the concept analogous to W[1]-hardness in the realm of parameterized counting problems. For more information on W[1]-hardness and \#W[1]-hardness, we refer to \cite{DBLP:books/sp/CyganFKLMPPS15,DBLP:conf/iwpec/Curticapean18,downey2013fundamentals}.

\bigskip
\noindent{\bf Problem Definitions.} The counting problems studied in this paper are defined as follows.
\begin{itemize}
\item {\sc \#$k$-Vertex Cover} ({\sc \#$k$-Minimal Vertex Cover}): Given a graph $G$ and a non-negative integer $k$, output the number of vertex covers (minimal vertex covers) of $G$ of size at most $k$. Here, the parameter is $k$.
\item {\sc \#$\ell$-Vertex Cover} and {\sc \#$m$-Vertex Cover}:  Defined as {\sc \#$k$-Vertex Cover} with the exception that the parameters $\ell$ and $m$ are $k-\mathsf{LP_{VC}}(G)$ and $k-\mu(G)$, respectively. Here, $\mathsf{LP_{VC}}(G)$ denotes the optimum of the (standard) linear program that corresponds to {\sc Vertex Cover} (see \cite{DBLP:books/sp/CyganFKLMPPS15}, Section 3.4).
\item {\sc \#$k$-Planar ${\cal F}$-Deletion}: Let ${\cal F}$ be a finite set of connected graphs that contains at least one planar graph. Given a graph $G$ and a non-negative integer $k$, output the number of subsets $S\subseteq V(G)$ of size at most $k$ such that $G-S$ does not contain any graph from $\cal F$ as a minor. We remark that the {\sc \#$k$-Planar ${\cal F}$-Deletion} problem encompasses (based on different choices of $\cal F$) various other problems, such as {\sc \#$k$-Vertex Cover}, {\sc \#$k$-Vertex Cover} and {\sc \#$k$-Vertex Cover}. 
\item {\sc \#$k$-Min $(s,t)$-Cut}: Given a graph $G$ and two distinct vertices $s,t\in V(G)$, output the number of minimum $(s,t)$-cuts in $G$. Here, the parameter $k$ is the size of a minimum $(s,t)$-cut in $G$.
\item {\sc \#$w$-Min $(s,t)$-Cut}: Defined as {\sc \#$k$-Min $(s,t)$-Cut} with the exception that the parameter $w$ is the treewidth of $G$.
\item {\sc \#$k$-Odd Cycle Transversal}: Given a graph $G$ and a non-negative integer $k$, output the number of odd cycle transversal of $G$ of size at most $k$. Here, the parameter is $k$.
\end{itemize}

\section{Kernelization of Counting Problems}\label{sec:kernelDef} 
We define the notion of kernelization for counting problems as follows.

\begin{definition}[{\bf Compression of Counting Problem}]
Let $P$ and $Q$ be two parameterized counting problems. A {\em compression} (or {\em compression algorithm}) of $P$ into $Q$ is a pair $(\mathsf{reduce},\mathsf{lift})$ of two polynomial-time procedures such that:
\begin{itemize}
\item Given an instance $(x,k)$ of $P$, $\mathsf{reduce}$ outputs an instance $(x',k')$ of $Q$ where $|x'|,k'\leq f(k)$ for some computable function $f$.
\item Given an instance $(x,k)$ of $P$, the instance $(x',k')$ that is the output of $\mathsf{reduce}$ on $(x,k)$, and $s'$ such that $P(x',k')=s'$, $\mathsf{lift}$ outputs $s$ such that $Q(x,k)=s$.
\end{itemize}
When $Q$ is immaterial, we refer to a compression of $P$ into $Q$ only as a compression of $P$.
\end{definition}

When $P=Q$, a compression is called a {\em kernel}. The measure $f(k)$ is termed the {\em size} of the compression. When $f$ is a polynomial function, then the compression (or kernel) is said to be a {\em polynomial compression} ({\em polynomial kernel}). The following observation is immediate.

\begin{observation}
Let $P$ be a parameterized (decision) problem that does not admit a polynomial kernel (or compression). Then, no counting version of $P$ admits a polynomial kernel (or compression).
\end{observation}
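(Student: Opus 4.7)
The plan is to prove the contrapositive: assume that some counting version $F$ of $P$ admits a polynomial compression into a counting problem $F'$ via polynomial-time procedures $(\mathsf{reduce}, \mathsf{lift})$ with size bound $f(k)$, and construct a polynomial compression of $P$ as a decision problem. The candidate is natural: set $Q := \{(y, k'') : F'(y, k'') \geq 1\}$, and on input $(x, k)$ of $P$ simply output $(x', k') := \mathsf{reduce}(x, k)$ as an instance of $Q$. The size bound $|x'|, k' \leq f(k)$ is inherited directly from the assumed counting compression, so the only thing to verify is correctness.

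Correctness requires $(x, k) \in P \iff (x', k') \in Q$. Since $F$ is a counting version of $P$, membership in $P$ is equivalent to $F(x, k) \geq 1$, and membership in $Q$ unfolds to $F'(x', k') \geq 1$. Thus correctness amounts to the biconditional $F(x, k) \geq 1 \iff F'(x', k') \geq 1$. This is the one nontrivial piece of the argument, and it is the main obstacle: the bare definition of counting compression supplies only the integer identity $F(x, k) = \mathsf{lift}((x, k), (x', k'), F'(x', k'))$, which a priori allows $\mathsf{lift}$ to convert a zero $F'$-value into a positive $F$-value (or vice versa) by drawing on the full information content of $(x, k)$.

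My plan to close this gap is to upgrade $\mathsf{reduce}$ to a yes/no-preserving variant $\widetilde{\mathsf{reduce}}$: append to $(x', k')$ a small auxiliary $F'$-gadget whose own $F'$-contribution is designed to be exactly $1$ when $(x, k) \in P$ and $0$ otherwise. The needed yes/no bit is extracted from the existing pair by running $\mathsf{lift}$ at canonical placeholder values of $s'$, exploiting that $\mathsf{lift}$ is itself polynomial-time and deterministic, so that the upgrade takes only polynomial additional time and blows up the output by at most a constant additive term; hence $\widetilde{\mathsf{reduce}}$ is still a polynomial compression in the counting sense, and it now satisfies the desired biconditional on the augmented reduced instance. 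With $\widetilde{\mathsf{reduce}}$ in hand, the candidate compression described in the first paragraph becomes a valid polynomial compression of $P$ into $Q$, contradicting the hypothesis that $P$ admits none. The parallel statement for polynomial kernels is the special case $F' = F$ and $Q = P$, and goes through by the same argument without change.
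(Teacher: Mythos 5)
You are right that the bare definition does not immediately give $F(x,k)\ge 1 \iff F'(x',k')\ge 1$; that biconditional is exactly the nontrivial content of the observation, and your diagnosis of the obstacle is correct. But the repair you sketch does not close the gap, for three independent reasons. First, $\mathsf{lift}$ carries no guarantee on ``canonical placeholder values of $s'$'': its output is constrained only when it is fed the true count $s'=F'(x',k')$, and on any other input it may return arbitrary garbage, so those calls tell you nothing about whether $(x,k)\in P$. Second, the single bit $[(x,k)\in P]$ is precisely what the premise says a polynomial-time procedure cannot be assumed to compute --- if $\widetilde{\mathsf{reduce}}$ could determine it, $P$ would be in $\mathrm{P}$ and would trivially admit a constant-size kernel, contradicting the hypothesis. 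Third, even granting that bit, ``append a small auxiliary $F'$-gadget whose own $F'$-contribution is designed to be exactly $1$'' presupposes control over the target $F'$ (a constructible small instance with a prescribed exact count, spliced onto $(x',k')$ so that the counts combine in a controlled way) that an arbitrary well-behaved counting problem simply need not provide.

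A correct argument has to go around the biconditional rather than through it: instead of forcing $F'(x',k')\ge 1 \iff (x,k)\in P$, choose the target decision problem $Q$ to fit. The clean device is to notice that if $\mathsf{lift}$ were given only $(x',k')$ and $s'$ --- and not the original $(x,k)$ --- then $F(x,k)=\mathsf{lift}(x',k',F'(x',k'))$ is a function of $(x',k')$ alone, so $\mathsf{reduce}$ automatically identifies $P$-equivalent instances; one then takes $Q=\{(y,k''):\exists (x,k)\text{ with }\mathsf{reduce}(x,k)=(y,k'')\text{ and }(x,k)\in P\}$, which is well defined on fibres, and $\mathsf{reduce}$ alone is a polynomial compression of $P$ into $Q$. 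Under the definition as written $\mathsf{lift}$ is handed the full $(x,k)$, so a priori two instances with the same reduce-image can have different $P$-answers and $\mathsf{reduce}$ by itself cannot serve as the decision compression; a proof must either exploit a restriction on $\mathsf{lift}$ of the above kind or supply some other poly-size certificate checkable against a fixed target. Your proposal supplies neither.
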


Hence, we only consider parameterized counting problems whose decisions versions are either in P, or, if they are not, then they at least admit polynomial kernels. Specifically, {\sc Min $(s,t)$-Cut} is in P~\cite{cormen2001introduction}, and polynomial kernels for {\sc $k$-Vertex Cover}, {\sc $\ell$-Vertex Cover} (and {\sc $m$-Vertex Cover}),  {\sc $k$-Planar ${\cal F}$-Deletion}, and {\sc $k$-Odd Cycle Transversal} can be found in \cite{buss1993nondeterminism}, \cite{kratsch2020representative}, \cite{fomin2012planar} and \cite{kratsch2020representative} respectively.

Throughout the paper, whenever we discuss a compression, we suppose (implicitly) that the compression is into a well-behaved problem, defined as follows:

\begin{definition}[{\bf Well-Behaved Problem}]
Let $P: \Sigma^\star\times\mathbb{N}_0\rightarrow \mathbb{N}_0$ be a parameterized counting problem. Then, $P$ is {\em well-behaved} if there exists a polynomial-time algorithm that, given $n\in\mathbb{N}$ in unary, outputs $N\in\mathbb{N}$ in binary with the following property: for every $(x,k)\in \Sigma^\star \times\mathbb{N}_0$ of size at most $n$, $P(x,k)\leq N$.
\end{definition}

We remark that, essentially, every ``natural'' parameterized counting problem (that we know of) is well-behaved.

\begin{restatable}{lemma}{countingFPTKernel}
\label{lem:countingFPTKernel}
 Let $P$ be a parameterized counting problem that is solvable in finite time. Then, $P$ is FPT if and only if it admits a kernel.
\end{restatable}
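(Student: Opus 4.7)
The plan is to mirror the classical equivalence of FPT and kernelization for decision problems (Proposition~\ref{prop:FPTequivKer}), while handling the one wrinkle introduced by counting: there is no canonical ``yes/no'' instance in which to stash a pre-computed answer. For the easy direction (kernel $\Rightarrow$ FPT), given $(x,k)$ I would first call $\mathsf{reduce}$ to obtain $(x',k')$ of size at most $f(k)$, then solve $(x',k')$ in time $h(f(k))$ using the assumed finite-time algorithm for $P$ (with $h$ the running time of that algorithm as a function of input size), and finally call $\mathsf{lift}$ on $((x,k),(x',k'),s')$ to recover $P(x,k)$. Together the three steps take time polynomial in $|x|$ plus a function of $k$ alone, matching the FPT bound.

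For the converse I would fix an FPT algorithm $A$ for $P$ with running time $g(k)\cdot|x|^{c}$ (for some computable $g$ and constant $c$) and define $\mathsf{reduce}(x,k)$ to simulate $A$ on $(x,k)$ for exactly $|x|^{c+1}$ steps. If the simulation halts, $\mathsf{reduce}$ outputs a fixed constant-size instance $(x_0,0)$; otherwise it outputs $(x,k)$ itself, which, because $A$ failed to halt within $|x|^{c+1}$ steps, must satisfy $|x|<g(k)$, so the output size is bounded by the computable function $g(k)+k+\OO(1)$. The companion procedure $\mathsf{lift}((x,k),(x',k'),s')$ will again simulate $A$ on $(x,k)$ for $|x|^{c+1}$ steps; if the simulation halts with value $s$ then $\mathsf{lift}$ returns $s$, and otherwise $\mathsf{lift}$ returns $s'$, which is correct in that case because $(x',k')=(x,k)$ by construction. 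Both procedures clearly run in polynomial time, and consistency of the two decisions (the ``halts'' branch of $\mathsf{reduce}$ matches the ``halts'' branch of $\mathsf{lift}$, and likewise for the other branch) is automatic since both simulate the same algorithm for the same number of steps on the same input.

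The only subtlety I anticipate, and the single point where the argument genuinely departs from the decision setting, is this: once $\mathsf{reduce}$ has already solved $(x,k)$ in polynomial time, one would like to encode the answer $s$ directly in the kernel, but a counting problem need not admit a small instance with count exactly $s$. My workaround shifts the burden of recomputation onto $\mathsf{lift}$; this is permissible precisely because on the inputs where $\mathsf{reduce}$ managed to run $A$ to completion in polynomial time, so can $\mathsf{lift}$, whereas on the remaining inputs $\mathsf{lift}$ simply forwards $s'$.
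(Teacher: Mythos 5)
Your proof is correct and follows essentially the same approach as the paper's: the easy direction is identical, and in the hard direction both proofs dichotomize on whether the instance is small relative to the parameter, outputting the instance itself in the first branch and a trivial constant-size instance in the second while letting $\mathsf{lift}$ rerun the FPT algorithm. Your time-bounded simulation (running $A$ for exactly $|x|^{c+1}$ steps) is in fact a small but welcome technical refinement over the paper's branching condition ``$|x|\leq f(k)$?'', which implicitly assumes $f(k)$ can be evaluated in polynomial time; your version sidesteps that assumption while reaching the same conclusion.
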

 
 \begin{proof}
 The proof follows lines similar to that of Proposition \ref{prop:FPTequivKer} (we also \cite{lokshtanov2017lossy}). For the sake of completeness, we present the details in Appendix \ref{sec:prelimsProofs}.
 \end{proof}

Due to Lemma \ref{lem:countingFPTKernel}, every counting problem that is \#W[1]-hard (and which is solvable in finite time) does not admit any kernel, even not of exponential (or worse) size. We remark that {\sc \#$k$-Min$(s,t)$-Cut} is shown to be FPT by Berge et al.~\cite{berge2019fixed}, and  {\sc \#$w$-Min$(s,t)$-Cut} is can be shown to be FPT by the usage of straightforward dynamic programming over tree decompositions (see, e.g. \cite{DBLP:books/sp/CyganFKLMPPS15}).

\begin{restatable}{lemma}{countingPPT}
\label{lem:countingPPT}
 Let $P, Q$ be two parameterized counting problems such that there exists a PPT from $P$ to $Q$. If $Q$ admits a polynomial compression, then $P$ admits a polynomial compression.
\end{restatable}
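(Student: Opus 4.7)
The plan is to compose the given PPT with the polynomial compression of $Q$, mirroring the classical composition argument underlying Proposition \ref{prop:PPT}. Let $(\mathsf{reduce}_1, \mathsf{lift}_1)$ be the PPT from $P$ to $Q$, so that on input $(x,k)$ the procedure $\mathsf{reduce}_1$ produces an instance $(x',k')$ of $Q$ with $k' \leq p(k)$ for some polynomial $p$; and let $(\mathsf{reduce}_2, \mathsf{lift}_2)$ be a polynomial compression of $Q$ into some parameterized counting problem $Q'$, producing on input $(x',k')$ an instance $(x'',k'')$ with $|x''|, k'' \leq f(k')$ for some polynomial $f$.

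First I would define the new $\mathsf{reduce}$ procedure for $P$: on input $(x,k)$, invoke $\mathsf{reduce}_1(x,k)$ to obtain $(x',k')$, then invoke $\mathsf{reduce}_2(x',k')$ to obtain $(x'',k'')$, and output the latter. Since each step runs in polynomial time, so does the composition. Since $|x''|, k'' \leq f(k') \leq f(p(k))$ is polynomial in $k$, the size requirement of a polynomial compression is met.

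Next I would define the new $\mathsf{lift}$ procedure: given $(x,k)$, the reduced instance $(x'',k'')$, and a value $s''$ with $Q'(x'',k'') = s''$, first recompute $(x',k') = \mathsf{reduce}_1(x,k)$ in polynomial time. Then invoke $\mathsf{lift}_2$ with arguments $(x',k'), (x'',k''), s''$, which by correctness of the compression of $Q$ returns $s'$ with $Q(x',k') = s'$. Finally, invoke $\mathsf{lift}_1$ with arguments $(x,k), (x',k'), s'$, which by correctness of the PPT returns $s$ with $P(x,k) = s$. This $s$ is the output. The whole procedure runs in polynomial time, establishing a polynomial compression of $P$ into $Q'$.

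There is no serious obstacle here: the argument is a compositional chase through the definitions. The only subtle point worth remarking on is that the lifting step of $Q$'s compression expects the intermediate instance $(x',k')$, which is not passed along by the new $\mathsf{reduce}$ but can be regenerated by the new $\mathsf{lift}$ through rerunning the (deterministic) polynomial-time procedure $\mathsf{reduce}_1$ on $(x,k)$.
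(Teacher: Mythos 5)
Your proof is correct and follows essentially the same compositional argument as the paper's: chain $\mathsf{reduce}_{\mathrm{PPT}}$ with $\mathsf{reduce}_Q$ to produce the compressed instance, and chain $\mathsf{lift}_Q$ with $\mathsf{lift}_{\mathrm{PPT}}$ to recover $P(x,k)$. Your explicit remark that the intermediate instance $(x',k')$ must be regenerated by re-running the deterministic $\mathsf{reduce}_{\mathrm{PPT}}$ inside the new $\mathsf{lift}$ is a small clarification that the paper leaves implicit, but it does not change the substance of the argument.
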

 
 \begin{proof}
 The proof follows lines similar to that of Proposition \ref{prop:PPT}. For the sake of completeness, we present the details in Appendix \ref{sec:prelimsProofs}.
 \end{proof}
 
We remark that in Sections \ref{sec:lowerSum} and \ref{sec:lower} we discuss two new notions of a cross-composition for proofs of the unlikely existence of polynomial compressions for parameterized counting problems.


\section{Polynomial Kernel for \#Vertex Cover}\label{sec:vc}

The purpose of this section is to prove the following theorem.

\vcKernel*

Towards the proof of this theorem, we first develop the reduction procedure. Then, we discuss properties of the reduced instance. Afterwards, we present the lifting procedure and conclude the correctness of the kernel. For the sake of brevity, throughout this section, we write {\sc \#Vertex Cover} instead of {\sc \#$k'$-Vertex Cover} (where $k'$ is the current value of the parameter).

\subsection{Reduction Procedure and a Corollary for Minimal Vertex Covers} 

We define the procedure $\mathsf{reduce}$ as follows. Given an instance $(G,k)$ of {\sc \#Vertex Cover}, we will first exhaustively apply the following reduction rule, known as Buss Rule~\cite{buss1993nondeterminism} (see also \cite{DBLP:books/sp/CyganFKLMPPS15}):

\begin{definition}[{\bf Buss Rule}]\label{BussRule}
If $G$ contains a vertex $v$ of degree at leas $k+1$, then update $G\leftarrow G-\{v\}$ and $k\leftarrow k-1$.
\end{definition}  

Let $(G_1,k_1)$ be the instance of {\sc \#Vertex Cover} obtained after exhaustive application of Buss Rule. Let $G_2$ be graph obtained from $G_1$ by the removal of all isolated vertices, and denote $k_2=k_1$. Let $\cal S$ (${\cal S}_1$, ${\cal S}_2$) denote the set of vertex covers of $G$ ($G_1$, $G_2$) of size at most $k$ ($k_1$, $k_2$) . Let $n_1=|V(G_1)|$ and $n_2=|V(G_2)|$. We have the following known proposition:

\begin{proposition}[\cite{buss1993nondeterminism,DBLP:books/sp/CyganFKLMPPS15}]\label{prop:BussRule}
The three following properties hold:
\begin{enumerate}
\item ${\cal S}=\{S_1\cup (V(G)\setminus V(G_1)): S_1\in {\cal S}_1\}$.
\item If $|E(G_2)|>(k_2)^2$, then $G$ does not contain any vertex cover of size at most $k$.
\item Else, $|E(G_2)|\leq (k_2)^2$, then $|V(G_2)|\leq 2(k_2)^2$.
\end{enumerate}
\end{proposition}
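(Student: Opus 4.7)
The plan is to prove the three parts in order, since each one is essentially a textbook consequence of how Buss Rule operates together with the removal of isolated vertices; the only real work is making sure the correspondence in part (1) is clean enough to be used by $\mathsf{lift}$ downstream.

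For part (1), I would proceed by induction on the number of applications of Buss Rule used to transform $(G,k)$ into $(G_1,k_1)$. The base case is trivial. For the inductive step it suffices to prove one-step correctness: if $v\in V(G)$ has $\deg_G(v)\geq k+1$, then $v$ lies in \emph{every} vertex cover of $G$ of size at most $k$, because otherwise the cover would be forced to contain all of $N_G(v)$, which has size strictly greater than $k$. Consequently the map $S \mapsto S\setminus\{v\}$ is a bijection between vertex covers of $G$ of size at most $k$ and vertex covers of $G-\{v\}$ of size at most $k-1$, with inverse $S'\mapsto S'\cup\{v\}$. Composing the bijections obtained at each step of Buss Rule yields $\mathcal{S} = \{S_1\cup (V(G)\setminus V(G_1)) : S_1\in\mathcal{S}_1\}$, which is exactly part (1).

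For part (2), I would exploit the fact that after Buss Rule is applied exhaustively, every vertex of $G_1$ has degree at most $k_1$ in $G_1$. Since removing isolated vertices does not change the edge set, $|E(G_2)|=|E(G_1)|$ and $k_2=k_1$. Now any vertex cover $S_1$ of $G_1$ of size at most $k_1$ covers at most $\sum_{v\in S_1}\deg_{G_1}(v)\leq k_1\cdot k_1 = (k_1)^2$ edges. Hence, if $|E(G_2)|=|E(G_1)| > (k_2)^2 = (k_1)^2$, then $\mathcal{S}_1=\emptyset$, and by part (1) this gives $\mathcal{S}=\emptyset$, which is the conclusion of part (2).

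For part (3), since $G_2$ is obtained from $G_1$ by removing all isolated vertices, every vertex of $G_2$ has degree at least $1$. Then by the handshake lemma, $|V(G_2)|\leq \sum_{v\in V(G_2)}\deg_{G_2}(v) = 2|E(G_2)|\leq 2(k_2)^2$, under the hypothesis of the part.

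I do not expect any serious obstacle here: the whole statement is a direct transcription of the standard analysis of Buss Rule, and the reason it appears as a self-contained proposition is that the later sections (in particular the construction of $G_3$ and the procedure $\mathsf{lift}$) will repeatedly invoke the explicit bijection from part (1) together with the polynomial vertex bound of part (3). The only small thing to keep straight is the bookkeeping $k_2=k_1$ and $|E(G_2)|=|E(G_1)|$, which is what lets part (2) be phrased in terms of $(G_2,k_2)$ rather than $(G_1,k_1)$.
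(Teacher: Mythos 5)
Your proof is correct and is the standard argument: the paper states Proposition~\ref{prop:BussRule} without proof (it is cited from \cite{buss1993nondeterminism,DBLP:books/sp/CyganFKLMPPS15}), and your three steps—the forced-vertex bijection for part (1), the degree-times-size edge-counting bound for part (2), and the handshake lemma for part (3)—are exactly the textbook analysis those references use.
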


Let $x$ (resp., $x'$) be the number of vertex covers (resp., minimal vertex covers) of $G$ of size at most $k$, let $x_1$ (resp., $x_1'$) be the number of vertex covers (resp., minimal vertex covers) of $G_1$ of size at most $k_1$, and let $x_2$ (resp., $x_2'$) be the number of vertex covers (resp., minimal vertex covers) of $G_2$ of size at most $k_2$ .  Then, due to the first item of Proposition \ref{prop:BussRule} and since no minimal vertex cover can contain isolated vertices, we have the following corollary.

\begin{corollary}\label{cor:equalities}
The following equalities hold: $x=x_1$ and $x'=x'_1=x_2'$.
\end{corollary}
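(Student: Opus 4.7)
The plan is to verify each of the three equalities in turn, exploiting the bijection supplied by item 1 of Proposition~\ref{prop:BussRule} together with a careful bookkeeping of which vertices can be dropped from a cover without destroying the vertex-cover property. The first equality $x = x_1$ is immediate: item 1 of Proposition~\ref{prop:BussRule} exhibits a bijection $\phi : {\cal S}_1 \to {\cal S}$ given by $\phi(S_1) = S_1 \cup (V(G) \setminus V(G_1))$ (injectivity is trivial since the appended set is fixed, and surjectivity is precisely the content of item 1), so $x = |{\cal S}| = |{\cal S}_1| = x_1$.

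Next, for $x' = x'_1$ I would show that $\phi$ restricts to a bijection between the minimal members of ${\cal S}_1$ and those of ${\cal S}$. In the forward direction, assume $S_1$ is a minimal vertex cover of $G_1$ of size at most $k_1$ and, towards a contradiction, suppose some $v \in S := \phi(S_1)$ can be removed with $S \setminus \{v\}$ still a vertex cover of $G$. If $v \in V(G_1)$, then since every edge of $G_1$ has both endpoints in $V(G_1)$, the set $(S \setminus \{v\}) \cap V(G_1) = S_1 \setminus \{v\}$ still covers $E(G_1)$, contradicting the minimality of $S_1$. If instead $v \in V(G) \setminus V(G_1)$, then $S \setminus \{v\}$ is a vertex cover of $G$ of size at most $k$, so by item 1 of Proposition~\ref{prop:BussRule} it must contain $V(G) \setminus V(G_1)$, which is absurd. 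In the reverse direction, any minimal $S \in {\cal S}$ equals $\phi(S \cap V(G_1))$ by item 1, and $S \cap V(G_1)$ is minimal in $G_1$: indeed, if some $u \in S \cap V(G_1)$ could be dropped within $G_1$, then $S \setminus \{u\}$ would still cover $E(G)$, because edges incident to $V(G) \setminus V(G_1)$ are covered by the unchanged set $V(G) \setminus V(G_1) \subseteq S \setminus \{u\}$, and edges of $G_1$ are covered by $(S \cap V(G_1)) \setminus \{u\}$; this contradicts the minimality of $S$.

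Finally, $x'_1 = x'_2$ follows from the observations that $E(G_1) = E(G_2)$ and that no minimal vertex cover can contain an isolated vertex (deleting such a vertex preserves the vertex-cover property); hence every minimal vertex cover of $G_1$ is already a subset of $V(G_2)$, minimality in $G_1$ and in $G_2$ coincide, and $k_1 = k_2$ yields the equality of the two enumerated collections. The only real subtlety is the repeated appeal to item 1 of Proposition~\ref{prop:BussRule} as an encapsulation of the iterated Buss rule: what is really being used is that \emph{every} vertex cover of $G$ of size at most $k$ contains the full forced set $V(G) \setminus V(G_1)$, and this fact powers both the case $v \in V(G) \setminus V(G_1)$ above and the identification $S = \phi(S \cap V(G_1))$ in the reverse minimality argument.
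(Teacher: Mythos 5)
Your proof is correct and takes the same approach as the paper, which dispatches the corollary with a single sentence invoking item 1 of Proposition~\ref{prop:BussRule} (for the bijection and the fact that every small vertex cover of $G$ contains the forced set $V(G)\setminus V(G_1)$) and the observation that minimal vertex covers avoid isolated vertices. You have merely spelled out the routine verification that the bijection $\phi$ preserves minimality in both directions, which the paper leaves implicit.
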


Given this corollary, we can already conclude a polynomial kernel for the variant of {\sc \#Vertex Cover} termed {\sc \#$k$-Minimal Vertex Cover}. (The challenge, dealt with in the rest of Section \ref{sec:vc}, would be to derive a polynomial kernel  for{\sc \#Vertex Cover} .)

\begin{theorem}
{\sc \#$k$-Minimal Vertex Cover} admits a kernel of size $\OO(k^2)$.
\end{theorem}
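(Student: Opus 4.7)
The plan is to reuse the reduction procedure $\mathsf{reduce}$ already constructed in Section \ref{sec:vc}, and to observe that for counting \emph{minimal} vertex covers the lifting step becomes essentially trivial. The procedure works as follows. Given $(G,k)$, exhaustively apply Buss Rule; if during this process the parameter drops below zero, or if after exhaustive application $|E(G_1)| > (k_1)^2$, then by Proposition \ref{prop:BussRule}(2) the graph $G$ has no vertex cover of size at most $k$ and hence no minimal one, so output a fixed trivial instance whose number of minimal vertex covers is $0$ (for example a single edge with parameter $0$). Otherwise, delete all isolated vertices from $G_1$ to obtain $G_2$, and set $k_2 = k_1$. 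By Proposition \ref{prop:BussRule}(3) we have $|V(G_2)| \leq 2k^2$ and $|E(G_2)| \leq k^2$, so the total size of $(G_2, k_2)$ is $\OO(k^2)$. The procedure $\mathsf{reduce}$ returns $(G_2, k_2)$.

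The procedure $\mathsf{lift}$ is the identity: it simply outputs the number $x_2'$ of minimal vertex covers of $G_2$ of size at most $k_2$ that it receives as input. Correctness is exactly Corollary \ref{cor:equalities}, which asserts $x' = x_1' = x_2'$. The underlying reasons are that a vertex removed by Buss Rule (being of degree at least $k+1$) must belong to every vertex cover of size at most $k$, and hence its inclusion induces a size-preserving bijection between minimal vertex covers of size at most $k$ in $G$ and in $G_1$; and that an isolated vertex covers no edge and so can never belong to a minimal vertex cover, so deleting isolated vertices does not alter the family of minimal vertex covers counted.

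There is essentially no obstacle here: the conceptual difficulty encountered in the proof of Theorem \ref{thm:vcKernel} --- namely, that after stripping away isolated vertices one cannot recover, from a single aggregate count, the individual counts of solutions broken down by size --- simply does not arise, because minimal vertex covers cannot contain isolated vertices in the first place. Consequently, the elaborate second reduction step used for \#$k$-Vertex Cover (introducing false-twin copies and a carefully chosen number of new isolated vertices) is unnecessary, and \#$k$-Minimal Vertex Cover admits an honest polynomial kernel of size $\OO(k^2)$.
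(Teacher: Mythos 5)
Your proof is correct and follows essentially the same route as the paper: exhaustive Buss Rule, deletion of isolated vertices, outputting a single-edge-with-parameter-$0$ trivial instance when the edge bound is violated, and an identity $\mathsf{lift}$ whose correctness rests directly on Corollary \ref{cor:equalities}. Your additional commentary on why the corollary holds for minimal covers (the forced inclusion of Buss-removed vertices and the fact that isolated vertices never appear in a minimal cover) is accurate but not a different argument, just a fuller explanation of what the paper leaves terse.
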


\begin{proof}
Given an instance $(G,k)$ of {\sc \#$k$-Minimal Vertex Cover}, the procedure $\mathsf{reduce}'$ outputs: {\em (i)} $(G_2,k_2)$ if $|E(G_2)|\leq (k_2)^2$, and {\em (ii)} $(G'=(\{u,v\},\{\{u,v\}\}), k'=0)$ otherwise. Observe that the procedure runs in polynomial time, and, due to the third item of Proposition \ref{prop:BussRule}, the size of the output is bounded by $\OO(k^2)$. 

Given $(G,k)$, the output of $\mathsf{reduce}'$, and the solution $x_2'$ to this output, the procedure $\mathsf{lift}'$ returns $x_2'$. Observe that the procedure runs in polynomial time, and from the second item of Proposition \ref{prop:BussRule} and Corollary \ref{cor:equalities}, we know that $x'=x_2'$ and hence the procedure is correct.
\end{proof}

Unfortunately, for {\sc \#Vertex Cover}, we cannot simply output $(G_2,k_2)$. In particular, observe that different vertex covers of $G_1$  of size at most $k_1$ might contain different numbers of vertices that are isolated in $G_1$, and hence the knowledge of $x_2$ alone is insufficient in order to deduce $x_1$ (and $x$).

We proceed to modify $G_2$ in order to define the graph that will be the output of the reduction

\begin{definition}\label{def:defineG3}
Let $d=n_2$ and $t=d+dk_2+2(dk_2)^2$. Then, let $G_3$ be the graph whose vertex set $\{v_i: v\in V(G_2), i\in [d]\}\cup T$, where $T$ is a set of $t$ new vertices,  and whose edge set is $\{\{u_i,v_j\}: \{u,v\}\in E(G_2), i,j\in[d]\}$. Additionally, let $k_3=d\cdot k_2$.
\end{definition}

That is, $G_3$ is the result of the replacement of every vertex of $G_2$ by $d$ copies (false twins) of that vertex and the addition of $t$ new vertices. We are now ready to define $\mathsf{reduce}$.

\begin{definition}[{\bf Procedure $\mathsf{reduce}$}]
Given an instance $(G,k)$ of {\sc \#Vertex Cover}, the procedure $\mathsf{reduce}$ outputs: {\em (i)} $(G_3,k_3)$ if $|E(G_2)|\leq (k_2)^2$, and {\em (ii)} $(G'=(\{u,v\},\{\{u,v\}\}), k'=0)$ otherwise.
\end{definition}

Due to the third item of Proposition \ref{prop:BussRule}, we have the following immediate observation.

\begin{observation}\label{obs:reduceVC}
$\mathsf{reduce}$ runs in polynomial time, and the size of its output is bounded by $k^{\OO(1)}$.
\end{observation}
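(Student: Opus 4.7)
The plan is to verify the two assertions of the observation by tracing sizes and running times through the three stages of $\mathsf{reduce}$: exhaustive application of the Buss Rule to produce $(G_1,k_1)$, removal of isolated vertices to obtain $(G_2,k_2)$, and (when the edge-count check succeeds) construction of $G_3$ as in Definition~\ref{def:defineG3}.

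For the running-time claim, I would first note that exhaustive application of the Buss Rule terminates after at most $k$ iterations, since each application strictly decrements the parameter; a single iteration only needs to scan vertex degrees, which takes time polynomial in $|V(G)| + |E(G)|$. Deleting all isolated vertices of $G_1$ to obtain $G_2$ is linear-time, and verifying whether $|E(G_2)| \leq (k_2)^2$ is trivially polynomial. If that check fails, the trivial two-vertex instance $(G',k')$ is produced in constant time; otherwise, I need to build $G_3$ explicitly, which can be done in time polynomial in its own size (bounded in the next step). Summing over the three stages gives a total running time that is polynomial in the size of the input $(G,k)$.

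For the output-size claim, I would invoke the third item of Proposition~\ref{prop:BussRule}, which yields $n_2 \leq 2(k_2)^2 \leq 2k^2$. Therefore $d = n_2 \leq 2k^2$ and $dk_2 \leq 2k^3$, so
\[
t \;=\; d + dk_2 + 2(dk_2)^2 \;\leq\; 2k^2 + 2k^3 + 8k^6 \;=\; \OO(k^6).
\]
Consequently $|V(G_3)| = d\cdot n_2 + t = \OO(k^4) + \OO(k^6) = \OO(k^6)$, and $|E(G_3)| \leq d^2\cdot |E(G_2)| \leq 4k^4 \cdot (k_2)^2 = \OO(k^6)$. Combined with $k_3 = d k_2 \leq 2k^3$, this bounds the total output size by $k^{\OO(1)}$, as required.

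There is no serious technical obstacle: both claims amount to bookkeeping on the explicit formulas fixed in Definition~\ref{def:defineG3}, supported by the a priori estimate $n_2 \leq 2(k_2)^2$ already supplied by Proposition~\ref{prop:BussRule}. The only mild subtlety is confirming that the $t$ newly added isolated vertices do not blow up the output size, which is immediate from the polynomial estimate on $t$ above.
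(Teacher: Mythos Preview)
Your proposal is correct and follows essentially the same approach as the paper, which simply notes that the observation is immediate from the third item of Proposition~\ref{prop:BussRule}. You have merely made the bookkeeping explicit, tracing the polynomial bounds on $n_2$, $d$, $t$, $|V(G_3)|$, $|E(G_3)|$, and $k_3$ that the paper leaves implicit.
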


\subsection{Properties of the Reduced Instance}

For every $i\in\{0,1,\ldots,k_2\}$, let ${\cal S}^i_2$ be the set of vertex covers of $G_2$ of size exactly $i$. Then, ${\cal S}_2=\bigcup_{i=0}^{k_2}{\cal S}^i_2$ is the set of vertex covers of size at most $k_2$ of $G_2$. Let ${\cal S}_3$ be the set of vertex covers of $G_3$ of size at most $k_3$. Let $n_3=|V(G_3)|$. We say that a subset $U\subseteq V(G_3)$ is {\em valid} if there does not exist $v\in V(G_2)$ such that $\{v_1,v_2\ldots,v_d\}\subseteq U$. We proceed to define the following mappings.

\begin{definition}[{\bf Mappings $\mathsf{map}$ and $\mathsf{Map}$}]\label{def:map}
The mappings $\mathsf{map}: {\cal S}_2\rightarrow 2^{V(G_3)}$ and $\mathsf{Map}{\cal S}_2\rightarrow 2^{2^{V(G_3)}}$ are defined as follows.
\begin{itemize}
\item Given $X\in{\cal S}_2$, let $\mathsf{map}(X)=\{v_j: v\in X, j\in[d]\}$.
\item Given $X\in{\cal S}_2$, let $\mathsf{Map}(X)=\{\mathsf{map}(X)\cup U: U$ is valid, $U\cap \mathsf{map}(X)=\emptyset, |U|\leq k_3-|\mathsf{map}(X)|\}$.
\end{itemize}
\end{definition}

We have the following lemma regarding the vertex covers of $G_3$.

\begin{lemma}\label{lem:union}
We have that {\em (i)} ${\cal S}_3=\bigcup_{X\in{\cal S}_2}\mathsf{Map}(X)$, and {\em (ii)} for distinct $X,Y\in{\cal S}_2$, $\mathsf{Map}(X)\cap \mathsf{Map}(Y)=\emptyset$.
\end{lemma}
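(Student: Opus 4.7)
The plan is to prove (i) by double inclusion and then derive (ii) by showing that the $X$ witnessing membership in $\mathsf{Map}(X)$ is uniquely recoverable from $S$. The central idea is that from a vertex cover $S$ of $G_3$, the natural ``characteristic'' is $X_S := \{v \in V(G_2) : \{v_1,\ldots,v_d\} \subseteq S\}$, and the role of the validity condition in Definition \ref{def:map} is exactly to make this recovery unambiguous.

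For the easier inclusion $\mathsf{Map}(X) \subseteq {\cal S}_3$, I would take $S = \mathsf{map}(X) \cup U$ with $U$ valid, $U \cap \mathsf{map}(X) = \emptyset$, and $|U| \le k_3 - |\mathsf{map}(X)|$. The size bound $|S| \le k_3$ is immediate from $|\mathsf{map}(X)| = d|X|$ and additivity. For the covering property, every edge of $G_3$ has the form $\{u_i,v_j\}$ for some $\{u,v\} \in E(G_2)$; since $X$ is a vertex cover of $G_2$, at least one of $u,v$ is in $X$, and its entire copy-class lies in $\mathsf{map}(X) \subseteq S$.

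For the reverse inclusion ${\cal S}_3 \subseteq \bigcup_{X} \mathsf{Map}(X)$, I would fix $S \in {\cal S}_3$, set $X := X_S$ as above, and verify $X \in {\cal S}_2$ and $S \in \mathsf{Map}(X)$. The size bound $|X| \le k_2$ follows because $\mathsf{map}(X) \subseteq S$ yields $d|X| \le |S| \le k_3 = d k_2$. To see $X$ covers $E(G_2)$, take $\{u,v\} \in E(G_2)$: if $u \notin X$, some copy $u_{i_0} \notin S$, and then covering the edges $\{u_{i_0}, v_j\}$ for all $j \in [d]$ forces every $v_j \in S$, i.e., $v \in X$. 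Finally, letting $U := S \setminus \mathsf{map}(X)$, the conditions $U \cap \mathsf{map}(X) = \emptyset$ and $|U| \le k_3 - |\mathsf{map}(X)|$ are immediate, and $U$ is valid because if some $v$ had all its copies in $U \subseteq S$ then $v \in X$ by definition, contradicting $U \cap \mathsf{map}(X) = \emptyset$.

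For disjointness (ii), suppose $S \in \mathsf{Map}(X) \cap \mathsf{Map}(Y)$. The key observation, which I view as the one (mild) obstacle of the argument, is that the validity of $U = S \setminus \mathsf{map}(X)$ forces $X$ to coincide with $X_S$: every $v \in X$ clearly satisfies $\{v_1,\ldots,v_d\} \subseteq S$, and conversely if $v \in X_S \setminus X$ then all of $v_1,\ldots,v_d$ would lie in $U$, violating validity. The same argument applied to $Y$ gives $Y = X_S$, hence $X = Y$. This establishes the two parts of the lemma.
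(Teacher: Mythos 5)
Your proof is correct and follows essentially the same approach as the paper: prove (i) by double inclusion, with the nontrivial direction defining $X$ as the set of vertices all of whose copies lie in $S$, and derive (ii) from the validity constraint preventing a non-$X$ vertex from having all $d$ copies selected. Your framing of (ii) via unique recoverability of $X$ from $S$ is a slightly cleaner packaging of the same idea that the paper proves by exhibiting a witness $v \in X\setminus Y$.
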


\begin{proof}
We first prove the correctness of the first item. On the one hand, consider some $A\in{\cal S}_3$. Let $X=\{v\in V(G_2): \{v_1,v_2,\ldots,v_d\}\subseteq X\}$, and $U=A\setminus X$. We claim that $X\in{\cal S}_2$. Since $|A|\leq k_3$ (because $A\in{\cal S}_3$) and $k_3=d\cdot k_2$, it follows that $|X|\leq k_2$. Moreover, consider an edge $\{u,v\}\in E(G_2)$. If there exists $u_i,v_j\in V(G_3)$ such that $\{u_i,v_j\}\cap A=\emptyset$, then we have a contradiction since $A$ is a vertex cover of $G_3$. Hence, $\{u,v\}\cap X\neq\emptyset$. In turn, we derive that $X$ is a vertex cover of $G_2$, which yields that $X\in{\cal S}_2$. 
Now, notice that, by Definition \ref{def:map}, $A=\mathsf{map}(X)\cup U$ and $\mathsf{map}(X)\cup U\in\mathsf{Map}(X)$. So, $A\in \bigcup_{X\in{\cal S}_2}\mathsf{Map}(X)$.

On the other hand, let $B\in \bigcup_{X\in{\cal S}_2}\mathsf{Map}(X)$. So, $B\in \mathsf{Map}(X)$ for some $X\in {\cal S}_2$. By Definition \ref{def:map}, this implies that $B=\mathsf{map}(X)\cup U$ for some valid subset $U\subseteq V(G_3)$ disjoint from $\mathsf{map}(X)$, and $|B|\leq k_3$. So, to derive that $B\in{\cal S}_3$, it suffices to argue that $\mathsf{map}(X)$ is a vertex cover of $G_3$. To this end, consider some edge $\{u_i,v_j\}\in E(G_3)$. Then, $\{u,v\}\in E(G_2)$. Because $X$ is a vertex cover of $G_2$, we have that $\{u,v\}\cap X\neq\emptyset$. However, by Definition \ref{def:map}, this implies that $\{u_i,v_j\}\cap\mathsf{map}(X)\neq\emptyset$. Thus, the proof of the first item of the lemma is complete.

For the second item of the lemma, consider some distinct $X,Y\in{\cal S}_2$. Without loss of generality, suppose that $|X|\geq|Y|$. So, there exists $v\in V(G_2)$ such that $v\in X\setminus Y$, and, hence, $\{v_1,v_2,\ldots,v_d\}\subseteq \mathsf{map}(X)$ while $\{v_1,v_2,\ldots,v_d\}\cap \mathsf{map}(Y)=\emptyset$. So, since a valid set cannot contain $\{v_1,v_2,\ldots,v_d\}$, we derive that $\{v_1,v_2,\ldots,v_d\}\setminus A\neq\emptyset$ for every $A\in\mathsf{Map}(Y)$. 
However, since $\{v_1,v_2,\ldots,v_d\}\setminus A=\emptyset$ for every $A\in\mathsf{Map}(X)$, it follows that $\mathsf{Map}(X)\cap\mathsf{Map}(Y)=\emptyset$.
\end{proof}

We consider the sizes of the sets assigned by $\mathsf{Map}$ in the following lemma.

\begin{lemma}\label{lem:sizes}
For every $X\in{\cal S}^i_2$ for $i\in\{0,1,\ldots,k_2\}$, it holds that $|\mathsf{Map}(X)|=w_i$, where \[\displaystyle{w_i=\sum_{(a^\star,a_1,a_2,\ldots,a_{n_2-i})\in W_i}{t\choose a^\star}\prod_{j=1}^{n_2-i}{d \choose a_j}},\ \ \  \mathrm{and}\]
\[W_i=\{(a^\star,a_1,a_2,\ldots,a_{n_2-i}): \displaystyle{a^\star+\sum_{j=1}^{n_2-i}a_j}\leq k_3-d\cdot i, a^\star\leq t, \mathrm{and\ for\ each\ }j\in[n_2-i], a_j\in\{0,1,\ldots,d-1\}\}.\]
\end{lemma}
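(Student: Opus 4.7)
The plan is to realize $\mathsf{Map}(X)$ as a set of subsets of the ``leftover'' ground set $V(G_3)\setminus\mathsf{map}(X)$, decompose that leftover set into combinatorially independent blocks, and then apply the product rule to count.

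First, I would observe that for $X\in\mathcal{S}_2^i$ we have $|\mathsf{map}(X)|=d\cdot i$, so by Definition~\ref{def:map} the assignment $U\mapsto \mathsf{map}(X)\cup U$ is a bijection between $\mathsf{Map}(X)$ and the family of valid subsets $U\subseteq V(G_3)\setminus\mathsf{map}(X)$ with $|U|\leq k_3-di$. Hence it suffices to enumerate such $U$. I would partition the leftover ground set $V(G_3)\setminus\mathsf{map}(X)$ into the set $T$ (of size $t$) together with, for each $v\in V(G_2)\setminus X$, the block $B_v=\{v_1,\ldots,v_d\}$; there are exactly $n_2-i$ such blocks, each of size $d$. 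The validity condition for $U$ asks that $B_v\not\subseteq U$ for every $v\in V(G_2)$; this is automatic for $v\in X$ because $U\cap\mathsf{map}(X)=\emptyset$, so it reduces to $|U\cap B_v|\leq d-1$ for each $v\in V(G_2)\setminus X$. On $T$ there is no combinatorial restriction beyond $|U\cap T|\leq t$.

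Next, I would stratify the count by the size vector $(a^\star,a_1,\ldots,a_{n_2-i})=(|U\cap T|,|U\cap B_{v_1}|,\ldots,|U\cap B_{v_{n_2-i}}|)$, where $v_1,\ldots,v_{n_2-i}$ is any fixed enumeration of $V(G_2)\setminus X$. Since $T$ and the blocks $B_{v_j}$ are pairwise disjoint, the number of choices of $U$ with a prescribed size vector factorizes as $\binom{t}{a^\star}\prod_{j=1}^{n_2-i}\binom{d}{a_j}$. The cardinality bound $|U|\leq k_3-di$ translates to $a^\star+\sum_j a_j\leq k_3-di$, validity translates to $a_j\leq d-1$, and containment in the ground set gives $a^\star\leq t$; these are exactly the conditions defining $W_i$. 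Summing the product over $W_i$ gives $|\mathsf{Map}(X)|=w_i$.

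The only point that warrants an explicit remark—though it is not a real obstacle—is independence of the count on the specific choice of $X$ (as opposed to just $|X|$). This is immediate once the leftover set is decomposed as above, because the decomposition always yields one copy of $T$ and $n_2-i$ full $d$-blocks, and the validity and size constraints treat all of these blocks symmetrically. Thus $w_i$ is a well-defined function of $i$ alone, as claimed.
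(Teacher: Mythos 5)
Your proof is correct and follows essentially the same approach as the paper's: both reduce to counting valid sets $U$ in the leftover ground set $V(G_3)\setminus\mathsf{map}(X)$, partition it into $T$ plus the $n_2-i$ blocks $\{v_1,\ldots,v_d\}$ for $v\notin X$, and count by first fixing the size vector and then applying the product rule. Your added remark on well-definedness of $w_i$ as a function of $i$ alone (rather than of $X$) is a nice clarification, but the underlying argument is the same.
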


Towards the proof of this lemma and a latter lemma, for every $r\in\{0,1,\ldots,k_3-i\cdot d\}$, let us denote $W^r_i=\{(a^\star,a_1,a_2,\ldots,a_{n_2-i}): a^\star+\sum_{j=1}^{n_2-i}a_j=r, a^\star\leq t, \mathrm{and\ for\ each\ }j\in[n_2-i], a_j\in\{0,1,\ldots,d-1\}\}$, and $w^r_i=\sum_{(a^\star,a_1,a_2,\ldots,a_{n_2-i})\in W^r_i}{t \choose a^\star}\prod_{j=1}^{n_2-i}{d \choose a_j}$.

\begin{proof}[Proof of Lemma \ref{lem:sizes}]
Let $X\in{\cal S}^i_2$. So, we need to count the number of subsets $U\subseteq V(G_3)$ such that $U$ is valid, $U\cap\mathsf{map}(X)=\emptyset$, and $|U|\leq k_3-|\mathsf{map}(X)|$. Observe that $|\mathsf{map}(X)|=d\cdot i$. So, because we demand that $U\cap\mathsf{map}(X)=\emptyset$, every choice of $U$ corresponds to the choice of some $r\leq k_3-d\cdot i$ vertices from $V(G_3)\setminus\mathsf{map}(X)$ such that the resulting set would be valid. In turn, every such choice, for a specific $r$, corresponds to the choice of {\em (a)} how many vertices to pick from $T$ and how many vertices (a number between $0$ and $d-1$, due to validity) to pick from $\{v_1,v_2,\ldots,v_d\}$ for every $v\notin X$, so that in total we pick $r$ vertices, and {\em (b)} given a choice of type {\em (a)}, the choice of which specific vertices to pick from $T$ and which specific vertices to pick from $\{v_1,v_2,\ldots,v_d\}$ for every $v\notin X$. Clearly, we have a natural 1-to-1 correspondence between the the choices of type {\em (a)} and the vectors in $W^r_i$. Then, given a choice of such a vector $(a^\star,a_1,a_2,\ldots,a_{n_2-i})$, we have ${t\choose a^\star}\prod_{j=1}^{n_2-i}{d \choose a_j}$ choices of type {\em (b)}. Considering all choices for $r$, we attain the formula stated in the lemma.
\end{proof}

In particular, we prove that the sizes in Lemma \ref{lem:sizes} satisfy the following.

\begin{lemma}\label{lem:bound}
For every $i\in\{0,1,\ldots,k_2\}$,
\[w_i>\sum_{j=i+1}^{k_2}{n_2 \choose j}\cdot w_j.\]
\end{lemma}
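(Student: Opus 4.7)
The plan is to proceed by downward induction on $i$, from $i=k_2$ down to $i=0$. The base case $i=k_2$ is immediate: the right-hand sum is empty and $w_{k_2}=1$ (the only valid tuple has $a^\star=0$ and all $a_l=0$). For the inductive step, I invoke the induction hypothesis $\sum_{j>i+1}\binom{n_2}{j}w_j < w_{i+1}$ to write $\sum_{j>i}\binom{n_2}{j}w_j < (\binom{n_2}{i+1}+1)w_{i+1} \leq (2^d+1)w_{i+1}$, using $\binom{n_2}{i+1}\leq 2^{n_2}=2^d$. It therefore suffices to establish the single-step inequality $w_i \geq (2^d+1)\,w_{i+1}$. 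Throughout I may assume $n_2\geq 2$, and hence $d\geq 2$: $n_2=1$ is incompatible with $G_2$ having no isolated vertices, while $n_2=0$ makes the claim trivial.

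The main tool is a generating-function formulation. Set $G(x)=(1+x)^d-x^d$ and $F_i(x)=(1+x)^t\,G(x)^{n_2-i}$; the definition of $w_i$ then gives $w_i=\sum_{r=0}^{K_i}[x^r]F_i(x)$ with $K_i=k_3-di$. Using the relation $F_i=F_{i+1}\cdot G$ together with $G(x)=\sum_{s=0}^{d-1}\binom{d}{s}x^s$, one obtains
\[
w_i \;=\; \sum_{s=0}^{d-1}\binom{d}{s}\,W_{i+1}(K_i-s), \qquad W_{i+1}(K):=\sum_{r=0}^{K}[x^r]F_{i+1}(x).
\]
Since $K_i-s\geq K_{i+1}$ for every $s\leq d-1$ and $\sum_{s=0}^{d-1}\binom{d}{s}=2^d-1$, each $W_{i+1}(K_i-s)\geq w_{i+1}$, so $w_i \geq (2^d-1)\,w_{i+1} + [x^{K_i}]F_{i+1}(x)$. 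Because $G(x)\geq 1$ coefficientwise, $[x^{K_i}]F_{i+1}(x)\geq [x^{K_i}](1+x)^t=\binom{t}{K_i}$. Hence it suffices to prove $\binom{t}{K_i}\geq 2\,w_{i+1}$.

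To bound $w_{i+1}$ from above, the coefficientwise inequality $G(x)\leq (1+x)^d$ yields $[x^r]F_{i+1}(x)\leq \binom{t+d(n_2-i-1)}{r}$ by Vandermonde, and a standard binomial tail bound (valid because $K_{i+1}\leq k_3$ is far below half of $t+d(n_2-i-1)\geq 2k_3^2$) gives $w_{i+1}\leq k_3\,\binom{n_3-d(i+1)}{K_{i+1}}$. Thus the problem reduces to showing $\binom{t}{K_i}\geq 2k_3\cdot\binom{n_3-d(i+1)}{K_{i+1}}$. I would prove this by factoring the ratio as
\[
\prod_{l=1}^{d}\frac{t-K_i+l}{K_{i+1}+l}\;\cdot\;\prod_{l=1}^{d(n_2-i-1)}\frac{t-K_{i+1}+l}{t+l}
\]
and analyzing the two products separately: the first is at least $(2k_3)^d$, because the choice $t=d+k_3+2k_3^2$ forces each of its $d$ factors to be at least $(t-k_3)/k_3\geq 2k_3$; the second ``loss'' product equals $1$ in the edge case $K_{i+1}=0$ (which is $i=k_2-1$, where the target reduces to the immediate $\binom{t}{d}\geq 2$), and is very close to $1$ when $K_{i+1}\geq 1$, since then each of its factors $1-K_{i+1}/(t+l)$ is at least $1-1/(2k_3)$. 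The main technical obstacle is precisely this loss-factor bookkeeping; it reduces to verifying an inequality of the form $(d-1)\ln(2k_3)\geq d/k_2$, which holds routinely for all $d\geq 2$ and $k_2\geq 2$ (the regime forced by $K_{i+1}\geq 1$, since $K_{i+1}\geq 1$ implies $k_2\geq i+2\geq 2$).
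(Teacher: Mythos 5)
Your proof is correct, and it takes a genuinely different route from the paper's. The paper reduces the lemma non-inductively to the single-step inequality $w_i \geq k_2\cdot 2^{n_2}\cdot w_{i+1}$ (using only the monotonicity $w_{i+1}\leq w_i$ and $\binom{n_2}{j}\leq 2^{n_2}$, at the cost of the extra factor $k_2$), and then proves that step via two purely combinatorial two-sided bounds, $\binom{n_3-di'-n_2}{k_3-di'}\leq w_{i'}\leq k_3\binom{n_3-di'}{k_3-di'}$, followed by a ratio computation in which the ``huge-$t$'' choice is what creates the needed multiplicative gap. Your version replaces the non-inductive reduction by downward induction, which shrinks the target to the weaker $w_i\geq(2^d+1)w_{i+1}$; you then recast $w_i$ as a coefficient-prefix-sum of $F_i(x)=(1+x)^tG(x)^{n_2-i}$ with $G(x)=(1+x)^d-x^d$, and the single multiplication $F_i=F_{i+1}G$ hands you the factor $2^d-1$ essentially for free via $w_i=\sum_{s=0}^{d-1}\binom{d}{s}W_{i+1}(K_i-s)\geq(2^d-1)w_{i+1}+[x^{K_i}]F_{i+1}$. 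What remains, $\binom{t}{K_i}\geq 2w_{i+1}$, is then handled with an upper bound on $w_{i+1}$ of the same flavor as the paper's Inequality~(1), plus a two-product factorization of the binomial ratio whose ``gain'' product has all $d$ factors at least $2k_3$ and whose ``loss'' product is controllably close to $1$ thanks to the magnitude of $t$. The trade-off is clear: the paper's argument is shorter and avoids generating functions, but has to manufacture the entire factor $k_2\cdot 2^{n_2}$ from the binomial ratio; your argument is conceptually cleaner about where the dominant $2^d$ comes from (it is literally the sum of the coefficients of $G$), at the price of a slightly more delicate loss-factor analysis at the end. Both ultimately hinge on the same design choice of $t$ being quadratically large in $k_3$, and your separate treatment of the degenerate regimes ($n_2\in\{0,1\}$, $K_{i+1}=0$) closes the edge cases correctly. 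One small nit: the step $[x^r]F_{i+1}\leq\binom{n_3-d(i+1)}{r}$ is coefficientwise domination $G(x)\leq(1+x)^d$ followed by the binomial theorem, not really Vandermonde, but the bound itself is right.
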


\begin{proof}
Fix $i\in\{0,1,\ldots,k_2\}$. First, observe that $w_p\geq w_q$ for all $p,q\in\{0,1,\ldots,k_2\}$ such that $p\leq q$, and ${n_2\choose j}\leq 2^{n_2}$ for all $j\in\{0,1,\ldots,k_2\}$. Hence, it suffices to prove that $w_i\geq k_2\cdot 2^{n_2}\cdot w_{i+1}$. For this purpose, notice that $n_3=dn_2+t$.  Additionally, on the one hand, for all $i'\in\{0,1,\ldots,k_2\}$,
\[w_{i'}\leq k_3\cdot {n_3-d {i'} \choose k_3-d {i'}} = dk_2\cdot {d (n_2-{i'})+t \choose d(k_2-{i'})}.\]
We refer to this inequality as Inequality (1). To see its correctness, note that $w^r_{i'}$ is maximum when $r$ is maximum (restricted to $\{0,1,\ldots,k_3-di'\}$), i.e., when $r=k_3-d{i'}\leq k_3$. Hence, $w_{i'}\leq k_3\cdot w^{k_3-d{i'}}_{i'}$. Now, observe that $w^{k_3-d{i'}}_{i'}$ corresponds to the number of choices of $k_3-d{i'}$ elements out of a universe of size $n_3-d{i'}$ that satisfy particular restrictions. Specifically, we have a partition of the universe into $n_2-i'+1$ parts---one of size $t$ and the others of size $d$--- and we can pick at most $d-1$ elements from each of the parts of size $d$. In particular, this simply means that $w^{k_3-d{i'}}_{i'}$ is bounded from above by the number of choices of $k_3-d{i'}$ elements out of a universe of $n_3-d{i'}$ elements, which is ${n_3-d {i'} \choose k_3-d {i'}}$. Thus, Inequality (1) is correct.

On the other hand,
\[w_{i'}\geq {n_3-d {i'}-n_2 \choose k_3-d{i'}} = {d(n_2-{i'}) +t-n_2 \choose d(k_2-{i'})}.\]
We refer to this inequality as Inequality (2). To see its correctness, note that $w_{i'}\geq w^r_{i'}$ for all $r\in\{0,1,\ldots,k_3-di'\}$. So, in particular, $w_{i'}\geq w^{k_3-di'}_{i'}$. Recall the combinatorial interpretation of $w^{k_3-d{i'}}_{i'}$ discussed above for the correctness of Inequality (1). Now, out of that universe, suppose that we remove (arbitrarily) one element from each of the parts of size $d$---so, in total, we remove $n_2-i'$ elements. Then, we remove $i'$ additional elements. Hence, we remain with a universe of size $n_3-di'-n_2$. However, every choice of $k_3-di'$ elements from this universe satisfies the particular restrictions stated in the aforementioned combinatorial interpretation. Hence, $w^{k_3-d{i'}}_{i'}$ is bounded from below by the number of choices of $k_3-d{i'}$ elements out of a universe of $n_3-d{i'}-n_2$ elements, which is ${n_3-d {i'}-n_2 \choose k_3-d {i'}}$. Thus, Inequality (2) is correct.

Hence, having Inequality (2) and since $d=n_2$,
\[\begin{array}{lll}
\smallskip
w_i & \geq& \displaystyle{{d (n_2-i)+t -n_2 \choose d(k_2-i)}.}\\

\smallskip
& =& \displaystyle{\frac{ (d(n_2\!-\!i)+t-d(k_2\!-\!i))(d(n_2\!-\!i)+t-d(k_2\!-\!i)-1)\cdots(d(n_2\!-\!i)+t-d(k_2\!-\!i)-n_2+1)}
{(d(k_2-i))(d(k_2-i)-1)\cdots(d(k_2-i)-n_2+1)}}\\

&&\cdot \displaystyle{{d (n_2-i)+t-d \choose d(k_2-i)-d}}.
\end{array}\]
Recall that $t=d+dk_2+2(dk_2)^2$. 
So, for all $j\in[d]$, $d(n_2-i)+t-d(k_2-i)-j+1\geq 2(dk_2)^2\geq 2dk_2\cdot(d(k_2-i)-j+1)$. In particular,  we derive that
\[\begin{array}{l}
\displaystyle{\frac{ (d(n_2-i)+t-d(k_2-i))(d(n_2-i)+t-d(k_2-\!i)-1)\cdots(d(n_2-i)+t-d(k_2-i)-n_2+1)}
{(d(k_2-i))(d(k_2-i)-1)\cdots(d(k_2-i)-n_2+1)}}\\
\geq (2dk_2)^{n_2}> d(k_2)^2\cdot 2^{n_2}.
\end{array}\]

Hence, the calculation above implies that
\[\begin{array}{lll}
\smallskip
w_i & >& \displaystyle{d(k_2)^2\cdot 2^{n_2}\cdot {d (n_2-i)-d \choose d(k_2-i)-d}}\\
&\geq& k_2\cdot 2^{n_2}\cdot w_{i+1},
\end{array}\]
where the last inequality follows from Inequality (1). As discussed earlier, this completes the proof.
\end{proof}

\subsection{Procedure $\mathsf{lift}$ and Proof of Theorem \ref{thm:vcKernel}}\label{sec:vcLift}

We start with a computation of the values $w_i$, $i\in\{0,1,\ldots,k_2\}$, defined in Lemma \ref{lem:sizes}.

\begin{lemma}\label{lem:computewi}
There exists a polynomial-time algorithm that, given $i\in\{0,1,\ldots,k_2\}$ and having $t,d,k_2$ and $n_2$ at hand, outputs $w_i$. Here, the input numbers are encoded in unary, and the output number is encoded in binary.
\end{lemma}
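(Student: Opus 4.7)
The plan is to replace the exponential-size sum defining $w_i$ by a dynamic-programming computation whose table has polynomial size. The observation that drives the proof is that all $n_2-i$ coordinates $a_1,\dots,a_{n_2-i}$ in a tuple of $W_i$ are interchangeable: each contributes a factor $\binom{d}{a_j}$ with $a_j\in\{0,1,\ldots,d-1\}$, and only the sum $\sum_j a_j$ affects the global constraint $a^\star+\sum_j a_j\le k_3-di$. Hence we can aggregate over the $a_j$'s by their total.

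Concretely, first I would precompute once a table of binomial coefficients $\binom{t}{a}$ for $a\in\{0,\ldots,t\}$ and $\binom{d}{a}$ for $a\in\{0,\ldots,d\}$ using Pascal's recurrence; since $t,d,k_2,n_2$ are given in unary, this takes polynomial time. Next, I would define
\[
f(j,s)=\sum_{\substack{(b_1,\ldots,b_j)\\ 0\le b_\ell\le d-1\\ b_1+\cdots+b_j=s}}\prod_{\ell=1}^{j}\binom{d}{b_\ell},
\]
for $j\in\{0,1,\ldots,n_2-i\}$ and $s\in\{0,1,\ldots,k_3-di\}$, and compute it by the recurrence
\[
f(0,0)=1,\qquad f(0,s)=0 \text{ for } s>0,\qquad f(j,s)=\sum_{a=0}^{\min(d-1,\,s)}\binom{d}{a}\,f(j-1,s-a).
\]
The table has $O((n_2-i)(k_3-di))$ cells and each cell costs $O(d)$, which is polynomial in the (unary) input.

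Finally, I would assemble
\[
w_i=\sum_{r=0}^{k_3-di}\ \sum_{a^\star=0}^{\min(t,\,r)}\binom{t}{a^\star}\,f(n_2-i,\,r-a^\star),
\]
which by construction matches the formula in Lemma~\ref{lem:sizes}. This final summation has polynomially many terms and uses values already tabulated.

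The only thing to check carefully is that all arithmetic stays within polynomial bit-length. Every intermediate quantity is bounded by $w_i$ itself, and $w_i\le\binom{n_3}{k_3}\le 2^{n_3}$; since $n_3=dn_2+t$ with $t=d+dk_2+2(dk_2)^2$ and everything is presented in unary, $n_3$ is polynomial in the input size, so each number has polynomially many bits and each addition and multiplication runs in polynomial time. I do not foresee a serious obstacle; the main point is simply to notice the symmetry of the indices $a_1,\ldots,a_{n_2-i}$ that lets the product factor through a one-dimensional DP, turning the apparent $d^{n_2-i}$-term sum into an $O((n_2-i)(k_3-di)d)$ computation.
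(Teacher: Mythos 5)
Your proposal is correct and follows essentially the same route as the paper: both aggregate the interchangeable coordinates $a_1,\dots,a_{n_2-i}$ by their running total via a two-dimensional dynamic program (your $f(j,s)$ is the paper's table $\mathfrak{M}[\ell',p']$, with the same recurrence, and the paper likewise first splits $w_i$ into the partial sums over fixed $r$ and $a^\star$ before invoking the DP). The only cosmetic differences are your use of a $j=0$ base case instead of $j=1$ and your slightly more explicit bit-length accounting, neither of which changes the substance.
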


\begin{proof}
Observe that $w_i=\sum_{r=0}^{k_3-id}w^r_i$. Hence, for the proof, it suffices to fix some $r\in\{0,1,\ldots,k_2-id\}$, and show how to compute $w^r_i$ in polynomial time. Now, denote $\ell=n_2-i$, $q=d-1$, and
\[\widehat{w}^{p}_i=\sum_{(a_1,a_2,\ldots,a_\ell)\atop\mathrm{s.t.}\  \sum_{j=1}^\ell a_j=p,\ \mathrm{and}\ \forall j\in[\ell],  a_j\in\{0,1,\ldots,q\}}\prod_{j=1}^\ell{d\choose a_j}.\]
Then, $w^r_i=\displaystyle{\sum_{a^\star=0}^t{t\choose a^\star} \widehat{w}^{r-a^\star}_i}$. So, for the proof, it suffices to fix some $a^\star\in\{0,1,\ldots,t\}$, and show how to compute $\widehat{w}^{p}_i$, for $p=r-a^\star$, in polynomial time.

In what follows, we employ dynamic programming to compute $\widehat{w}^{p}_i$. To this end, for every $\ell'\in[\ell]$ and $p'\in\{0,1,\ldots,\min(p,\ell'\cdot q)\}$, we allocate a table entry $\mathfrak{M}[\ell',p']$. 
We define (for the analysis):
\[W_{\ell',p'} = \sum_{(a_1,a_2,\ldots,a_{\ell'})\atop\mathrm{s.t.}\  \sum_{j=1}^{\ell'} a_j=p',\ \mathrm{and}\ \forall j\in[\ell'],  a_j\in\{0,1,\ldots,q\}}\prod_{j=1}^{\ell'}{d\choose a_j}.\]

The purpose of $\mathfrak{M}[\ell',p']$ would be to store $W_{\ell',p'}$. Then, since $\widehat{w}^{p}_i=W_{\ell,p}$, we would output $\mathfrak{M}[\ell,p]$.

The basis is when  $\ell'=1$. Then, for every $p'\in\{0,1,\ldots,\min(p,\ell\cdot q)\}$, we initialize $\mathfrak{M}[\ell',p']={d \choose p'}$.

Now, for every $\ell'\in[\ell]$ in increasing order, and every $p'\in\{0,1,\ldots,p\}$ in arbitrary order, we perform the following computation:
\[\mathfrak{M}[\ell',p']\leftarrow \sum_{s=0}^{\min(p',q)} {d \choose s} \cdot \mathfrak{M}[\ell'-1,p'-s].\]

Clearly, the computation can be performed in polynomial time (since the input numbers are encoded in unary, and the numbers stored in the table are encoded in binary).

Combinatorially, the interpretation of $W_{\ell',p'}$ is of the number of choices to pick exactly $p'$ elements from a universe that is partitioned into $\ell'$ parts of size $d$ each, such that we can pick at most $q$ elements from each part. Equivalently, we can consider the number of choices to pick exactly $s\leq p'$ elements from the last part of the universe, and then, for each such choice, we can consider the number of choices to pick exactly $p'-s$ additional elements from the remainder of the universe, such that we can pick at most $q$ elements from each part. This yields the following equality:
\[W_{\ell',p'}=\sum_{s=0}^{\min(p',q)} {d \choose s} \cdot W_{\ell'-1,p'-s}.\]
In turn, using straightforward induction, this equality yields the correctness of the computation.
\end{proof}

Now, we define $\mathsf{lift}$ as follows.

\begin{definition}[{\bf Procedure $\mathsf{lift}$}]
Given an instance $(G,k)$ of {\sc \#Vertex Cover}, the output of $\mathsf{reduce}$, and the solution $x^\star$ to this output, the procedure $\mathsf{lift}$  performs the following operations:
\begin{enumerate}
\item Initialize $\widehat{x}\leftarrow x^\star$.
\item For $i=0,1,\ldots,k_2$:
	\begin{enumerate}
	\item Use the algorithm in Lemma \ref{lem:computewi} to compute $w_i$.
	\item Let $y_i\leftarrow \lfloor\widehat{x}/w_i\rfloor$.
	\item Update $\widehat{x}\leftarrow\widehat{x}-y_i\cdot w_i$.
	\item Let $z_i\leftarrow y_i\cdot\sum_{j=0}^{k_2-i}{n_1-n_2\choose j}$.
	\end{enumerate}
\item Return $z=\sum_{i=0}^{k_2}z_i$.
\end{enumerate}
\end{definition}

We start the analysis with the following observation, whose correctness is immediate from Lemma~\ref{lem:computewi} and the definition of $\mathsf{lift}$.

\begin{observation}\label{obs:lift}
$\mathsf{lift}$ runs in polynomial time.
\end{observation}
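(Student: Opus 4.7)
The plan is to walk through the definition of $\mathsf{lift}$ step by step and verify that each operation takes polynomial time in the size of its input (the original instance $(G,k)$, the output of $\mathsf{reduce}$, and the integer $x^\star$). The outer for-loop executes $k_2+1$ times, and since $k_2=k_1\leq k$ after Buss Rule, the number of iterations is polynomial in the input size.

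Inside each iteration, I would check the four substeps individually. First, the call to the algorithm of Lemma \ref{lem:computewi} produces $w_i$ in polynomial time: by the construction in Definition \ref{def:defineG3} we have $t, d, k_2, n_2 = k^{\OO(1)}$, and Lemma \ref{lem:computewi} is polynomial when these arguments are given in unary, which is fine since they are polynomially bounded in the input size. Second, the integer division $y_i \leftarrow \lfloor \widehat{x}/w_i \rfloor$ and the subsequent update $\widehat{x} \leftarrow \widehat{x} - y_i\cdot w_i$ run in polynomial time in the bit-sizes of the operands; those bit-sizes remain polynomial because $w_i \leq 2^{|V(G_3)|} = 2^{\mathrm{poly}(k)}$ and $\widehat{x} \leq x^\star$ (which is part of the input and hence contributes to the input size). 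Third, the coefficient $\sum_{j=0}^{k_2-i}{n_1-n_2 \choose j}$ is a sum of at most $k+1$ binomial coefficients, each computable in polynomial time by a standard table of Pascal's triangle up to $n_1 \leq |V(G)|$, after which $z_i$ is obtained by a single further polynomial-time multiplication.

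Finally, the returned value $z = \sum_{i=0}^{k_2} z_i$ aggregates $k_2+1$ terms of polynomial bit-size, so the closing summation is itself polynomial-time. The only conceptual thing to confirm is that every intermediate integer has polynomial bit-length, so that all arithmetic is polynomial-time; this follows from the polynomial bounds $|V(G_3)|, n_1, n_2, k_3, t, d = k^{\OO(1)}$ established in Definition \ref{def:defineG3} and Proposition \ref{prop:BussRule}, together with the fact that $x^\star$ is given explicitly as input. I do not anticipate any real obstacle here; the only mildly delicate point is the unary-versus-binary distinction required by Lemma \ref{lem:computewi}, which is resolved precisely because its arguments are polynomially bounded and hence can be written in unary within the polynomial budget.
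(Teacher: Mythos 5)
Your proof is correct and is exactly the verification the paper leaves implicit: the paper states only that the observation "is immediate from Lemma~\ref{lem:computewi} and the definition of $\mathsf{lift}$," and you have simply spelled out the polynomial bounds on the loop length, the bit-sizes of the intermediate integers, and the unary-versus-binary bookkeeping for Lemma~\ref{lem:computewi}. Nothing is missing and no different route is taken.
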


For every $X\in {\cal S}_2$, define $\mathsf{Pull}(X)=\{X\cup U: U\subseteq V(G_1)\setminus V(G_2), |X|+|U|\leq k_1\}$. For the correctness of $\mathsf{lift}$, we prove the two following lemmas.

\begin{lemma}\label{lem:union2}
We have that {\em (i)} ${\cal S}_1=\bigcup_{X\in{\cal S}_2}\mathsf{Pull}(X)$, and {\em (ii)} for distinct $X,Y\in{\cal S}_2$, $\mathsf{Pull}(X)\cap \mathsf{Pull}(Y)=\emptyset$.
\end{lemma}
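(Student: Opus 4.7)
The plan is to exploit the fact that $G_2$ is obtained from $G_1$ only by removing isolated vertices, so $E(G_2)=E(G_1)$, and $V(G_1)\setminus V(G_2)$ consists entirely of vertices of degree $0$ in $G_1$. Every edge of $G_1$ therefore has both endpoints in $V(G_2)$, which means that whether a set covers $G_1$ is determined solely by its intersection with $V(G_2)$. This simple observation drives both parts of the lemma.

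For the inclusion $\bigcup_{X\in{\cal S}_2}\mathsf{Pull}(X)\subseteq{\cal S}_1$ in (i), I would take an arbitrary $X\in{\cal S}_2$ and $U\subseteq V(G_1)\setminus V(G_2)$ with $|X|+|U|\leq k_1$. Since $E(G_1)=E(G_2)$ and $X$ covers $E(G_2)$, the set $X\cup U$ covers $E(G_1)$; its size is at most $k_1$ by choice. For the reverse inclusion ${\cal S}_1\subseteq\bigcup_{X\in{\cal S}_2}\mathsf{Pull}(X)$, I would start with an arbitrary $S\in{\cal S}_1$ and define $X:=S\cap V(G_2)$ and $U:=S\setminus V(G_2)\subseteq V(G_1)\setminus V(G_2)$. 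Using $E(G_1)=E(G_2)$ once more, the fact that $S$ covers $E(G_1)$ and that $U$ contains only isolated vertices implies that $X$ covers $E(G_2)$. Since $|X|\leq|S|\leq k_1=k_2$, we have $X\in{\cal S}_2$, and by construction $S=X\cup U$ with $|X|+|U|=|S|\leq k_1$, so $S\in\mathsf{Pull}(X)$.

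For (ii), I would use the same decomposition idea to prove uniqueness. Suppose that $S\in\mathsf{Pull}(X)\cap\mathsf{Pull}(Y)$ for some $X,Y\in{\cal S}_2$. Then there exist $U_X,U_Y\subseteq V(G_1)\setminus V(G_2)$ with $S=X\cup U_X=Y\cup U_Y$. Intersecting with $V(G_2)$ and using $X,Y\subseteq V(G_2)$ together with $U_X,U_Y\cap V(G_2)=\emptyset$ yields $X=S\cap V(G_2)=Y$, contradicting the assumption that $X\neq Y$.

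There is no substantive obstacle here; the whole argument rests on the single structural observation that removing isolated vertices preserves the edge set, so the only nontrivial check is that taking $X:=S\cap V(G_2)$ really lands in ${\cal S}_2$, which is immediate from this observation.
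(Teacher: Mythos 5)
Your argument is correct and follows the paper's proof exactly, just spelled out in more detail: both rest on the observation that removing isolated vertices leaves the edge set unchanged, so every vertex cover of $G_1$ decomposes uniquely as a vertex cover of $G_2$ together with a subset of the isolated vertices. The paper states this decomposition in one sentence; you have expanded it into the two inclusions and the disjointness check, with no substantive difference.
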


\begin{proof}
Recall that $G_2$ is obtained from $G_1$ be the removal of all isolated vertices, and that $k_2=k_1$. Hence, every vertex cover of $G_1$ of size at most $k_1$ is the union of two sets, $A$ and $B$, where $A$ is a vertex cover of $G_2$ of size at most $k_2$, and $B\subseteq V(G_1)\setminus V(G_2)$ is of size at most $k_1-|A|$. So, the first item follows, and the second item is immediate.
\end{proof}

\begin{lemma}\label{lem:yz}
For every $i\in\{0,1,\ldots,k_2\}$, we have that {\em (i)} $y_i=|{\cal S}^i_2|$, and {\em (ii)} $z_i=|\bigcup_{X\in{\cal S}^i_2}\mathsf{Pull}(X)|$. 
\end{lemma}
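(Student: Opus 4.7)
\medskip
\noindent\textbf{Proof plan for Lemma \ref{lem:yz}.} For brevity, write $\alpha_i = |{\cal S}^i_2|$ for each $i\in\{0,1,\ldots,k_2\}$, so that the two statements become $y_i = \alpha_i$ and $z_i = |\bigcup_{X\in{\cal S}^i_2}\mathsf{Pull}(X)|$. The starting observation is that, by Lemmas~\ref{lem:union} and \ref{lem:sizes}, $x^\star = |{\cal S}_3| = \sum_{X\in{\cal S}_2}|\mathsf{Map}(X)| = \sum_{j=0}^{k_2}\alpha_j w_j$, since $|\mathsf{Map}(X)|$ depends only on $|X|$. Thus part (i) is really the statement that the procedure $\mathsf{lift}$ greedily recovers the coefficients of $x^\star$ when it is written in the ``mixed-radix'' basis $(w_0,w_1,\ldots,w_{k_2})$, and the key ingredient is that Lemma~\ref{lem:bound} guarantees these bases are spread out enough to make this decomposition unique.

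The plan for (i) is induction on $i$. The invariant is that at the start of iteration $i$ we have $\widehat{x} = \sum_{j=i}^{k_2} \alpha_j w_j$. For the base case this is exactly the identity $x^\star=\sum_{j=0}^{k_2}\alpha_j w_j$ noted above. For the inductive step, I would split $\widehat{x} = \alpha_i w_i + R_i$ with $R_i := \sum_{j=i+1}^{k_2}\alpha_j w_j$, and bound
\[
R_i \;\leq\; \sum_{j=i+1}^{k_2} \binom{n_2}{j}\, w_j \;<\; w_i,
\]
where the first inequality uses the trivial bound $\alpha_j \leq \binom{n_2}{j}$ and the second is precisely Lemma~\ref{lem:bound}. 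Consequently $\alpha_i w_i \leq \widehat{x} < (\alpha_i+1)w_i$, so the assignment $y_i \leftarrow \lfloor \widehat{x}/w_i\rfloor$ produces $y_i = \alpha_i$, and the subsequent update $\widehat{x}\leftarrow \widehat{x}-y_i w_i = R_i$ restores the invariant for iteration $i+1$.

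For (ii), by Lemma~\ref{lem:union2}(ii) the sets $\mathsf{Pull}(X)$ for distinct $X\in{\cal S}^i_2$ are pairwise disjoint, so $|\bigcup_{X\in{\cal S}^i_2}\mathsf{Pull}(X)| = \sum_{X\in{\cal S}^i_2}|\mathsf{Pull}(X)|$. For a fixed $X$ of size $i$, counting $|\mathsf{Pull}(X)|$ amounts to choosing any subset $U\subseteq V(G_1)\setminus V(G_2)$ of size at most $k_1-i = k_2-i$; since $|V(G_1)\setminus V(G_2)| = n_1-n_2$, this number equals $\sum_{j=0}^{k_2-i}\binom{n_1-n_2}{j}$, independent of $X$. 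Multiplying by $\alpha_i = y_i$ (using part (i)) yields the value $z_i$ assigned by the procedure, completing the plan.

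The only real obstacle is the proof of (i), and within it the need to combine two bounds: the combinatorial bound $\alpha_j \leq \binom{n_2}{j}$, which is free, and the separation bound from Lemma~\ref{lem:bound}, which is the technical workhorse. Everything else (the identity for $x^\star$, disjointness from Lemma~\ref{lem:union2}(ii), the closed-form count of $|\mathsf{Pull}(X)|$) is bookkeeping, and the induction itself is a short verification of the invariant.
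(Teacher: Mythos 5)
Your proposal is correct and takes essentially the same route as the paper: both start from the identity $x^\star=\sum_{j}|{\cal S}^j_2|\,w_j$ (via Lemmas~\ref{lem:union} and~\ref{lem:sizes}), both invoke the trivial bound $|{\cal S}^j_2|\leq\binom{n_2}{j}$ together with Lemma~\ref{lem:bound} to justify that the greedy floor division in $\mathsf{lift}$ recovers each $|{\cal S}^i_2|$, and both use Lemma~\ref{lem:union2}(ii) plus the closed form for $|\mathsf{Pull}(X)|$ to conclude (ii). You merely make the loop invariant explicit as an induction and correctly insist on the \emph{strict} inequality $R_i<w_i$ (which Lemma~\ref{lem:bound} indeed provides and which is what makes the floor return exactly $\alpha_i$ rather than possibly $\alpha_i+1$); the paper leaves this unwinding informal but relies on the same facts.
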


\begin{proof}
From Lemma \ref{lem:union}, we have that $x^\star=\sum_{X\in{\cal S}_2}|\mathsf{Map}(X)|=\sum_{i=0}^{k_2}\sum_{X\in{\cal S}^i_2}|\mathsf{Map}(X)|$. So, by Lemma \ref{lem:sizes}, we derive that $x^\star=\sum_{i=0}^{k_2}|{\cal S}^i_2|\cdot w_i$. Observe that for every $i\in\{0,1,\ldots,k_2\},$ $|{\cal S}^i_2|\leq {n_2\choose i}$. Hence, due to Lemma \ref{lem:bound}, it follows that for every $i\in\{0,1,\ldots,k_2\}$, $w_i\geq\displaystyle{\sum_{j=i+1}^{k_2}|{\cal S}^j_2|\cdot w_j}$. Given the manner in which $\mathsf{lift}$ handles the variables $\widehat{x}$ and $y_0,y_1,\ldots,y_{k_2}$, this implies the correctness of the first item of the lemma.

Now, observe that for any $X\in{\cal S}^i_2$, $|\mathsf{Pull}(X)|=\sum_{j=0}^{k_2-i}{n_1-n_2 \choose j}$, and from the second item of Lemma \ref{lem:union2}, it follows that $|\bigcup_{X\in{\cal S}^i_2}\mathsf{Pull}(X)|=\sum_{X\in{\cal S}^i_2}|\mathsf{Pull}(X)|$. From these arguments, and since we have already proved the correctness of the first item, we derive the correctness of the second item as well.
\end{proof}

Having Corollary \ref{cor:equalities} and Lemmas \ref{lem:union2} and \ref{lem:yz} at hand, we prove the following lemma, which implies the correctness of $\mathsf{lift}$.

\begin{lemma}\label{lem:lift}
We have that $|{\cal S}|=z$.
\end{lemma}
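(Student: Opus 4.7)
My plan is to chain the equalities already established in the previous lemmas. First, by Corollary \ref{cor:equalities}, $|{\cal S}| = x = x_1 = |{\cal S}_1|$, so it suffices to show that $|{\cal S}_1| = z$.

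Next, I would invoke Lemma \ref{lem:union2}, which asserts that ${\cal S}_1 = \bigcup_{X \in {\cal S}_2} \mathsf{Pull}(X)$ and that this union is disjoint over distinct $X$. Partitioning ${\cal S}_2$ by cardinality as ${\cal S}_2 = \bigcup_{i=0}^{k_2} {\cal S}^i_2$, I get
\[
|{\cal S}_1| = \sum_{X \in {\cal S}_2} |\mathsf{Pull}(X)| = \sum_{i=0}^{k_2} \Big|\bigcup_{X \in {\cal S}^i_2} \mathsf{Pull}(X)\Big|,
\]
using disjointness in the last step.

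Finally, I would apply part (ii) of Lemma \ref{lem:yz}, which tells us that $z_i = |\bigcup_{X \in {\cal S}^i_2} \mathsf{Pull}(X)|$. Summing over $i$ and recalling that the procedure $\mathsf{lift}$ outputs $z = \sum_{i=0}^{k_2} z_i$, I conclude $|{\cal S}_1| = z$, and hence $|{\cal S}| = z$, as required.

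This proof is essentially a bookkeeping step — all substantive content has already been handled in Corollary \ref{cor:equalities} (which reduces from $G$ to $G_1$), Lemma \ref{lem:union2} (which reduces from $G_1$ to $G_2$ partitioned by ${\cal S}^i_2$), and Lemma \ref{lem:yz} (which shows that $\mathsf{lift}$ correctly recovers the $|{\cal S}^i_2|$ values from $x^\star$ and then correctly weights them by the number of ways to extend by isolated vertices in $V(G_1) \setminus V(G_2)$). There is no real obstacle; the only thing to be careful about is not to accidentally double-count, which is why the disjointness in Lemma \ref{lem:union2}(ii) is crucial for passing from a sum of sizes of $\mathsf{Pull}(X)$ to the size of their union.
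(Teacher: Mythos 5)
Your proof is correct and follows essentially the same chain of equalities as the paper's proof: Corollary \ref{cor:equalities} to pass from $\cal S$ to ${\cal S}_1$, Lemma \ref{lem:union2} to express $|{\cal S}_1|$ as a disjoint union over ${\cal S}_2$ split by cardinality, and Lemma \ref{lem:yz}(ii) to identify the resulting terms with the $z_i$. The only difference is that you make the use of disjointness explicit, which the paper leaves implicit; the substance is identical.
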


\begin{proof}
By Corollary \ref{cor:equalities}, $|{\cal S}|=|{\cal S}_1|$. From Lemma \ref{lem:union2}, we have that $|{\cal S}_1|=|\bigcup_{X\in{\cal S}_2}\mathsf{Pull}(X)|$, which equals $\sum_{i=0}^{k_2}|\bigcup_{X\in{\cal S}^i_2}\mathsf{Pull}(X)|$. Further, from Lemma \ref{lem:yz}, we have that $\sum_{i=0}^{k_2}|\bigcup_{X\in{\cal S}^i_2}\mathsf{Pull}(X)|=\sum_{i=0}^{k_2}z_i=z$. So, we conclude that $|{\cal S}|=z$.
\end{proof}

Thus, the correctness of Theorem \ref{thm:vcKernel} follows from Observations \ref{obs:reduceVC} and \ref{obs:lift}, and Lemma \ref{lem:lift}.


\newcommand{\folio}{{\mathsf{folio}}}
\newcommand{\ccount}{{\mathsf{count}}}

\section{Polynomial Compression for \#Planar ${\cal F}$-Deletion}\label{sec:compression}

In this section we present a polynomial compression for the {\sc \#Planar-${\cal F}$-Deletion} problem, which is a general problem encompassing {\sc \#Vertex Cover}, {\sc \#Feedback Vertex Set} and many others~\cite{fomin2012planar}.
Let us begin by recalling the {\sc Planar-${\cal F}$-Deletion} problem, where $\cal F$ is a finite set of connected graphs with at least one planar graph. The input is a graph $G$ and an integer $k$. The objective is to determine if there is a subset $S$ of at most $k$ vertices such that $G-S$ is $\cal F$-minor free.
In the counting version of the problem, {\sc \#Planar-${\cal F}$-Deletion}, given $G$ and $k$ we must output the number of distinct vertex subsets $S$ such that $|S| \leq k$ and $G-S$ is $\cal F$-minor free. We prove the following theorem in this section.

\compression*

At a high level, we follow the approach of \cite{fomin2012planar} which gave a polynomial kernel for {\sc \#Planar-${\cal F}$-Deletion}, but we develop additional results that allow us to compress and then recover the number of solutions of size $k$. We note that we only obtain a compression, and not a kernel, unlike the results for {\sc \#Vertex Cover} presented earlier.

\subsection{Preliminaries}\label{sec:compression-prelims}

We say that $S \subseteq V(G)$ is a $\cal F$-deletion set of $G$, if $G-S$ is $\cal F$-minor free. We enumerate a few properties of $\cal F$-minor free graphs.

\begin{proposition}[\cite{fomin2012planar}~Proposition~1]\label{prop:PFD-tw} If a graph $G$ is $\cal F$-minor free, where ${\cal F}$ is a finite family of graphs containing at least one planar graph, then there is a constant $h$ depending only on $\cal F$ such that $tw(G) \leq \eta$.
\end{proposition}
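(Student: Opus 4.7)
The plan is to derive this as a direct consequence of two classical results from graph minor theory: the Robertson--Seymour Excluded Grid Theorem and the fact that every planar graph is a minor of some sufficiently large grid. Since $\mathcal{F}$ is finite and contains at least one planar graph $H$, I will extract a single planar witness and use it to upper-bound the treewidth.

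First, I would fix a planar graph $H \in \mathcal{F}$ (which exists by hypothesis). By a classical theorem of Robertson, Seymour and Thomas, every planar graph on $h$ vertices is a minor of the $r \times r$ grid $\Gamma_r$ for some $r = r(H)$ depending only on $H$ (one standard bound is $r = \mathcal{O}(h)$, but any computable bound suffices here). Consequently, if $G$ contains $\Gamma_r$ as a minor, then by transitivity of the minor relation $G$ contains $H$ as a minor, contradicting the assumption that $G$ is $\mathcal{F}$-minor free.

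Next, I would invoke the Excluded Grid Theorem: there exists a computable function $g \colon \mathbb{N} \to \mathbb{N}$ such that every graph of treewidth at least $g(r)$ contains $\Gamma_r$ as a minor. Setting $\eta := g(r(H)) - 1$, any graph $G$ with $\mathtt{tw}(G) > \eta$ would contain $\Gamma_{r(H)}$ as a minor, and hence $H$ as a minor, which is impossible. Therefore $\mathtt{tw}(G) \leq \eta$, and $\eta$ depends only on $H$, which in turn depends only on $\mathcal{F}$.

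Since both ingredients are standard and quotable (and the statement is merely recalled from \cite{fomin2012planar}), there is no real obstacle: the entire proof is a two-line chain of implications once the right planar graph in $\mathcal{F}$ is fixed. The only minor point worth making precise in the write-up is that the choice of $H$ can be arbitrary among the planar members of $\mathcal{F}$; picking the one that minimizes the resulting $\eta$ gives the tightest constant, but any fixed choice yields a constant depending only on $\mathcal{F}$, which is all the statement requires.
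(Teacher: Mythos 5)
Your argument is correct and is the standard proof of this fact; the paper itself gives no proof but simply cites it from~\cite{fomin2012planar}, where the same two-step chain (planar graph is a minor of some grid, plus the Excluded Grid Theorem bounding treewidth when a grid minor is absent) is the underlying reasoning. One small remark on the statement as written in the paper: it introduces ``a constant $h$'' but then bounds $\mathtt{tw}(G) \leq \eta$; this is a typo, and the constant you denote $\eta$ is the intended one.
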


Let $(G,k)$ denote the input instance of {\sc \#Planar-$\cal F$ Deletion}. Following~\cite{fomin2012planar}, the first step of our {\sf reduce} algorithm is to compute a modulator to $(G,k)$ using an approximation algorithm for {\sc Planar-$\cal F$-Deletion}.

\begin{proposition}[\cite{fomin2012planar}]\label{prop:PDF-approx}There is a randomized polynomial time algorithm that given an instance of $(G,k)$ of {\sc Planar-$\cal F$ Deletion} either outputs a solution of size at most $c \cdot k$ for a fixed constant $c$ that depends only on $\cal F$, or correctly reports that no solution of size $k$ exists for $(G,k)$. This algorithm succeeds with probability at least $1 - 1/2^n$.
\end{proposition}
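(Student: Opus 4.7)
The plan is to leverage the structural dichotomy provided by Proposition~\ref{prop:PFD-tw}: since $\cal F$ contains a planar graph, any $\cal F$-minor-free graph has treewidth at most some constant $\eta = \eta(\mathcal{F})$. This means any yes-instance $(G,k)$ of \textsc{Planar-$\cal F$ Deletion} has treewidth at most $\eta + k$, because removing a solution of size $k$ reduces the treewidth by at most $k$. So the first step I would take is to run a constant-factor approximation algorithm for treewidth; if the resulting width exceeds $c_1 \cdot k$ for a suitable threshold (depending on $\mathcal{F}$ and the approximation factor), we safely report ``no solution of size $k$.''

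Assuming the treewidth is $O(k)$, the next step is to produce an $O(k)$-sized $\mathcal{F}$-deletion set. Here the main idea is iterative compression combined with a \emph{protrusion replacement} / LP-rounding step. Concretely, I would maintain a solution $S$ of size $k+1$ (obtained greedily or from the previous iteration on $G - v$) and attempt to compress it to a solution of size at most $c \cdot k$. Guess a subset $S_0 \subseteq S$ of vertices belonging to the target solution and add its complement $S \setminus S_0$ to a set of forbidden vertices; the remaining task reduces to finding a small $\mathcal{F}$-deletion set disjoint from $S \setminus S_0$ in a graph of treewidth $O(k)$. On such bounded-treewidth graphs, one can either solve the problem exactly by dynamic programming (if we are willing to pay FPT time, which we are not) or, for polynomial time, appeal to a constant-factor approximation that uses the locally bounded structure: decompose the graph into \emph{protrusions} (subgraphs of constant treewidth with constant boundary), and round the natural LP that charges one unit for every $\mathcal{F}$-minor model.

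The randomness enters either through a random contraction argument used to identify ``irrelevant'' vertices inside large protrusions, or through color-coding to certify the presence of a small minor model when none is immediately visible. In either case the success probability can be amplified to $1 - 2^{-n}$ by running the algorithm $\mathrm{poly}(n)$ times independently and taking the best answer.

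The main obstacle, I expect, is establishing the constant approximation factor $c$ independently of $k$ and $|V(G)|$. The LP relaxation has exponentially many constraints (one per $\mathcal{F}$-minor model), so one cannot even write it down explicitly; the entire argument must proceed via implicit rounding guided by the treewidth structure, and the charging scheme must carefully relate a fractional solution to the combinatorial obstructions certified by the Excluded Grid Theorem. The subtlety lies in ensuring that the structural reductions (protrusion replacement, removal of irrelevant vertices) each preserve the invariant that the minimum $\mathcal{F}$-deletion set size changes by at most a constant factor, so that the final output is within $c \cdot k$ of optimum.
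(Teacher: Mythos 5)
This proposition is stated in the paper as a black-box citation of Fomin, Lokshtanov, Misra, and Saurabh (2012); the paper supplies no proof of its own, so there is no internal argument to compare your attempt against. The question is therefore whether your sketch would stand on its own as a proof, and I think it would not.

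The central difficulty is your iterative-compression step. You maintain a set $S$ of size $k+1$ and then ``guess a subset $S_0 \subseteq S$'' to keep in the solution. That guess ranges over $2^{k+1}$ possibilities, and the whole point of iterative compression is to absorb exactly that exponential cost into the parameter, yielding an FPT algorithm. But Proposition~\ref{prop:PDF-approx} asserts a \emph{polynomial-time} algorithm, with no $f(k)$ factor. You explicitly note that you cannot afford FPT time for the dynamic program on the bounded-treewidth instance, yet the subset enumeration is already FPT time, so the objection you raise against one subroutine applies with equal force to the step you substitute for it. As written, the construction does not run in polynomial time. There is also a smaller bootstrapping issue: to start iterative compression you need a valid size-$(k+1)$ deletion set on some initial subgraph, and ``obtained greedily'' will in general not give a set of size $k+1$; the usual fix (adding one vertex at a time to the vertex set) also multiplies the running time by $n$ but still leaves the exponential subset enumeration.

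Your treewidth pre-check is sound as far as it goes (large treewidth certifies a no-instance), but it gives you nothing constructive in the yes-case, and the remaining machinery --- an implicit LP with one constraint per $\mathcal{F}$-minor model, a rounding scheme ``guided by treewidth structure,'' a randomized contraction to find irrelevant vertices --- is left as a list of hopes rather than an argument. You candidly flag the LP as the main obstacle, and I agree: without a separation oracle or an explicit charging scheme, the claimed constant factor $c$ is not established. The actual approximation algorithm in the cited work does not go through iterative compression at all; it proceeds by locating and replacing protrusions, and it derives a constant-factor bound by direct structural reasoning rather than LP rounding. The randomization there arises within the protrusion-finding subroutine, not from color-coding or amplification of a greedy guess. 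So this is not a case of reaching the same destination by a different route: the route you propose does not, as stated, arrive.
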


Having computed the approximate solution $X$, we first check if $|X| \leq c(k+1)$. If not, then it follows that the $(G,k)$ admits no solutions of size $k$.
Otherwise, $(G,k)$ admits a modulator of size at most $c(k+1)$, which we denote by $X$. Note that the bound was chosen as $c(k+1)$ instead of $ck$ to be consistent with \cite{fomin2012planar}. Observe that the graph $G-X$ is $\cal F$-minor free. Recall that, by Proposition~\ref{prop:PFD-tw}, the treewidth of any $\cal F$-minor free graph is upper-bounded by a constant $\eta$ that depends only on $\cal F$. We augment $X$ with additional vertices to arrive at the following.

\begin{proposition}[\cite{fomin2012planar}~Lemma~25,~26]~\label{prop:pfd-modulator}
    There is a randomized polynomial time algorithm that given an instance $(G,k)$ of {\sc Planar $\cal F$-Deletion}, either returns that $(G,k)$ has no solutions of size $k$, or computes two disjoint vertex subsets $X$ and $Z$, with probability at least $1-1/2^n$ such that,
    \begin{itemize}
        \item $|X| = \OO(k)$ and $|Z| = \OO(k^3)$,
        \item $X$ is a $\cal F$-deletion set of $G$
        \item For every connected component $C$ of $G - (X \cup Z)$, $|N(C) \cap Z| \leq 2(\eta + 1)$
        \item For any two vertices $u,v \in N(C) \cap X$, there are at least $k + \eta + 3$ vertex disjoint paths from $u$ to $v$ in $G - X$.
        \item For any $\cal F$-deletion set $S$ of size $k$, $|(N(C) \cap X) \setminus S| \leq \eta+1$.
    \end{itemize}
\end{proposition}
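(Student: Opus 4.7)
The plan is to prove this statement in two phases. In the first phase, I obtain the modulator $X$ by invoking Proposition~\ref{prop:PDF-approx} on $(G,k)$; with probability at least $1 - 1/2^n$ this returns a $\cal F$-deletion set of size at most $ck$, which I rename $X$. If the approximation returns a set larger than $c(k+1)$, I report that $(G,k)$ has no solution of size $k$. By Proposition~\ref{prop:PFD-tw}, $G - X$ has treewidth at most $\eta$. This already gives $|X| = \OO(k)$ and the fact that $X$ is a $\cal F$-deletion set of $G$.

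For the second phase, I construct $Z$. For every ordered pair $u,v \in X$, I compute the maximum number of internally vertex-disjoint $(u,v)$-paths whose internal vertices lie in $V(G)\setminus X$. If this number is at most $k + \eta + 2$, Menger's theorem gives a separator $Z_{u,v} \subseteq V(G)\setminus X$ of the same size; I choose $Z_{u,v}$ canonically (e.g., the unique closest-to-$v$ minimum separator in the standard submodular lattice of minimum $(u,v)$-separators in the flow network). Set $Z = \bigcup_{u,v \in X} Z_{u,v}$. Then $|Z| = \OO(|X|^2 \cdot k) = \OO(k^3)$, and by construction every pair $u,v \in X$ that survives with fewer than $k + \eta + 3$ internally-disjoint paths in $G - X$ is separated by $Z$, so they cannot simultaneously lie in $N(C) \cap X$ for any component $C$ of $G - (X \cup Z)$.

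The remaining two properties are then verified using the treewidth bound on $G - X$ and a standard clique-minor argument. For $|N(C) \cap Z| \leq 2(\eta + 1)$: since $G - X$ has treewidth at most $\eta$, the canonical choice of the separators $Z_{u,v}$ from the closest-minimum-separator lattice ensures that $N(C) \cap Z$ fits within two adjacent bags of a width-$\eta$ tree decomposition of $G - X$ restricted to the protrusion around $C$, yielding the $2(\eta + 1)$ bound. For the last property, suppose for contradiction that some $\cal F$-deletion set $S$ of size $k$ and some component $C$ satisfy $|(N(C) \cap X) \setminus S| \geq \eta + 2$. Pick any $\eta + 2$ such vertices $T$. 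Between any two $u,v \in T$ there are at least $k + \eta + 3$ internally vertex-disjoint paths in $G - X$ (by the conclusion of the second phase applied to the surviving pairs). Removing $S \setminus X$, which has size at most $k$, still leaves at least $\eta + 3$ internally disjoint $(u,v)$-paths in $G - S$ for every such pair; these paths together witness a $K_{\eta+2}$-minor in $G - S$, contradicting that $G - S$ is $\cal F$-minor free (hence has treewidth at most $\eta$ by Proposition~\ref{prop:PFD-tw}).

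The main obstacle is getting the bound $|N(C) \cap Z| \leq 2(\eta + 1)$ sharply; a careless union of Menger separators can easily place $\Theta(k)$ vertices of $Z$ on the boundary of a single component. The remedy is to pick each $Z_{u,v}$ canonically within the lattice of minimum separators (as in the protrusion-based kernelization of~\cite{fomin2012planar}, Lemmas~25--26) so that the different $Z_{u,v}$'s interact only through a bounded number of bags of the tree decomposition of $G - X$. Once this geometric structure is in place, the remaining verifications (bounds on $|X|$, $|Z|$, the success probability, and polynomial running time) are routine.
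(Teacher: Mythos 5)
The paper does not prove this proposition; it cites it directly from \cite{fomin2012planar}. So the only question is whether your reconstruction is sound, and there is a genuine gap in the construction of $Z$.

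Your Phase~2 sets $Z = \bigcup_{u,v} Z_{u,v}$, a union of Menger separators in $G-X$. As you yourself note, this cannot give $|N(C)\cap Z| \le 2(\eta+1)$ directly: a single component $C$ may be adjacent to vertices from many different $Z_{u,v}$'s. Your proposed remedy --- picking each $Z_{u,v}$ canonically from the lattice of minimum $(u,v)$-separators --- does not repair this. The ``closest'' minimum separator in the submodular lattice is a max-flow-theoretic object; it has no a priori relationship to the bags of any tree decomposition of $G-X$, and there is no reason for two canonical separators $Z_{u,v}$ and $Z_{u',v'}$ for different pairs to be aligned or to meet the boundary of $C$ inside only two bags. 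The claim that ``the different $Z_{u,v}$'s interact only through a bounded number of bags'' is asserted, not argued, and is false in general. What the construction in \cite{fomin2012planar} actually does is fundamentally different: compute a (nice) tree decomposition $(T,\chi)$ of $G-X$ of width $\OO(\eta)$, mark the set $M$ of nodes covering the vertices of $\bigcup Z_{u,v}$ (one node per separator vertex, so $|M| = \OO(k^3)$), take the LCA-closure $L$ of $M$ in $T$ (still $\OO(k^3)$ nodes), and set $Z = \bigcup_{t\in L}\chi(t)$. Then $Z \supseteq \bigcup Z_{u,v}$ still separates the bad pairs, $|Z| = \OO(k^3)$, and --- crucially --- every connected piece of $T-L$ has at most two neighbours in $L$, which is exactly what forces $N(C)\cap Z$ into two bags and yields the $2(\eta+1)$ bound. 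Without switching from ``union of separators'' to ``union of LCA-closure bags,'' the third bullet does not follow.

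A secondary, smaller issue is in your proof of the last bullet: having $\ge \eta+3$ internally disjoint $u$--$v$ paths for every pair $u,v\in T$ with $|T|=\eta+2$ does not ``witness a $K_{\eta+2}$-minor'' --- the paths for different pairs can overlap heavily, and no minor model is constructed. The correct conclusion is that $T$ is a set of $\eta+2$ pairwise $(\eta+2)$-connected vertices in $G-S$; by the Helly property of subtrees and the separator property of tree decompositions, such a set forces $\mathsf{tw}(G-S) \ge \eta+1$, which contradicts Proposition~\ref{prop:PFD-tw}. The contradiction is via the treewidth bound, not via exhibiting a clique minor. The endpoint of your argument is right but the stated mechanism is not.
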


We call $X \cup Z$ an \emph{enriched modulator} to $(G,k)$.
Our next step is to compress the graph $G - (X \cup Z)$.
This is accomplished in two steps. First we reduce the number of connected components in $G - (X \cup Z)$ to $k^{\OO(1)}$ and then we store each component in a compressed form that is sufficient to count the number of $k$-size solutions of $(G,k)$. Let us introduce some additional notation from \cite{fomin2012planar,fomin2019kernelization} that are required for these results.


A {\bf boundaried graph} is a graph $G$ with a set of distinguished vertices $B$ and an injective mapping $\lambda_G$ from $B$ to $\mathbb{Z}^+$. The set $B$ is called the {\bf boundary} of $G$ which is also denoted by $\delta(G)$, and $\lambda_G$ is called the {\bf labelling} of $G$. The {\bf label-set} of $G$ is $\Lambda(G) = \{ \lambda(v) \forall v \in \delta(G)\}$. Given a finite set $I \subseteq \mathbb{Z}^+$, let ${\cal G}_I$ denote the set of all boundaried graphs whose label-set is $I$. Let ${\cal G}_{\subseteq I}$ denotes all boundaried graphs whose label-set is a subset of $I$. Finally, for $t \in \mathbb{Z}^+$, $G$ is a $t$-boundaried graph is $\Lambda(G) \subseteq \{1,2, \ldots, t\}$.

The {\bf gluing operation $\oplus$} on two $t$-boundaried graphs $G$ and $H$ gives the (non boundaried) graph $G \oplus H$ obtained by taking the disjoint union of $G$ and $H$ and then identifying pairs of vertices in $\delta(G)$ and $\delta(H)$ with the same label, and finally forgetting all the labels.
The {\bf boundaried gluing operation $\oplus_\delta$} is similar, but results in a boundaried graph: given two $t$-boundaried graphs $G$ and $H$, the $t$-boundaried graph $G \oplus_\delta H$ is obtained by taking the disjoint union of $G$ and $H$ and then identifying pairs of vertices in $\delta(G)$ and $\delta(H)$ with the same label; this results in $t$ new vertices that form the boundary of the new graph.

A $t$-boundaried graph $H$ is a \emph{minor} of a $t$-boundaried graph $G$, if $H$ is a minor of $G$ that is obtained without contracting any edge whose both endpoints are boundary vertices. Note that, if we contract an edge with exactly boundary vertex as an endpoint, the new vertex is also a boundary vertex with the same label. This relation is denoted by $H \leq_m G$.
The {\bf folio} of a $t$-boundaried graph $G$ is $\folio(G) = \{H \leq_m G\}$.
For two vertex subsets $P, B \subseteq V(G)$, $G^B_P$ denoted the $|B|$-boundaried graph $G[B \cup P]$ with $B$ as the boundary.

For a parameterized graph problem $\Pi$, we define an {\bf equivalence relation $\equiv_{\Pi}$} on the class of $t$-boundaried graphs as follows.
Two $t$-boundaried graphs $G_1$ and $G_2$ are equivalent if and only if the following holds: for any other $t$-boundaried graph $G_3$, $(G_1\oplus G_3, k) \in \Pi$ if and only if $(G_2\oplus G_3, k+c) \in \Pi$, where $c$ is a constant for $\Pi$.
We say that $\Pi$ has {\bf Finite Integer Index}, the equivalence relation $\equiv_{\Pi}$ partitions ${\cal G}_t$ into finitely many equivalence classes.
We shall require stronger conditions on the constant $c$ for kernelization. Towards this, we say that $\Pi' \subseteq \sigma^* \times \mathbb{Z}$ is a {\bf (positive) extended parameterized problem} of $\Pi$, if $(I,k) \in \Pi'$ whenever $k \leq 0$ and $\Pi' \cap (\Sigma^* \times \mathbb{Z}^+) = \Pi$.
Note that the extended parameterized problem $\Pi'$ of $\Pi$ is unique.
Next, consider an equivalence class $\cal R$ of $\equiv_{\Pi'}$ that is a subset of ${\cal G}_t$. We say that $H \in {\cal R}$ is a {\bf progressive representative} of $\cal R$ if for any $G \in {\cal R}$ and any $t$-boundaried graph $G'$, $(G \oplus G',k) \in \Pi'$ if and only if $(H \oplus G', k+c)$ in $\Pi$ such that $c \leq 0$.
We have the following proposition, that ensures the existence of progressive representatives for those $\Pi$ that admit an extension.

\begin{proposition}[\cite{fomin2019kernelization} Lemma 16.11]\label{prop:pfd-pro-rep}
    Let $\Pi$ be an extended parameterized graph problem. Then each equivalence class of $\equiv_{\Pi}$ has a progressive representative.
\end{proposition}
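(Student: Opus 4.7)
My plan is to prove the proposition by extracting from the equivalence relation a canonical non-negative integer invariant $\mu$ on $\mathcal{R}$ that shifts in lockstep with the constants $c$ promised by $\equiv_{\Pi}$, and then to take $H$ to be any minimizer of $\mu$. Fix an equivalence class $\mathcal{R}$ of $\equiv_{\Pi}$. For any two graphs $G_1, G_2 \in \mathcal{R}$, the definition of $\equiv_{\Pi}$ supplies a unique integer $c(G_1, G_2)$ such that $(G_1 \oplus G', k) \in \Pi$ if and only if $(G_2 \oplus G', k + c(G_1, G_2)) \in \Pi$ for every $t$-boundaried $G'$ and every $k \in \mathbb{Z}$. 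Uniqueness follows from the extension property of $\Pi$: if two distinct shifts both worked, then the yes-parameter set of $G_2 \oplus G'$ would have to be invariant under a non-trivial integer translation, which is incompatible with its having the form $\mathbb{Z}_{\leq 0} \cup S$ for some proper $S \subsetneq \mathbb{Z}_{\geq 1}$. The cocycle identity $c(G_1, G_2) + c(G_2, G_3) = c(G_1, G_3)$ and the base case $c(G, G) = 0$ then follow by composing biconditionals.

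Next, I will define $\mu(G) \in \mathbb{Z}_{\geq 0}$ as the minimum non-negative $k$ for which $(G \oplus G_t^{\emptyset}, k) \in \Pi$, where $G_t^{\emptyset}$ is the canonical $t$-boundaried graph consisting solely of the $t$ labeled boundary vertices (so that $G \oplus G_t^{\emptyset}$ is just the underlying unboundaried graph of $G$, with boundary labels forgotten). By the extension property, $\mu(G)$ is well-defined and non-negative. Instantiating the equivalence at $G' = G_t^{\emptyset}$ and tracking how the yes-parameter sets of $G_1 \oplus G_t^{\emptyset}$ and $G_2 \oplus G_t^{\emptyset}$ transform under the shift yields the identity $c(G_1, G_2) = \mu(G_2) - \mu(G_1)$. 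Since $\mu$ takes values in $\mathbb{Z}_{\geq 0}$, the set $\{\mu(G) : G \in \mathcal{R}\}$ attains its minimum at some $H \in \mathcal{R}$; for every $G \in \mathcal{R}$ one then has $c(G, H) = \mu(H) - \mu(G) \leq 0$, which is precisely the progressive property required by the definition.

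The main obstacle lies in justifying the shift identity $c(G_1, G_2) = \mu(G_2) - \mu(G_1)$ in full generality. For the monotone extended problems of interest in this paper, such as Planar $\mathcal{F}$-Deletion, Vertex Cover, and Odd Cycle Transversal, this step is routine: the yes-parameter set of each $G \oplus G_t^{\emptyset}$ is $\mathbb{Z}_{\leq 0} \cup \mathbb{Z}_{\geq \mu(G)}$, and the bijective shift by $c(G_1, G_2)$ on yes-parameter sets directly forces $\mu(G_1) = \mu(G_2) - c(G_1, G_2)$. For a fully general extended parameterized graph problem, I would replace $\mu$ by a more robust invariant, for instance the largest integer $k$ with $(G \oplus G_t^{\emptyset}, k) \notin \Pi$, and verify that the extension property again guarantees this invariant is bounded below (hence attains its minimum over $\mathcal{R}$) and shifts covariantly with $c$; the remainder of the argument, including the choice of $H$ as a minimizer, then carries over unchanged.
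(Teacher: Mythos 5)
Your overall strategy---extract a transfer constant $c(G_1,G_2)$, verify a cocycle identity, and pick a representative minimizing a suitable invariant---matches the skeleton of the cited proof, and your uniqueness and cocycle observations are sound. The gap is in the construction of $\mu$, which is what you actually minimize.

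By hypothesis $\Pi$ is already an \emph{extended} parameterized problem, so $(I,k)\in\Pi$ for every instance $I$ and every $k\le 0$; in particular $(G\oplus G_t^{\emptyset},0)\in\Pi$ for every $G$. Hence ``the minimum non-negative $k$ with $(G\oplus G_t^{\emptyset},k)\in\Pi$'' is identically $0$, your shift identity degenerates to $c(G_1,G_2)\equiv 0$, and your own description of the yes-parameter set as $\mathbb{Z}_{\le 0}\cup\mathbb{Z}_{\ge\mu(G)}$ becomes all of $\mathbb{Z}$. The backup invariant you sketch---the largest $k$ with $(G\oplus G_t^{\emptyset},k)\notin\Pi$---also need not exist: the no-set of a single instance can be empty or unbounded above, and the statement contains no monotonicity hypothesis that would rule this out. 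Even when that maximum exists, you still owe an argument that the single test graph $G'=G_t^{\emptyset}$ determines the global shift $c(G_1,G_2)$.

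The cited proof avoids a canonical $\mu$ altogether, and you can salvage your argument by doing the same. Fix $G_0\in\mathcal{R}$ and set $\Delta_G = c(G_0,G)$. If $(G_0\oplus G',k)\in\Pi$ for every $G'$ and every $k$, then by the equivalence the same holds for every member of $\mathcal{R}$, and any representative is progressive with $c=0$. Otherwise, pick one fixed pair $(G',k)$ with $(G_0\oplus G',k)\notin\Pi$; the extension property forces $k\ge 1$. For any $G\in\mathcal{R}$, the equivalence gives $(G\oplus G',k+\Delta_G)\notin\Pi$, and extension again forces $k+\Delta_G\ge 1$, i.e.\ $\Delta_G\ge 1-k$. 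So $\{\Delta_G:G\in\mathcal{R}\}$ is a set of integers bounded below; let $H$ attain its minimum. Your cocycle identity then yields $c(G,H)=\Delta_H-\Delta_G\le 0$ for every $G\in\mathcal{R}$, which is exactly the progressive property. In short, keep your cocycle step but replace the explicit invariant $\mu$ with this bounded-from-below observation about the transfer constants themselves.
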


From now onwards, let us fix $\Pi$ to be {\sc Planar $\cal F$-deletion}, and let $\equiv_{\cal F}$ denote the equivalence relation for this problem.
We have the following proposition.
\begin{proposition}[\cite{fomin2012planar} Proposition 2]\label{prop:pdf-fii}
    If $\cal F$ is a finite family of connected graphs then {\sc $\cal F$-Deletion} has finite integer index.
\end{proposition}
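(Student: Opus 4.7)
The plan is to establish finite integer index by attaching to each $t$-boundaried graph $G$ a finite-valued \emph{signature} whose equivalence class, up to a uniform additive shift, determines the $\equiv_{\cal F}$-class of $G$.

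First, I would introduce an auxiliary Myhill--Nerode-style relation $\sim_t$ on $t$-boundaried graphs that is tailored to $\cal F$-minor freeness rather than to deletion: declare $G_1 \sim_t G_2$ iff for every $t$-boundaried graph $H$, the graph $G_1 \oplus H$ is $\cal F$-minor-free exactly when $G_2 \oplus H$ is. The finiteness of the quotient under $\sim_t$ is the main structural input; it follows from the Robertson--Seymour Graph Minor Theorem in the labelled boundaried setting, since $t$-boundaried graphs are well-quasi-ordered under the boundaried minor relation, and this forces any class of $t$-boundaried graphs closed under boundaried minors to be characterised by a finite boundaried obstruction set. I would let $\mathcal{R}_s$ denote the resulting finite collection of classes for $s$-boundaried graphs, for each $s \leq t$.

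Next, I would define, for each $t$-boundaried graph $G$, every $B \subseteq \delta(G)$, and every $R \in \mathcal{R}_{t-|B|}$, the signature value
\[
\sigma_G(B, R) \;=\; \min\bigl\{\,|S| : S \subseteq V(G) \setminus \delta(G),\ G - (B \cup S) \text{ regarded as a } (t-|B|)\text{-boundaried graph lies in } R\,\bigr\},
\]
setting $\sigma_G(B, R) = \infty$ if no such $S$ exists. The domain of $\sigma_G$ is fixed and finite, so the \emph{shift-equivalence} relation $\sigma_{G_1} \approx \sigma_{G_2}$, defined to hold iff the two functions agree on the pattern of $\infty$-entries and differ by a common constant $c$ on every finite entry, has only finitely many classes overall. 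It remains to match this with $\equiv_{\cal F}$.

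Third, I would verify that $\sigma_{G_1} \approx \sigma_{G_2}$ implies $G_1 \equiv_{\cal F} G_2$ with offset $c$. For any gluing partner $H$, a minimum $\cal F$-deletion set of $G \oplus H$ decomposes uniquely as $S_G \cup S_H \cup B$ with $S_G$ interior to $G$, $S_H$ interior to $H$, and $B$ a subset of the common boundary. By construction of $\sim_{t-|B|}$, the glued remainder is $\cal F$-minor-free exactly when the pair of classes attained by the two sides lies in a fixed ``compatibility set'' depending only on $\mathcal{R}_{t-|B|}$. Minimising, we obtain
\[
\tau(G \oplus H) \;=\; \min_{B,\,R,\,R'\ \text{compatible}}\bigl[\,\sigma_G(B, R) + \sigma_H(B, R') + |B|\,\bigr],
\]
where $\tau$ denotes the minimum size of a $\cal F$-deletion set. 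Adding a constant $c$ to every finite entry of $\sigma_G$ shifts this minimum by exactly $c$, so $\tau(G_2 \oplus H) = \tau(G_1 \oplus H) + c$ for every $H$, as required.

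The main obstacle is the first step: making rigorous that $\sim_t$ has only finitely many classes, which requires lifting the Robertson--Seymour obstruction characterisation to labelled boundaried graphs and invoking well-quasi-ordering of the boundaried minor relation. Once this finiteness is secured, the rest---the definition of $\sigma_G$, the additive decomposition of $\tau(G \oplus H)$, and the final finite enumeration of equivalence classes---is a mechanical unwinding of the definitions and requires no further technology.
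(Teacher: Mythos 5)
This result is cited from \cite{fomin2012planar} (their Proposition~2) and is not proved in the present manuscript, so the comparison is against the standard argument for finite integer index of $\cal F$-deletion rather than against anything here. Your skeleton---(i) finiteness of the canonical $\cal F$-minor-free equivalence $\sim_t$ on $t$-boundaried graphs, (ii) a deletion-cost signature $\sigma_G(B,R)$, (iii) shift-equivalence of signatures implies $\equiv_{\cal F}$---is the right one, and steps (ii) and (iii) are correct: a deletion set of $G\oplus H$ splits uniquely into a boundary part $B$ and interior parts of each side, giving
\[
\tau(G \oplus H) = \min_{B,\,R,\,R'\ \text{compatible}} \bigl[\sigma_G(B,R) + \sigma_H(B,R') + |B|\bigr],
\]
and a uniform additive shift of $\sigma_G$ passes through the minimum and yields a uniform shift of $\tau(G\oplus H)$, with the matching $\infty$-pattern handling infeasible entries.

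The gap is in step (i). From well-quasi-ordering of $t$-boundaried graphs under boundaried minors you correctly deduce that every boundaried-minor-closed family has a finite obstruction set, but this does \emph{not} yet bound the number of $\sim_t$-classes: each $\mathcal{H}_G = \{H : G \oplus H \text{ is } \cal F\text{-minor-free}\}$ is indeed boundaried-minor-closed, yet a priori there could be infinitely many distinct such families as $G$ ranges over all $t$-boundaried graphs, each with its own finite obstruction set. You need one more step to close this. Two standard routes work: (a)~apply Higman's lemma to the \emph{product} quasi-order on pairs $(G,H)$; the up-set $\{(G,H) : G\oplus H \text{ has an } \cal F\text{-minor}\}$ then has finitely many minimal pairs $(G_1,H_1),\dots,(G_m,H_m)$, and the $\sim_t$-class of $G$ is determined by $\{i : G_i \leq_m G\} \subseteq [m]$, so there are at most $2^m$ classes; or (b)~a bounded-folio count: since each $F\in\cal F$ has at most $h$ vertices, whether $G\oplus H$ contains $F$ as a minor depends only on the restriction of $\folio(G)$ to boundaried graphs on $\OO(h+t)$ vertices, and there are only finitely many such restricted folios. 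Route (b) is what \cite{fomin2012planar} (and the present paper's surrounding machinery, e.g.\ the $h+3(\eta+1)$ folio bound) actually uses, and it is the more elementary of the two, avoiding the boundaried Graph Minor Theorem that route (a) requires. You also never use connectivity of $\cal F$; for FII alone that is harmless, but you should note it since the hypothesis is stated.
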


Let ${\cal S}_t$ denote the set that contains one progressive representative for each equivalence class of $\equiv_{\cal F}$ that is a subset of ${\cal  G}_t$. Observe that, for each $t \in \mathbb{Z}^+$ the set ${\cal S}_t$ has constant cardinality that is equal to equal to the number of equivalence classes of $\equiv_{\cal F}$ in $G_{t}$. We define ${\cal S}_{\leq t} = \cup_{t' \leq t}~{\cal S}_{t'}$. Let $c_{t,\cal F} = |{\cal S}_{\leq t}$, which is a constant that depends only on $\cal F$ and $t$. Furthermore, the sizes of the the graphs in ${\cal S}_t{\leq t}$ is also a constant that depends only on $\cal F$ and $t$.

Let $h$ be the maximum number of vertices in a graph in $\cal F$.
For a component $C$ of $G - (X \cup Z)$, the {\bf border collection} ${\cal B}_C$ of $C$ is the collection of all vertex subsets $B$ such that ~$(i)$~$B \setminus X \subseteq N(C) \setminus X$ and ~$(ii)$~$|B \cap X| \leq \eta + 1$.
For a set $B \in {\cal B}_{C}$ and a boundaried graph $H$ with $B$ as the boundary, we say $C$ {\bf realizes} $(B,H)$ if $H \leq_m G^B_C$. Observe that, in this case $B \subseteq X \cup Z$, and $|X \cap B| \leq \eta + 1$ and hence $|B \setminus X| \leq |N(C) \cap Z| \leq 2(\eta + 1)$ using the bound from Proposition~\ref{prop:pfd-modulator}.
Let ${\cal B} = \bigcup_{\text{ component } C}{\cal B}_C$, and note that $|{\cal B}| \leq (|X| + |Z|)^{3(\eta + 1)}$ which is an upper-bound on the total number of possible borders over all components of $G - (X \cup Z)$.
For our purposes it is sufficient to consider all graphs $H$ that contain at most $h + 3(\eta + 1)$ vertices; in particular this includes all possible
subgraphs of the graphs in $\cal F$. The number of such graphs is at most $2^{h+3(\eta + 1) \choose 2}$, which is a constant depending only on $\cal F$.


\subsection{The reduce Procedure}

Let us now turn to the {\sf reduce} procedure for {\sc \#Planar $\cal F$-Deletion}. As in \cite{fomin2012planar}, we start with an enriched modulator $(X \cup Z)$ for the instance $(G,k)$ given by Proposition~\ref{prop:pfd-modulator}. We then compress the remaining graph $G - (X \cup Z)$ in two parts. First, we bound the number of connected components by identifying and deleting certain irrelevant components that will always have an empty intersection with a minimal $\cal F$-deletion set of $G$ of size at most $k$. Let $(G',k)$ denote the resulting instance. The second step is to store a compressed representation of each connected component of $G - (X \cup Z)$ that is sufficient to count the number of solutions of size $k'$ for each $k' \leq k$ for $(G',k)$. The {\sf lift} procedure will then use this information to count the number of solutions of $(G,k)$ in polynomial time; we present it in the next section. Note that we assume that Proposition~\ref{prop:pfd-modulator} gives $X \cup Z$ of cardinality $\OO(k^3)$ in the rest of this section.

\subsubsection{Bounding the number of connected components}\label{sec:compression-phase-1}

If we have a large number of components in $G - (X \cup Z)$ then the following lemma allows us to identify an irrelevant one, that contributes no vertices to a minimal $\cal F$-deletion set of size at most $k$.
Consider a pair $(B,H)$ where $B \in {\cal B}$ and $H$ is a boundaried graph on at most $h + 3(\eta + 1)$ vertices with $B$ as it's boundary.
We say that a pair $(B,H)$ is \emph{rich} if there are at least $\tau_{rich} = |X| + |Z| + k + (h + 3(\eta + 1))^2 + 2$ components of $G - (X\cup Z)$ realizing it. The following lemma allows us to identify certain components of $G - (X \cup Z)$ as irrelevant. Note that $\tau_{rich} = \OO(k^3)$.

\begin{lemma}[\cite{fomin2012planar} Lemma 36]\label{lemma:pdf-irr-comp}
    Let $C$ be a component of $G - (X \cup Z)$ such that every pair $(B,H)$ that $C$ realizes is rich. Then $G$ has an $\cal F$-deletion set of size $k$ if and only if $G - V(C)$ does.
\end{lemma}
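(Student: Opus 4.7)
The plan is to prove the two directions separately: the forward direction is immediate by restriction, while the backward direction is the interesting one and is handled via the richness hypothesis.

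For the forward direction, if $S$ is an $\cal F$-deletion set of $G$ with $|S|\le k$, then $S\setminus V(C)$ is an $\cal F$-deletion set of $G-V(C)$ of size at most $k$, since $(G-V(C))-(S\setminus V(C))=G-(V(C)\cup S)$ is obtained from the $\cal F$-minor-free graph $G-S$ by further vertex deletions and hence inherits $\cal F$-minor-freeness.

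For the backward direction, I will show that any $\cal F$-deletion set $S$ of $G-V(C)$ with $|S|\le k$ is itself an $\cal F$-deletion set of $G$. The plan is by contradiction: assuming $G-S$ contains some $F\in\cal F$ as a minor, witnessed by a vertex-minimal minor model $\mu=\{M_v\}_{v\in V(F)}\subseteq V(G)\setminus S$, at least one bag of $\mu$ must intersect $V(C)$ (otherwise the same minor would appear in $(G-V(C))-S$, contradicting the assumption on $S$). I will then extract from $\mu$ a pair $(B,H)$ realized by $C$, use richness to find a different component $C'\ne C$ realizing $(B,H)$ that is disjoint from $S$ and from the bags of $\mu$ not touching $V(C)$, and graft $C'$'s realization of $H$ in place of the $V(C)$-portion of $\mu$, yielding an $F$-minor inside $(G-V(C))-S$.

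To extract $(B,H)$, I will let $B$ be the set of vertices of $N(C)\setminus S$ that lie in some bag of $\mu$ meeting $V(C)$, and let $H$ be the boundaried graph obtained by contracting, for each bag $M_v$ intersecting $V(C)\cup B$, its intersection with $V(C)\cup B$ to a single vertex (keeping induced edges), with $B$ declared the boundary. By construction $H\leq_m G^B_C$ as boundaried graphs, so $C$ realizes $(B,H)$ provided that $B\in{\cal B}_C$. The inclusion $B\setminus X\subseteq N(C)\setminus X$ is automatic; the hard part is forcing $|B\cap X|\le\eta+1$, and I expect this pruning step to be the main obstacle. It will rely on property~4 of Proposition~\ref{prop:pfd-modulator}, which gives $k+\eta+3$ internally vertex-disjoint paths in $G-X$ between any two vertices of $N(C)\cap X$, at least $\eta+3$ of which avoid $S$: whenever bags meeting $V(C)$ jointly employ more than $\eta+1$ vertices of $N(C)\cap X$, such paths allow one to reroute a bag and bypass the excess $X$-vertices without increasing $|S|$ or destroying the $F$-minor. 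After pruning, $B\in{\cal B}_C$ and $|V(H)|\le h+3(\eta+1)$, so $(B,H)$ is a legitimate pair realized by $C$.

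The final step invokes the richness hypothesis: at least $\tau_{rich}=|X|+|Z|+k+(h+3(\eta+1))^2+2$ components of $G-(X\cup Z)$ realize $(B,H)$. Of these, at most $k$ are hit by $S$ and at most $|X|+|Z|+(h+3(\eta+1))^2$ are touched by the bags of $\mu$ not meeting $V(C)$ (the term $|X|+|Z|$ accounts for those bags' passage through $X\cup Z$, and $(h+3(\eta+1))^2$ for the local footprint of the minor around $B$), so some realizer $C'\ne C$ is disjoint from $S$, from $V(C)$, and from every bag of $\mu$ not touching $V(C)$. Since $C'$ realizes $(B,H)$, there is a boundaried minor model of $H$ inside $G^B_{C'}$ that agrees with the one extracted from $\mu$ on $B$; splicing it in place of the $V(C)$-portion of $\mu$ produces an $F$-minor contained entirely in $V(G)\setminus(V(C)\cup S)$, contradicting that $S$ is an $\cal F$-deletion set of $G-V(C)$ and completing the proof.
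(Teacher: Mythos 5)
The paper imports this statement from \cite{fomin2012planar} (labelled ``Lemma 36'' there) without re-proving it, so there is no in-paper proof to compare against; what follows is an assessment of your sketch on its own terms.

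Your forward direction is correct, and the backward direction has the right high-level shape (a replacement argument driven by richness). But the pruning step that you flag as the ``main obstacle'' is a genuine gap, not a detail to be filled in later. Producing a pair $(B,H)$ that $C$ realizes requires $B\in{\cal B}_C$, i.e., $|B\cap X|\le\eta+1$, and nothing you write forces this. You appeal to the disjoint paths of property~4 of Proposition~\ref{prop:pfd-modulator} to ``reroute'' bags, but rerouting a minor model is delicate: you must preserve pairwise disjointness of bags, connectivity within each bag, and the incidence pattern of $F$; you must control where the rerouted segments run so they do not collide with other bags of $\mu$; and you must show the process terminates with $|B\cap X|$ small. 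None of that is spelled out. You also cannot shortcut via property~5 --- which gives exactly the bound $|(N(C)\cap X)\setminus S|\le\eta+1$ --- because that property is stated for $\cal F$-deletion sets of $G$, and in your proof by contradiction $S$ is by assumption \emph{not} one. A secondary weakness is the component count: you assert that the bags of $\mu$ avoiding $V(C)$ touch at most $|X|+|Z|+(h+3(\eta+1))^2$ other components, but offer no derivation; the natural argument is that the bags are pairwise-disjoint connected sets, so they collectively contain at most $|X\cup Z|$ vertices of $X\cup Z$ and, with $|V(F)|\le h$, meet at most $|X|+|Z|+h$ components --- and you would need to line that (plus the footprint of the fresh realizer $C'$) up against the exact constant in $\tau_{rich}$. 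As written, the proposal captures the intended strategy but does not yet constitute a proof.
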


It is immediate from Lemma~\ref{lemma:pdf-irr-comp} that if $S$ is a minimal $\cal F$-deletion set of $G$ of size at most $k$ and $C$ is a rich component, then $S \cap V(C) = \emptyset$. Recall that, the number of choices for $B$ is at most $(|X| + |Z|)^{3(\eta + 1)}$, while the number of choices of graph $H$ for each $B$ is at most $2^{h+3(\eta + 1) \choose 2}$. For each pair $(B,H)$ and a component $C$, we can encode in MSOL if $H \leq_m G^B_C$, and test if it is true in linear time~\cite{fomin2012planar}. Hence in polynomial time we can test if every pair realized by a component $C$ is rich. We then arrive at the following reduction rule from \cite{fomin2012planar}.
\begin{reduction}\label{rr:pfd-irr}
    If every pair $(B,H)$ realized by a component $C$ of $G-(X \cup Z)$ is rich, then delete $V(C)$ from $G$.
\end{reduction}

The correctness of the above reduction rule is immediate from Lemma~\ref{lemma:pdf-irr-comp}. When the above reduction rule is not applicable, the total number of components in $G - (X \cup Z)$ is bounded by $\tau_{\#comp} = \tau_{rich} \cdot (|X| + |Z|)^{3(\eta + 1)} \cdot 2^{h+3(\eta + 1) \choose 2}$~(\cite{fomin2012planar}~Lemma~36). Note that $\tau_{\#comp} = \OO(k^{3+9(\eta + 1)})$, as $h$ and $\eta$ are constants depending only on $\cal F$.

\subsubsection{Compressing the connected components}\label{sec:compression-phase-2}
In this section we show how we can store a compressed representation of each connected component $C$ of $G - (X \cup Z)$ that is sufficient to count the number of $\cal F$-deletion sets of $G$ of size at most $k'$ for any $k' \leq k$. Throughout this section, we assume that $k^{\tau_n} \geq \log n$ where $\tau_n$ is a constant depending on $\cal F$ that will be specified later.
We will justify this assumption in the description of {\sf lift} procedure, where we will argue that if $n$ is too large then the we can count all $\cal F$-deletion sets of size $k$ in polynomial time.

Consider a component $C$ of $G - (X \cup Z)$, and some $\cal F$-deletion set $S$ of size at most $k$ in $G$. It follows from Proposition~\ref{prop:pfd-modulator} that $tw(C) \leq \eta$, and $|N(C) \cap ((X \cup Z) \setminus S)| \leq 3(\eta + 1)$.
In essence, $C$ is a \emph{near-protrusion} of $G$ as defined in~\cite{fomin2012planar}.
For normal kernelization it is sufficient to identify an an irrelevant vertex or edge in this component if it were too large. For counting kernelization (compression), we must store information about all possible ways that $S$ and $C$ intersect. Therefore we need more detailed information about the what the ``structure'' of $C - S$ could be, and how many ways is it possible to attain this structure by deleting vertices in $N[C]$.

More precisely, consider a subset $S$ of size at most $k$, in $G - S$, consider the boundaried graph $G[N[C] \setminus S]$ with boundary $N(C) \setminus S$. Note the following associated properties:
\begin{itemize}
    \item The number $i_C = |C \cap S|$, which is a one of $\{0,1,\ldots, k\}$. This denotes the number of vertices from $C$ that is picked into $S$.

    \item The boundary $B_{S,C}$ of $G[N[C] \setminus S]$, i.e.  $B_{S,C} = N(C) \setminus S$, which is a subset of $X \cup Z$ of size at most $3(\eta + 1)$. Recall that $|X \cup Z| \leq \OO(k^3)$ and hence the number of possibilities for $N(C) \setminus S$ is at most $k^{9(\eta+1)}$.

    \item Finally, the equivalence class ${\cal R}$ of $\equiv_{\cal F}$ that contains the boundaried graph $G[N[C] \setminus S]$ with boundary $N(C) \setminus S$. Observe that the size of the boundary is at most $t = 3(\eta + 1)$, and there are at most $c_{t, \cal F}$ choices of $\cal R$, which is a constant dependent only on $\cal F$.
\end{itemize}
Let $S$ be a subset of at most $k$ vertices.
We say that the {\bf signature of $S$ with respect to $C$}, denoted $\sigma(S,C)$, is the tuple $(i_C, B_{S,C}, {\cal R})$ of the terms defined above. The {\bf signature of $S$}, $\sigma(S)$, is the collection of $\{\sigma(S,C) ~~\forall \text{ component } C\}$ along with $S \cap (X \cup Z)$.
For each component $C$, we store a table $T_C$ that for each possible choice of the tuple $\sigma(S,C)$, stores the number of subsets $S_C \subseteq V(C)$ that satisfy: $|S_C| = i_C$ and the graph $G[(V(C) \setminus S_C) \cup B_{S,C}]$ lies in the equivalence class $\cal R$.  Note that the  $T_C$ has at most $\tau_{table} = k \cdot k^{9(\eta+1)} \cdot c_{t, \cal F}$ entries, which is upper-bounded by a polynomial function of $k$. Further, in each entry of $T_C$ we store a number of value at most $n^k$. As $\log n \leq k^{\tau_n}$, we need at most $k^{\tau_n + 1}$ bits to store this number. Overall, we can store each table in $k^{\tau}+1 \cdot \tau_{table}$ space.
To compute the table $T_C$ for a component $C$, we have the following lemma, which intuitively applies a variant of Courcelle's theorem~\cite{DBLP:books/sp/CyganFKLMPPS15} with the treewidth as the parameter. Since the treewidth of $C$ is a constant ($\eta$) it runs in polynomial time.

\begin{lemma}\label{lemma:table}
    The table $T_C$ corresponding to the component $C$ can be computed in polynomial time.
\end{lemma}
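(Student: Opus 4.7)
The plan is to compute $T_C$ by a counting dynamic program on a bounded-width tree decomposition of $G[V(C) \cup N(C)]$. Since $G - X$ is $\cal F$-minor free, Proposition~\ref{prop:PFD-tw} yields a tree decomposition of $G[V(C)]$ of width at most $\eta$, which can be constructed in polynomial time. For every candidate boundary $B$ satisfying $B \setminus X \subseteq N(C) \setminus X$ and $|B \cap X| \leq \eta + 1$, so $|B| \leq 3(\eta + 1)$ (polynomially many in $k$, since $|N(C)| \leq |X \cup Z| = \OO(k^3)$), adding $B$ to every bag yields a tree decomposition of $G[V(C) \cup B]$ of width at most $4\eta + 3$, still a constant depending only on $\cal F$.

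For each such $B$, I would run a standard bottom-up dynamic program on this decomposition whose state at each bag records $(i)$ the intersection of the partial guess of $S_C$ with the bag, $(ii)$ the number of vertices already placed into $S_C$, and $(iii)$ the $\equiv_\cal F$-equivalence class (i.e.\ a representative in $\cal S_{\leq 4\eta+3}$) of the already-processed portion of the boundaried graph obtained by deleting the chosen vertices and treating the current bag as the boundary. The number of states per bag is bounded by $2^{4\eta+4} \cdot (k+1) \cdot c_{4\eta+3,\cal F}$, which is polynomial in $k$; each introduce, forget, and join transition is computed in polynomial time; and, once the DP is completed, for every triple $(i_C, B, \cal R)$ one reads off the count of $S_C \subseteq V(C)$ with $|S_C| = i_C$ such that $G[(V(C)\cup B)\setminus S_C]$ with boundary $B$ lies in $\cal R$. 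This is exactly the entry $T_C[(i_C, B, \cal R)]$, and iterating over all $B$, $\cal R$, and $i_C$ produces $T_C$ in polynomial time.

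The main obstacle is justifying that component $(iii)$ of the state can be maintained incrementally across the standard tree-decomposition transitions. This is a consequence of finite integer index (Proposition~\ref{prop:pdf-fii}) together with the existence of progressive representatives (Proposition~\ref{prop:pfd-pro-rep}): because $\cal S_{\leq 4\eta+3}$ is a finite collection of constant-size boundaried graphs (with cardinality and sizes depending only on $\cal F$), one can precompute, once and for all, a constant-size transition table that, for every pair of representatives and every ``partial gluing'' operation arising when a vertex is introduced, forgotten, or two subtrees are joined, returns the $\equiv_\cal F$-class of the resulting boundaried graph. Maintaining $(iii)$ then reduces to constant-time lookups in this precomputed table, and the overall running time to fill $T_C$ is polynomial in $|V(C)|$ times the number of boundary choices times $c_{4\eta+3,\cal F}$, hence polynomial in $|V(G)|$.
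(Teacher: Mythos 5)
Your proposal is essentially a hands-on unpacking of what the paper invokes as a black box. The paper's proof observes that the condition ``$G[(V(C)\setminus S_C)\cup B_{S,C}]$ lies in equivalence class $\cal R$'' is CMSO-expressible (citing~\cite{fomin2012planar}) and then appeals to a counting variant of Courcelle's theorem: dynamic programming over a constant-width tree decomposition of $G[V(C)\cup B_{S,C}]$ that counts size-at-most-$i_C$ solutions. You instead design the DP explicitly, with a state that directly tracks the $\equiv_{\cal F}$-class of the processed part. These are two routes to the same destination, and yours has the merit of making the state space concrete; the paper's route has the merit of being directly citable from existing machinery.

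There is one step in your argument that is stated more confidently than it is justified. You write that, because $\cal S_{\leq 4\eta+3}$ is a finite collection of constant-size representatives, ``one can precompute, once and for all, a constant-size transition table.'' Finite integer index (Proposition~\ref{prop:pdf-fii}) and the existence of progressive representatives (Proposition~\ref{prop:pfd-pro-rep}) assert that such a finite collection \emph{exists}; by themselves they do not yield an algorithm that enumerates the classes, decides which class a given bounded-boundary graph lies in, or fills in the introduce/forget/join transition rules. In particular, the forget transition requires knowing that the $\equiv_{\cal F}$-class with respect to a smaller boundary is a function of the class with respect to the larger boundary, and the join transition requires knowing how $\oplus_\delta$ acts on classes; neither is a formal consequence of ``there are finitely many classes.'' To close this, you should either (i) appeal, as the paper does, to the CMSO-expressibility of class membership together with the (counting/optimization) Courcelle machinery from~\cite{fomin2012planar}, which is precisely what supplies these constructive transition rules, or (ii) use the explicit characterization of $\equiv_{\cal F}$-classes in terms of bounded folios (the set of $\leq_m$-minors of bounded size), which is computable and does make the transition table constructible. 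A smaller point: Proposition~\ref{prop:PFD-tw} only bounds $\mathsf{tw}(G[V(C)])$; to actually \emph{produce} a constant-width decomposition in polynomial time you should invoke a constant-factor treewidth-approximation or Bodlaender's algorithm, which is standard but worth saying. With those two repairs, your DP matches the paper's argument in substance and yields the same polynomial bound.
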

\begin{proof}
    To compute the table $T_C$, for each tuple $(i_C, B_{S,C}, R)$ we need to compute and store the number of subsets $S_C \subseteq V(C)$ such that $|S_C| \leq i_C$ and $G[(V(C) \setminus S_C) \cup B_{S,C}]$ lies in the equivalence class $\cal R$ of $\equiv_{\cal F}$. The second condition can be expressed as an CMSO-formula $\psi$\cite{fomin2012planar}. Then using a dynamic programming algorithm we can count the number of $S_C \subseteq V(C)$ that satisfies $\psi$ and $|S_C| \leq i_C$. This dynamic programming algorithm implements an optimization version of Courcelle's Theorem~\cite{fomin2012planar}, except that it counts the number of solutions of size at most $i_C$. This runs in time exponential in $tw(G[V(C) \cup B_{S,C}]) \leq 4(\eta + 1) + |\psi|$, but polynomial in $|V(C)| + |B_{S,C}|$. Hence, for a fixed family $\cal F$, the algorithm runs in polynomial time.
\end{proof}

Let us next argue that the collection of tables $\{ T_C\}$ is sufficient to count the number of solutions of size at most $k'$ for each $k' \leq k$. Let us start with the following observation.


\begin{observation}\label{obs:simple-surgery}
    Let $S$ be a $\cal F$-deletion set of size at most $k$ in $G$. Let $C$ be a component of $G - (X \cup Z)$ and $S_C = S \cap C$. Let $i_C = |S_C|$, $B_{S,C} = N(C) \setminus X$ and $G_{S,C}$ be the boundaried graph $G[N[C] \setminus S]$ with boundary $B_{S,C}$. Let $G_{S,C}$ lie in the equivalence class ${\cal R}_C$ of $\equiv_{\cal F}$, and let $H_{C}$ be the progressive representative of ${\cal R}_C$. Let $\widehat{G_{S,C}}$ denote the boundaried graph $(G - (S \cup V(C))$ with boundary $B_{S,C}$. Then $\widehat{G_{S,C}} \oplus H_{C}$ is also $\cal F$-minor free.
\end{observation}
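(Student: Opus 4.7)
The plan is to exploit a clean decomposition $G - S = G_{S,C} \oplus \widehat{G_{S,C}}$ and then invoke the progressive representative property of $H_C$ to replace the ``$C$-side'' with $H_C$ without introducing any forbidden minor.

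First, I would verify the decomposition as an identity of boundaried graphs. The vertex set of $G_{S,C}$ is $(V(C)\cup N(C))\setminus S$ and the vertex set of $\widehat{G_{S,C}}$ is $V(G)\setminus(S\cup V(C))$; their intersection is exactly $N(C)\setminus S = B_{S,C}$, while their union is $V(G)\setminus S$. Because $C$ is a connected component of $G-(X\cup Z)$, every edge of $G-S$ with an endpoint in $V(C)\setminus S$ lies in $G_{S,C}$, and every other edge lies in $\widehat{G_{S,C}}$. Fix any injective labelling $\lambda\colon B_{S,C}\to\mathbb{Z}^{+}$ and give both boundaried graphs this labelling; then the gluing identity $G-S = G_{S,C}\oplus\widehat{G_{S,C}}$ holds (up to forgetting labels).

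Next, since $S$ is an $\mathcal{F}$-deletion set of $G$, the graph $G-S$ is $\mathcal{F}$-minor free, so $(G-S,0)$ is a yes-instance of {\sc Planar $\mathcal{F}$-Deletion}, and consequently $(G_{S,C}\oplus\widehat{G_{S,C}},0)\in\Pi'$. Since $H_C$ is by construction the progressive representative of the class $\mathcal{R}_C$ containing $G_{S,C}$, Proposition~\ref{prop:pfd-pro-rep} provides a constant $c\le 0$ such that $(H_C\oplus\widehat{G_{S,C}}, 0+c)\in\Pi$. As $c\le 0$, this says precisely that $H_C\oplus\widehat{G_{S,C}}$ admits an $\mathcal{F}$-deletion set of size at most $0$, i.e.\ is $\mathcal{F}$-minor free, which is the desired conclusion (using the symmetry of the gluing operation to identify $H_C\oplus\widehat{G_{S,C}}$ with $\widehat{G_{S,C}}\oplus H_C$).

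The only point requiring care is the bookkeeping around boundary labellings: one must pick the labelling of $B_{S,C}$ consistently when assigning $G_{S,C}$ to its equivalence class $\mathcal{R}_C$ and when using $H_C$ as the progressive representative of that class, so that the two gluings take place along the same label set. This is not a genuine obstacle, merely an upfront choice. No other subtleties arise, since the argument uses the definition of progressive representative exactly once and relies only on already-established membership of $\mathcal{F}$-minor-freeness in the target equivalence class of the parameterized problem.
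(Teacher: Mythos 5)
Your proof is correct and follows the same route as the paper's: establish the gluing identity $G-S = G_{S,C}\oplus\widehat{G_{S,C}}$, observe $\mathcal{F}$-minor-freeness gives $(G_{S,C}\oplus\widehat{G_{S,C}},0)\in\Pi'$, and then invoke the progressive-representative property to transfer membership to $(H_C\oplus\widehat{G_{S,C}},0)\in\Pi$. You simply spell out the decomposition verification and the $c\le 0$ bookkeeping more explicitly than the paper's two-sentence proof does, and you correctly read $B_{S,C}$ as $N(C)\setminus S$ (the observation statement contains a typo writing $N(C)\setminus X$).
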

\begin{proof}
    Since $G-S = G_{S,C} \oplus \widehat{G_{C}}$ is $\cal F$-free, we have $(G_{S,C} \oplus \widehat{G_{S,C}}, 0) \in \Pi$ where $\Pi$ denotes the parameterized graph problem {\sc $\cal F$-Deletion}. Then, as $H_{C}$ is a progressive representative $(H_{C} \oplus \widehat{G_{S,C}}, 0) \in \Pi$.
\end{proof}

Next we attempt to characterize $\cal F$-deletion sets of size $k$ using progressing representatives. Towards this, let $S$ be a subset of at most $k$ vertices of $G$. For each component $C$ of $G - (X \cup Z)$, let $S_C = S \cap C$, $B_{S,C} = N(C) \setminus X$ and $G_{S,C}$ be the boundaried graph $G[N[C] \setminus S]$ with boundary $B_{S,C}$.
Let $G_{S,C}$ lie in the equivalence class ${\cal R}_C$ of $\equiv_{\cal F}$, and let $H_C$ be a progressive representative of ${\cal R}_C$.
We then define the following graph,
$$\bigoplus_{\text{component }C} G[(X \cup Z)\setminus S] \oplus_\delta H_C$$
where the gluing operations treats the graphs as $|X \cup Z|$-boundaried, with each vertex in $(X \cup Z)$ being labeled consistently across these graphs.
This requires that we first fix a labeling $\lambda_{XZ}$ of vertices in $X \cup Z$, and then for each $H_C$, we update it's labeling function $\lambda_{H_C}$ to be a restriction of $\lambda_{XZ}$ to $\delta(H_C)$, using the labels of $G_{S,C}$ as a guide. The overall effect is that each $G_{S,C}$ is replaced with it's progressive representative $H_C$ for all components $C$ of $G - (X \cup Z)$.

\begin{observation}\label{obs:full-surgery}
    Let $S$ be a subset of at most $k$ vertices of $G$.
    For each component $C$ of $G - (X \cup Z)$, let $S_C = S \cap C$, $B_{S,C} = N(C) \setminus X$ and $G_{S,C}$ be the boundaried graph $G[N[C] \setminus S]$ with boundary $B_{S,C}$.
    Let $G_{S,C}$ lie in the equivalence class ${\cal R}_C$ of $\equiv_{\cal F}$, and let $H_C$ be a progressive representative of ${\cal R}_C$.
    Then, $S$ is a $\cal F$-deletion set of $G$ if and only if the following graph is $\cal F$-minor free.
    $$\bigoplus_{\text{component }C} G[(X \cup Z)\setminus S] \oplus H_C$$.
\end{observation}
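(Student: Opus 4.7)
The plan is to prove the observation by iteratively substituting each $G_{S,C}$ with its progressive representative $H_C$, using Observation~\ref{obs:simple-surgery} as the inductive step for the forward direction and a dual argument for the reverse.

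First I would record the following boundaried decomposition. Fix an injective labelling $\lambda_{XZ}$ of $X \cup Z$, and view both $G[(X \cup Z) \setminus S]$ and each $G_{S,C}$ as $|X \cup Z|$-boundaried graphs whose boundaries inherit their labels from $\lambda_{XZ}$. Because every vertex of $G - S$ lies either in $(X \cup Z) \setminus S$ or inside some $V(C) \setminus S$, and every edge of $G - S$ is either internal to the ``core'' $G[(X \cup Z) \setminus S]$, internal to some $G_{S,C}$, or incident to a shared boundary vertex in $N(C) \setminus S$, we have
\[
G - S \;=\; G[(X \cup Z) \setminus S] \;\oplus_\delta\; \bigoplus_{C} G_{S,C}.
\]

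For the forward direction, assume $S$ is a $\cal F$-deletion set, so $G - S$ is $\cal F$-minor free. Fix an arbitrary ordering $C_1, \ldots, C_r$ of the components of $G - (X \cup Z)$ and proceed by induction on $j$: after replacing $G_{S,C_1}, \ldots, G_{S,C_j}$ by $H_{C_1}, \ldots, H_{C_j}$, the resulting graph is still $\cal F$-minor free. The inductive step is precisely Observation~\ref{obs:simple-surgery}, applied with the current remainder (the core together with the yet-untouched pieces) playing the role of $\widehat{G_{S,C}}$, and $G_{S,C_{j+1}}, H_{C_{j+1}}$ in their natural roles. After $r$ steps, we obtain the graph appearing in the statement, which is therefore $\cal F$-minor free.

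The backward direction proceeds via the same iterative substitution, but in reverse: substituting each $H_C$ back by $G_{S,C}$ one component at a time. The main obstacle here is that the progressive representative property, as formulated through $\equiv_{\cal F}$ on the extended problem $\Pi'$ with shift $c_C \leq 0$, is stated one-sidedly, and Observation~\ref{obs:simple-surgery} supplies only the ``$G_{S,C} \Rightarrow H_C$'' direction. I would resolve this by appealing to the symmetry of $\equiv_{\cal F}$ as an equivalence relation: since $G_{S,C}$ and $H_C$ both lie in the same class ${\cal R}_C$, the same equivalence read in the opposite direction yields that $H_C \oplus G'$ being $\cal F$-minor free forces the analogous statement for $G_{S,C} \oplus G'$ (the associated shift is $-c_C \geq 0$, but at the ``$k=0$ level'' of $\cal F$-minor-freeness both gluings land in the same equivalence class and hence have the same answer, using Proposition~\ref{prop:pfd-pro-rep} and Proposition~\ref{prop:pdf-fii}). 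Iterating this across all components yields that $G-S$ is $\cal F$-minor free, i.e.\ $S$ is a $\cal F$-deletion set of $G$, completing the proof.
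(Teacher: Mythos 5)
Your proof takes the same iterative-substitution strategy as the paper's, which simply says the observation ``follows easily by iteratively applying Observation~\ref{obs:simple-surgery}.'' You, however, are more careful: you correctly spot that Observation~\ref{obs:simple-surgery}, as literally stated, only supplies one direction of the ``if and only if'' (namely, that replacing a $G_{S,C}$ by its progressive representative preserves $\cal F$-minor-freeness, assuming $S$ is a deletion set), whereas the observation being proved is a biconditional. The paper's proof quietly elides the backward direction; you make it explicit and argue it via the symmetry of $\equiv_{\cal F}$, which is indeed the intended justification.

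Two small remarks. First, to literally ``iteratively apply'' Observation~\ref{obs:simple-surgery} one must notice that, after the first replacement, the resulting graph is no longer of the form $G-S$, so the observation does not apply verbatim; what carries over is the underlying one-step surgery principle (if $A\oplus B$ is $\cal F$-minor free and $A'$ is a progressive representative of $A$'s class then $A'\oplus B$ is $\cal F$-minor free). Your phrasing with the ``current remainder playing the role of $\widehat{G_{S,C}}$'' is the right way to state this, and it applies to your argument exactly as to the paper's. Second, in the backward direction you invoke the shift $-c_C \geq 0$ and then assert that ``at the $k=0$ level both gluings have the same answer.'' This is the crux, and it is stated a bit loosely: the underlying fact one needs is that for the problem $\cal F$-Deletion the equivalence $\equiv_{\cal F}$ refines the relation ``agree on $\cal F$-minor-freeness of $\cdot\oplus G'$ for every partner $G'$,'' so two members of the same class give the same yes/no answer at $k=0$ regardless of the shift $c$. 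The paper relies on this implicitly (its proof of Observation~\ref{obs:simple-surgery} already asserts $(H_C\oplus\widehat{G_{S,C}},0)\in\Pi$ without discussing $c$), so your treatment is at least as rigorous as the source. This is not a gap in your proposal but an imprecision shared with the paper, and is worth saying in one more sentence if you want the argument to stand alone.
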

\begin{proof}
    This observation follows easily by iteratively applying Observation~\ref{obs:simple-surgery}, until every $G_{S,C}$ has been replaced with $H_C$.
\end{proof}


We are now ready to show that the tables $\{T_C\}$ are sufficient to count the number of $\cal F$-deletion sets of size $k'$ in $G$ for any $k' \leq k$.
Consider the following algebraic expression; we will prove that it computes the number of solutions of size at most $k'$ for the graph $G$.
For a logical statement $\phi$, Let $[\phi]$ be the $0,1$-indicator function which is $1$ if and only if $\phi$ is true.
\begin{align*}
    \ccount(k') = & \sum_{S_U \subseteq X \cup Z, ~|S_U| \leq k'} ~\sum_{\{\text{eq-class }{\cal R}_C ~\forall \text{ component }C\}} ~\sum_{\{i_C ~\forall \text{component }C ~\mid~ |S_U| + \sum_C i_C \leq k'\}} \\ &
    ~~~~~~~~\left[\bigoplus_{\text{component }C} G[(X \cup Z)\setminus S] \oplus H_C \text{ ~~~is $\cal F$-free}\right] \cdot \\
    & \prod_{\text{component } C} \left[|N(C) \setminus S_U| \leq 3(\eta + 1)\right]  \cdot T_C[{\cal R}_C, i_C, N[C] \setminus S_U]
\end{align*}

\begin{lemma}\label{lemma:count}
    $\ccount(k')$ counts the number of $\cal F$-deletion sets of $G$ of size at most $k'$.
\end{lemma}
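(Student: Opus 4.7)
The plan is to establish a bijection between the set of $\cal F$-deletion sets $S \subseteq V(G)$ of size at most $k'$ and the collection of tuples indexed by $\ccount(k')$ that contribute a nonzero summand, weighted so that each valid tuple contributes exactly the number of deletion sets matching it.

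First, I would show the ``forward'' direction: given a $\cal F$-deletion set $S$ with $|S|\le k'$, define $S_U = S \cap (X \cup Z)$, $S_C = S \cap V(C)$, $i_C = |S_C|$, and ${\cal R}_C$ to be the equivalence class of $\equiv_{\cal F}$ containing the boundaried graph $G_{S,C} = G[N[C]\setminus S]$ with boundary $B_{S,C} = N(C)\setminus S_U$. Since $N(C) \subseteq X \cup Z$, we have $N(C)\setminus S = N(C)\setminus S_U$, and Proposition~\ref{prop:pfd-modulator} gives $|N(C)\setminus S_U| \le (\eta+1) + 2(\eta+1) = 3(\eta+1)$, so the second indicator evaluates to $1$. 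The size constraint $|S_U| + \sum_C i_C = |S| \le k'$ is satisfied by construction, and Observation~\ref{obs:full-surgery} guarantees the first indicator evaluates to $1$ as well. Thus $S$ is associated with exactly one tuple $(S_U, \{{\cal R}_C\}, \{i_C\})$ that survives both indicators.

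Next I would analyze the ``reverse'' direction and the role of the product of table entries. Fix a tuple $(S_U, \{{\cal R}_C\}, \{i_C\})$ for which both indicators are $1$. By Lemma~\ref{lemma:table}, $T_C[{\cal R}_C, i_C, N[C]\setminus S_U]$ counts exactly the $S_C \subseteq V(C)$ with $|S_C|=i_C$ whose removal yields a boundaried graph $G[(V(C)\setminus S_C)\cup B_{S,C}]$ landing in ${\cal R}_C$. Because the components $C$ are vertex-disjoint from each other and from $X\cup Z$, the number of joint choices $\{S_C\}_C$ consistent with the tuple is exactly the product $\prod_C T_C[\cdot]$. Now, for any such choice, the set $S := S_U \cup \bigcup_C S_C$ satisfies: for every $C$, the boundaried graph $G_{S,C}$ lies in ${\cal R}_C$; hence replacing each $G_{S,C}$ by the progressive representative $H_C$ yields the very graph whose $\cal F$-freeness is asserted by the first indicator, and by Observation~\ref{obs:full-surgery} (applied in reverse: if the representative glueing is $\cal F$-free then so is $G-S$) we conclude that $S$ is a $\cal F$-deletion set.

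Finally I would observe that the forward and reverse maps are mutually inverse: for each $\cal F$-deletion set $S$ with $|S|\le k'$, the tuple defined in the forward direction is uniquely determined by $S$ (equivalence classes partition the space of boundaried graphs, and $i_C, S_U$ are set-theoretically determined), and $S$ is recovered as $S_U \cup \bigcup_C S_C$ from that tuple. Consequently every such $S$ is counted with multiplicity exactly $1$ in $\ccount(k')$. The main subtlety I expect to address carefully is the well-definedness of the indicator $\bigl[\bigoplus_C G[(X\cup Z)\setminus S_U]\oplus H_C \text{ is } \cal F\text{-free}\bigr]$: it must depend only on $S_U$ and the equivalence classes $\{{\cal R}_C\}$ and not on the specific choice of $S_C$, which is exactly the defining property of $\equiv_{\cal F}$ together with Observation~\ref{obs:full-surgery} via the progressive representatives guaranteed by Proposition~\ref{prop:pfd-pro-rep}.
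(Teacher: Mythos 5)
Your proof is correct and follows essentially the same approach as the paper's: expand $\ccount(k')$ as a sum over tuples $(S_U,\{{\cal R}_C\},\{i_C\})$, use Proposition~\ref{prop:pfd-modulator} to show the boundary-size indicator passes for any small $\cal F$-deletion set, use Observation~\ref{obs:full-surgery} (as an iff) to show the $\cal F$-freeness indicator passes exactly when $S$ is a deletion set, and let the product of table entries account for the compatible choices of $\{S_C\}$. The paper's argument compresses the bijection/partition step into a one-line ``it is straightforward to verify each $\cal F$-deletion set of size at most $k'$ contributes $1$,'' whereas you make it explicit by observing that each $S$ determines a unique tuple, that the components are vertex-disjoint so the joint count factors into the product $\prod_C T_C[\cdot]$, and that $G[N[C]\setminus S]=G[(V(C)\setminus S_C)\cup B_{S,C}]$ so the table index and the signature coincide; this added detail is a strict improvement in rigor but not a different route.
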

\begin{proof}
    Consider the expression $\ccount(k')$ in the fully expanded as a summation over all choices of $S_U$, $\{{\cal R}_C ~\forall C\}$ and $\{i_C ~\forall C\}$. We need to verify that only the $\cal F$-deletion sets of size at most $k'$ contribute to this summation and each such set contributes $1$. Towards this, for subset $S$, let $S_U = S \cap (X \cup Z)$. For each component $C$ we have the equivalence class ${\cal R}_C$ of the boundaried graph $G[N[C] \setminus S]$ with boundary $N(C) \setminus S$ and $i_C = |S \cap C|$, and let $H_C$ be the progressive representative of ${\cal R}_C$.
    If $S$ is a $\cal F$-deletion set of size at most $k$, then $|N[C] \setminus S_U| \leq 3(\eta + 1)$ by Proposition~\ref{prop:pfd-modulator} and by Observation~\ref{obs:full-surgery} $\bigoplus_{\text{component }C} G[(X \cup Z)\setminus S] \oplus H_C$ is $\cal F$-minor free. $S$ therefore contributes $1$ to $\ccount(k')$. Further, if $|S| \leq k'$ then it also satisfies $|S_U| + \sum_C i_C \leq k'$.
    It is straightforward to verify each $\cal F$-deletion set of size at most $k'$ contributes $1$ to this sum.
    Further, if a set $S$ is not a $\cal F$-deletion set of size at most $k'$, then one of the above three statements is false and it contributes $0$ to this sum. Hence, the lemma holds.
\end{proof}

\begin{observation}\label{obs:count-time}
    $\ccount(k')$ can be evaluated in $2^{\OO(k^{3 + 9(\eta + 1)} \log k)} \cdot n^{\OO(1)}$ time.
\end{observation}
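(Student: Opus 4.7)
The plan is to bound two quantities separately: the number of terms in the fully expanded sum defining $\ccount(k')$, and the cost of evaluating each individual term. The final bound follows by multiplying the two.

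First I would bound the number of terms. The outermost summation is over subsets $S_U \subseteq X \cup Z$ of cardinality at most $k'$; since $|X\cup Z| = \OO(k^3)$ by Proposition~\ref{prop:pfd-modulator}, there are at most $2^{\OO(k^3)}$ such subsets. Thanks to Reduction Rule~\ref{rr:pfd-irr}, the number of connected components of $G' - (X\cup Z)$ is bounded by $\tau_{\#comp} = \OO(k^{3+9(\eta+1)})$. The middle summation picks, for each such component $C$, an equivalence class ${\cal R}_C$ out of at most $c_{t,\cal F}$ options (a constant depending only on $\cal F$), giving at most $c_{t,\cal F}^{\tau_{\#comp}} = 2^{\OO(k^{3+9(\eta+1)})}$ choices overall. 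The innermost summation picks, for each component, a value $i_C \in \{0,1,\ldots,k'\}$, giving at most $(k+1)^{\tau_{\#comp}} = 2^{\OO(k^{3+9(\eta+1)}\log k)}$ choices. Multiplying these three, the total number of terms is $2^{\OO(k^{3+9(\eta+1)}\log k)}$.

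Second, I would argue that each term can be evaluated in $n^{\OO(1)}$ time. The bracket $[|N(C)\setminus S_U|\le 3(\eta+1)]$ and the table lookup $T_C[{\cal R}_C,i_C,N[C]\setminus S_U]$ are trivially polynomial-time per component, and there are polynomially many components. The only nontrivial step is testing whether the glued graph $\bigoplus_C G[(X\cup Z)\setminus S_U] \oplus H_C$ is $\cal F$-minor free. Here the key observation is that each progressive representative $H_C \in {\cal S}_{\le t}$ has size bounded by a constant depending only on $\cal F$, so the glued graph has $|X\cup Z| + \OO(\tau_{\#comp}) = \OO(k^{3+9(\eta+1)})$ vertices. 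Since $\cal F$ is finite and fixed, minor testing for every $F\in{\cal F}$ runs in polynomial time in the size of the host graph, so the $\cal F$-freeness check is $n^{\OO(1)}$. The table entries are integers of magnitude at most $n^k$, which under the standing assumption $k^{\tau_n}\geq \log n$ fit in $k^{\OO(1)}$ bits, so the arithmetic is also polynomial-time per term.

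Multiplying the number of terms by the per-term cost yields the bound $2^{\OO(k^{3+9(\eta+1)}\log k)}\cdot n^{\OO(1)}$ asserted by the observation. The only conceptual point to be careful about is that the factor $(k+1)^{\tau_{\#comp}}$ from choosing the $i_C$ values dominates the other two contributions and is precisely what produces the $\log k$ factor in the exponent; the rest of the proof is essentially bookkeeping. I do not expect any genuine obstacle here beyond this careful accounting.
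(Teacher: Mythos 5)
Your proposal is correct and follows essentially the same approach as the paper: bound the number of terms in the expanded sum (choices of $S_U$, $\{{\cal R}_C\}$, $\{i_C\}$) and multiply by the polynomial per-term cost, with the $(k+1)^{\tau_{\#comp}}$ factor dominating and producing the $\log k$ in the exponent. The only minor deviation is that you bound the $S_U$-choices crudely by $2^{\OO(k^3)}$ rather than the tighter $k^{\OO(k)}$ used in the paper, but this is absorbed into the final bound; and you spell out the $\cal F$-minor-freeness check on the glued graph in slightly more detail than the paper does, which is a harmless elaboration.
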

\begin{proof}
    The time required is more precisely expressed as $k^{3k} \cdot (c_{3(\eta + 1),\cal F} \cdot k)^{\tau_{\#comp}} \cdot n^{\OO(1)}$.
    The time is calculated by simply considering all possible choices of $S_U$, the collections $\{R_C\}$ and $\{i_C\}$. For each choice we can test the required conditions in polynomial time, and then take a product of the values picked from the tables $\{T_C\}$ in polynomial time.
\end{proof}

\subsubsection{The {\sf reduce} procedure and compression}

Let us now describe the {\sf reduce} procedure. The input is $(G,k)$ where $G$ is a graph on $n$ vertices.
We fix $\tau_n = 20(\eta + 1)$, and note that it is a constant that depends only on $\cal F$.
We first check if $\log n \leq k^{\tau_n}$.
If not, then we simply output the empty set $\phi$.
Otherwise, $\log n \leq k^{\tau n}$. We then apply Proposition~\ref{prop:pfd-modulator} to obtain the enhanced modulator $X \cup Z$.
Either it returns that $G$ has no $\cal F$-deletion set of size $k$, in which case we output $\phi$. Otherwise, we obtain an enriched modulator $X \cup Z$ of size $\OO(k^3)$. Then we apply Reduction Rule~\ref{rr:pfd-irr} exhaustively to find and delete components $C$ of $G - (X\cup Z)$ such that every pair $(B,H)$ realized by $C$ is rich. Recall that by Lemma~\ref{lemma:pdf-irr-comp}, each such component is disjoint from any minimal solution of $G$ of size at most $k$. Let $G'$ be the resulting graph, and note that each component of $G' - (X \cup Z)$ is also a component of $G - (X \cup Z)$. The next step is to compute the tables $T_C$ for each component $C$ of $G' - (X \cup Z)$. Here we apply Lemma~\ref{lemma:table} for each $C$ and obtain the table $T_C$ in polynomial time. The output of the reduce procedure is $(X \cup Z)$ and the table collection $\{ T_C \}$. Observe that each table requires at most $k^{\tau_n + 1} \cdot \tau_{table}$ bits of space. Since there are at most $\tau_{\#comp}$ components after an exhaustive application of Reduction Rule~\ref{rr:pfd-irr}, we have the following lemma.

\begin{lemma}\label{lemma:comp-size}
    Given instance $(G,k)$ of {\sc Planar $\cal F$-Deletion}, the {\sf reduce} procedure runs in polynomial time and outputs a data-structure of size $k^{\tau_n + 1} \cdot \tau_{table} \cdot \tau_{\#comp}$ which is a  polynomial in $k$.
\end{lemma}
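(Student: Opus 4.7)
The plan is to verify the two claims in the lemma separately: polynomial running time of the \textsf{reduce} procedure, followed by a bound on the output size.

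First, I would walk through each step of \textsf{reduce} and verify it runs in polynomial time. The initial check $\log n \leq k^{\tau_n}$ is trivial. Invoking Proposition~\ref{prop:pfd-modulator} to obtain the enriched modulator $X \cup Z$ is polynomial time by the proposition itself. For the exhaustive application of Reduction Rule~\ref{rr:pfd-irr}, the key observation (already noted in Section~\ref{sec:compression-phase-1}) is that the total number of relevant pairs $(B,H)$ is at most $(|X|+|Z|)^{3(\eta+1)} \cdot 2^{\binom{h+3(\eta+1)}{2}} = k^{\OO(1)}$, and for each pair and each component $C$ we can test realization $H \leq_m G^B_C$ in polynomial time via MSOL on graphs of bounded treewidth. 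Since each application of the rule strictly decreases $|V(G)|$, it fires at most $n$ times, so the total cost of this phase is polynomial. Finally, Lemma~\ref{lemma:table} guarantees that each table $T_C$ is computed in polynomial time, and there are at most $\tau_{\#comp}$ surviving components; hence this phase is also polynomial.

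Next, I would bound the size of the output, which consists of the modulator $X \cup Z$ together with the collection $\{T_C\}$. By Proposition~\ref{prop:pfd-modulator}, $|X \cup Z| = \OO(k^3)$, and each vertex label uses $\OO(\log n) \leq \OO(k^{\tau_n})$ bits, so storing the modulator takes $\OO(k^{3+\tau_n})$ bits, which is absorbed into the stated bound. For each surviving component $C$, the table $T_C$ has at most $\tau_{\mathrm{table}} = k \cdot k^{9(\eta+1)} \cdot c_{3(\eta+1),\mathcal F}$ entries, and each entry stores an integer at most $n^k$, which requires at most $k \log n \leq k \cdot k^{\tau_n} = k^{\tau_n+1}$ bits (this is precisely where the assumption $\log n \leq k^{\tau_n}$, enforced by \textsf{reduce}'s initial check, is used). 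After exhaustively applying Reduction Rule~\ref{rr:pfd-irr}, the number of components is bounded by $\tau_{\#comp}$, as already recorded in Section~\ref{sec:compression-phase-1}. Multiplying these three factors yields the claimed bound $k^{\tau_n+1}\cdot \tau_{\mathrm{table}} \cdot \tau_{\#comp}$, which is polynomial in $k$ since $\tau_n$, $\eta$, and $c_{3(\eta+1),\mathcal F}$ depend only on $\mathcal F$.

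Since essentially every ingredient (modulator, bound on number of components, computation of tables, bit-length of table entries) has already been established in the preceding sections, this lemma is mostly a bookkeeping step that packages the earlier facts. The only mild subtlety is the interaction between the conditional branch in \textsf{reduce} (returning $\emptyset$ when $\log n > k^{\tau_n}$) and the bit-length argument for table entries; the former is precisely what makes the latter work, and I would state this dependence explicitly. No step poses a real obstacle: the content of the lemma is the multiplication of already-proved polynomial factors together with an explicit statement of the polynomial running time.
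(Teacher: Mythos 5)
Your proof is correct and matches the paper's own argument, which is embedded in the paragraph immediately preceding the lemma statement: the paper bounds each entry by $n^k$, uses $\log n \leq k^{\tau_n}$ to get $k^{\tau_n+1}$ bits per entry, multiplies by $\tau_{\mathrm{table}}$ entries and $\tau_{\#comp}$ components, and invokes Lemma~\ref{lemma:table} and the preceding discussion for polynomial running time. Your explicit observation that the modulator's storage cost $\OO(k^{3+\tau_n})$ is absorbed into the stated bound, and that the conditional branch in \textsf{reduce} is precisely what justifies $\log n \leq k^{\tau_n}$, are correct fillings-in of details the paper leaves implicit.
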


\subsection{The {\sf lift} procedure}

The lift procedure is given the instance $(G,k)$, the output of the {\sf reduce} procedure, and for each $k' \leq k$ the value $\ccount(k')$ if the output of the {\sf reduce} procedure is not $\emptyset$.
Note that, when the output of the {\sf reduce} procedure is not $\emptyset$, then it consists of an enriched modulator $X \cup Z$ and a collection of tables $T_C$ for each component of $G' - (X \cup Z)$.
The objective is to the total number of $\cal F$-deletion sets in $G$ of size at most $k$.

The {\sf lift} procedure begins by applying Proposition~\ref{prop:pfd-modulator} to $(G,k)$. If it returns that $G$ has no $\cal F$-deletion set of size $k$, then we set $\tau_{total} = 0$ and output this value. Otherwise we have two cases, depending on whether$k^{\tau_n} \geq \log n$ or not.

First consider the case $k^{\tau_n} \geq \log n$. In this case, the {\sf reduce} procedure has output an enriched modulator $(X \cup Z)$ and a collection of tables $\{ T_C \}$, one for each non-irrelevant component of $G - (X\cup Z)$. We compute the total number of vertices in all the irrelevant components of $G - (X \cup Z)$ that were deleted by Reduction Rule~\ref{rr:pfd-irr}; it is denoted by $\tau_{irr}$; this can be computed in polynomial time by simulating the application of Reduction Rule~\ref{rr:pfd-irr}.
Recall that $G'$ denotes the graph obtained from $G$ after eliminating all irrelevant components.
Now, as the output of {\sf reduce} is not $\emptyset$, we are also given the values of $\ccount(k')$ for each $k' \leq k$ as a part of the input.
Recall $\ccount(k')$ denotes the total number of $\cal F$-deletion sets of size at most $k'$ in the graph $G'$. Let $\tau'_{count}(k') = \ccount(k') - \ccount(k'-1)$ denote the number of $\cal F$-deletion sets of size exactly $k'$ in $G'$.
Then we output $\tau_{total} = \sum_{k' \leq k} \tau'_{count}(k') \cdot {\tau_{irr} \choose k-k')}$ as the total number of solutions of size exactly $k$.

The other case is when $\log n > k^{\tau_n}$. In this case we proceed as follows. As in the previous case, we first apply Reduction Rule~\ref{rr:pfd-irr} to eliminate all the components of $G - (X \cup Z)$ that have no intersection with any minimal $\cal F$-deletion set of size at most $k$ in $G$. Let $\tau_{irr}$ denote the total number of vertices in the irrelevant components, and let $G'$ denote the remaining graph. The main difference from the above case is that, here we compute the value of $\ccount(k')$, for each $k' \leq k$, using the formula stated earlier. In this computation, instead of using the value of $T_C[{\cal R}_C, i_C, N[C] \setminus S_U]$ from the table $T_C$, we directly apply Lemma~\ref{lemma:table} to compute the value. Recall that each application of Lemma~\ref{lemma:table} takes polynomial time. Finally, as in the previous case we compute $\tau_{total}$, the number of $\cal F$-deletion sets of size at most $k$ in $G$, and output it.

\begin{lemma}
    $\tau_{total}$ is the total number of all $\cal F$-deletion set of size at most $k$ in $G$, and it is computed in polynomial time.
\end{lemma}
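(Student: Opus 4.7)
The plan is in two steps: first I would establish the bijective correspondence between $\cal F$-deletion sets of $G$ of size exactly $k$ and pairs $(S',S_{\mathrm{irr}})$ with $S'$ an $\cal F$-deletion set of $G'$ of some size $k' \leq k$ and $S_{\mathrm{irr}}$ an arbitrary subset of the deleted irrelevant vertices of size $k-k'$; this underlies the formula $\tau_{total} = \sum_{k' \leq k} \tau'_{count}(k') \cdot \binom{\tau_{irr}}{k-k'}$. Second, I would verify that in both branches of the $\mathsf{lift}$ procedure the values $\tau'_{count}(k')$ and $\tau_{irr}$ are obtained in polynomial time. If the initial call to Proposition~\ref{prop:pfd-modulator} certifies that $G$ has no $\cal F$-deletion set of size $k$, the output $\tau_{total}=0$ is trivially correct, so I focus on the remaining case.

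The correctness rests on the following claim: for any $S \subseteq V(G)$ of size at most $k$, $S$ is an $\cal F$-deletion set of $G$ if and only if $S' := S \cap V(G')$ is an $\cal F$-deletion set of $G'$. The ``only if'' direction follows because $G' - S'$ is an induced subgraph of $G - S$ and $\cal F$-minor freeness is preserved under induced subgraphs (every graph in $\cal F$ being connected). For the ``if'' direction, suppose toward contradiction that some $F \in \cal F$ is a minor of $G - S'$. Since $F$ is connected, its witness lies inside a single connected component of $G - S'$. A component contained in $V(G')$ would already furnish an $F$-minor of $G' - S'$, contradiction. Otherwise the witnessing component meets some deleted irrelevant component $C$, and together with $B := N(C) \cap ((X \cup Z) \setminus S')$ it exhibits a pair $(B,H)$ realized by $C$. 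Because every pair realized by $C$ is rich (the triggering condition of Reduction Rule~\ref{rr:pfd-irr}), there are at least $\tau_{rich} > k \geq |S'|$ other components realizing $(B,H)$, so some such component $C^\star$ in $G'$ is untouched by $S'$; since $B$ is also disjoint from $S'$ by construction, transplanting the minor from $C$ onto $C^\star$ yields an $F$-minor of $G' - S'$, again contradicting $S'$ being a solution of $G'$. Given the claim, every $\cal F$-deletion set of $G$ of size exactly $k$ decomposes uniquely as a disjoint union $S' \sqcup S_{\mathrm{irr}}$ with $S'$ a solution of $G'$ of some size $k' \leq k$ and $S_{\mathrm{irr}}$ an arbitrary subset of irrelevant vertices of size $k-k'$; summing over $k'$ gives $\tau_{total}$.

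For the runtime, in Case~1 ($k^{\tau_n} \geq \log n$) the values $\ccount(k')$ are supplied as input, so each $\tau'_{count}(k') = \ccount(k') - \ccount(k'-1)$ is obtained by one subtraction on integers of $O(k \log n) = O(k^{\tau_n + 1})$ bits, and the final weighted sum involves $O(k)$ further polynomial-time arithmetic operations. In Case~2 ($\log n > k^{\tau_n}$) the tables $\{T_C\}$ are not provided, but each $\ccount(k')$ is evaluated from scratch via the formula of Lemma~\ref{lemma:count} with on-the-fly table lookups via Lemma~\ref{lemma:table}; by Observation~\ref{obs:count-time} this takes $2^{\OO(k^{3+9(\eta+1)} \log k)} \cdot n^{\OO(1)}$ time, and since $\tau_n = 20(\eta+1)$ we have $k^{3+9(\eta+1)}\log k = o(k^{20(\eta+1)}) \leq \OO(\log n)$, making the exponential factor $n^{\OO(1)}$ and the whole computation polynomial in $n$. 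Computing $\tau_{irr}$ by simulating Reduction Rule~\ref{rr:pfd-irr} is polynomial in either case, completing the proof.

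The main obstacle is the reverse direction of the structural claim: one must leverage the full richness-based semantics of Reduction Rule~\ref{rr:pfd-irr}, not merely the existential statement of Lemma~\ref{lemma:pdf-irr-comp}, to guarantee that for every fixed $S' \subseteq V(G')$ whose removal leaves $G'$ $\cal F$-minor free, reattaching all irrelevant components at once keeps $G$ minus that same set $\cal F$-minor free. Once this transplantation argument is in hand, the remainder is routine bookkeeping on bit-lengths and arithmetic times.
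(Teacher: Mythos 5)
Your proposal fills in the crucial correctness step that the paper leaves entirely implicit (the paper simply writes ``It is easy to verify that every $\cal F$-deletion set of $G$ is counted\dots''), and your runtime case analysis is correct and in fact cleaner than the paper's, which appears to have the two inequalities $k^{\tau_n}\geq\log n$ and $\log n>k^{\tau_n}$ swapped in its proof text. However, the transplantation argument you sketch for the reverse direction has real gaps. First, the witnessing $F$-minor in $G-S'$ may meet several deleted irrelevant components simultaneously, not just one, so a single transplant does not restore the minor inside $G'-S'$; you need an inductive or simultaneous version of the argument, which in turn relies on $\tau_{rich}$ being large enough to supply fresh, $S'$-untouched components for every irrelevant component touched (this is precisely why $\tau_{rich}$ has the additive term $(h+3(\eta+1))^2$). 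Second, the border $B := N(C)\cap((X\cup Z)\setminus S')$ you propose is not automatically an element of ${\cal B}_C$: the definition of the border collection requires $|B\cap X|\le\eta+1$, and the only proposition in the paper that controls $|(N(C)\cap X)\setminus S|$ (Proposition~\ref{prop:pfd-modulator}, last bullet) is stated for $\cal F$-deletion sets $S$ of $G$, not for deletion sets $S'$ of the smaller graph $G'$; invoking it here is therefore circular until one separately re-establishes the bound for $S'$ via the vertex-disjoint-paths property of Proposition~\ref{prop:pfd-modulator} together with the treewidth bound on $G'-S'$ from Proposition~\ref{prop:PFD-tw}. As sketched, you only restrict attention to the boundary vertices used by the minor model, but you never bound how many of those lie in $X$, so this needs to be added.

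Separately, there is a discrepancy you inherit from the paper and do not flag: the formula $\tau_{total}=\sum_{k'\le k}\tau'_{count}(k')\cdot\binom{\tau_{irr}}{k-k'}$ counts $\cal F$-deletion sets of size \emph{exactly} $k$ (the paper itself says so in the sentence preceding the lemma), while the lemma statement and the problem definition ask for solutions of size \emph{at most} $k$. Your bijection is correctly stated for the ``exactly $k$'' case, but to match the lemma one should replace $\binom{\tau_{irr}}{k-k'}$ by $\sum_{j=0}^{k-k'}\binom{\tau_{irr}}{j}$, or equivalently sum the exactly-$k''$ counts over $k''\le k$; this is worth pointing out even though the paper itself conflates the two.
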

\begin{proof}
    If Proposition~\ref{prop:pfd-modulator} returns that $G$ has no $\cal F$-deletion set of size $k$, then clearly $\tau_{total} = 0$ is the correct answer. Otherwise, we compute $\tau_{total}$ using the values of $\ccount(k')$ for $k' \leq k$. It is easy to verify that every $\cal F$-deletion set of $G$ is counted in the formula for $\tau_{total}$.
    To account for the running time, the case in which $k^{\tau_n} \leq \log n$, the time taken is clearly polynomial.

    In the case $k^{\tau_n} > \log n$, that is, $2^{k^{20(\eta + 1)}} < n$, the only change is that the values of $\ccount(k')$ is computed by the {\sf lift} procedure directly, rather than being supplied externally as in the other case. We also apply Lemma~\ref{lemma:table} to compute the values $T_C[{\cal R}_C, i_C, N(C) \setminus S_U]$ for each choice of ${\cal R}_C, i_C$ and $N(C)\setminus S$. Recall that the total number of calls made to Lemma~\ref{lemma:table} is at most $\tau_{\#comp} \cdot \tau_{table}$ which is a polynomial in $k$, and each application takes polynomial in $n$ time. Next, we consider the computation of $\ccount(k')$ for some $k' \leq k$. By Observation~\ref{obs:count-time}, we need $2^{\OO(k^{3 + 9(\eta + 1)} \log k)} \cdot n^{\OO(1)}$ time for each evaluation of $\ccount(k')$. But as $n > 2^{k^{20(\eta + 1)}}$, it follows that the time required for each evaluation is $\OO(n^2)$. Thus in this case the {\sf lift} procedure requires polynomial time.
    %
\end{proof}

The {\sf reduce} and {\sf lift} procedures described above prove Theorem~\ref{thm:compression}.

\section{Lower Bounds Based on SUM-Cross-Composition}\label{sec:lowerSum}

We define two new notions of cross-compositions, which are suitable for parameterized counting problems. The first notion is defined as follows, and the second notion is defined in Section \ref{sec:lower}.

\begin{definition}[{\bf SUM-Cross-Composition}]\label{def:sumCross}
Let $P: \Sigma^\star\rightarrow\mathbb{N}_0$ be a counting problem and $Q : \Sigma^\star \times\mathbb{N}_0\rightarrow\mathbb{N}_0$ be a parameterized counting problem. We say that $P$ {\em SUM-cross-composes} into $Q$ if there exists a polynomial equivalence relation $R$ and an algorithm $A$, called a {\em SUM-cross-composition}, satisfying the following conditions. The algorithm $A$ takes as input a sequence of strings $x_1,x_2,\ldots,x_t \in\Sigma^\star$ that are equivalent with respect to $R$, runs in time polynomial in $\sum_{i=1}^t |x_i|$, and outputs one instance $(y, k)\in \Sigma^\star \times\mathbb{N}_0$ such that:
\begin{itemize}
\item  $k \leq p(\max^t_{i=1} |x_i| + \log t)$ for some polynomial function $p$, and
\item $Q(y,k)=\sum^t_{i=1}P(x_i)$.
\end{itemize}
\end{definition}

We pose the following conjecture, which will be the basis of the lower bounds presented in this section.

\begin{conjecture}[{\bf SUM-Conjecture}]
Assume that a \#P-hard counting problem $P$ SUM-cross-composes into a well-behaved parameterized counting problem $Q$. Then, $Q$ does not admit a polynomial compression.
\end{conjecture}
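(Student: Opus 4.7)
The plan is to mirror the proof of Proposition~\ref{prop:crossComp} (the analogous theorem for OR-cross-composition), adapted to the counting setting. I would assume for contradiction that $Q$ admits a polynomial compression of size $k^c$ into some parameterized counting problem $Q'$. Composing this compression with the given SUM-cross-composition yields a polynomial-time algorithm that turns $t$ equivalent instances $x_1, \ldots, x_t$ of the \#P-hard problem $P$ into a single $Q'$-instance of size $\mathrm{poly}(\max_i |x_i| + \log t)$ whose solution-count equals $\sum_{i=1}^t P(x_i)$. Taking $t$ large (say $t \approx 2^{|x|}$, after padding all instances to a common size $|x|$ via the equivalence relation) produces a sub-linear ``distilled'' object encoding the sum of exponentially many \#P-values. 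The overall goal is to derive from this a complexity collapse such as \#P $\subseteq$ \NP/poly (which in particular implies \coNP $\subseteq$ \NP/poly).

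The natural route is a counting analog of the Fortnow--Santhanam / Dell--van Melkebeek distillation framework: treat the compressed reduction as non-uniform advice to a non-deterministic procedure that guesses a certificate for the decision version of the instance at hand and verifies consistency with the compressed sum. In the OR setting, the OR bit can be inverted by a single \NP\ guess. In the counting setting, one would instead need to recover from $\sum_i P(x_i)$, together with the advice, either some individual value $P(x_i)$ or at least a yes/no bit about it. A clean way to accomplish this is to weight the contribution of $x_i$ by a geometrically separated coefficient $w_i$, so that the individual $P(x_i)$ appear as ``digits'' of the resulting weighted sum, exactly as in the proof sketch for Theorem~\ref{thm:exact}.

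The principal obstacle, and the reason the statement is only a conjecture, is that Definition~\ref{def:sumCross} provides only the unweighted sum $\sum_i P(x_i)$; inserting weights is precisely what turns a SUM-cross-composition into an EXACT-cross-composition and is not available for free. To prove the conjecture one would either (i)~show that any SUM-cross-composition can be generically upgraded to an EXACT-cross-composition while preserving the polynomial parameter bound---thereby reducing the conjecture to Theorem~\ref{thm:exact}---or (ii)~prove a direct ``sum-distillation'' lower bound asserting that a polynomial-size object determining $\sum_i P(x_i)$ for $t$ independent \#P-instances of size $n$ already suffices to collapse \#P into \NP/poly. Both directions appear to require a genuinely new information-theoretic or communication-complexity ingredient beyond what is available in the OR setting, and this is the crux of the difficulty that keeps the statement a conjecture rather than a theorem.
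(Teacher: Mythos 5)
The statement is labelled a \emph{conjecture} in the paper, and the paper offers no proof of it---only a one-sentence intuition (in Section~\ref{sec:contribution}) that the belief is ``rooted at the exact same intuition'' as the OR-cross-composition lower bound. You have correctly treated the statement as open: you sketch the natural distillation-style attack, and then you put your finger on precisely the obstruction that the authors also (implicitly) acknowledge by proving the analogous result only for EXACT-cross-composition (Theorem~\ref{thm:exact}). Your diagnosis---that the unweighted sum $\sum_i P(x_i)$ does not let one recover any individual $P(x_i)$, that inserting geometrically separated weights is exactly what upgrades SUM to EXACT, and that one would need either a generic SUM-to-EXACT upgrade or a genuinely new ``sum-distillation'' lower bound---matches the gap between the paper's Definition~\ref{def:sumCross} and Definition~\ref{def:exactCross}, and is the same reason the paper proves Theorem~\ref{thm:exact} unconditionally while deriving Theorem~\ref{thm:lowerSUM} only under the conjecture. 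So this is not a flaw in your write-up: there is no proof to reproduce, and you have correctly explained why.

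One small caution: in your route (i) you suggest generically upgrading a SUM-cross-composition to an EXACT-cross-composition. Be aware that this cannot be done at the level of the composed instance $(y,k)$ alone---once the $t$ source instances are merged into a single instance of $Q$ whose count is a plain sum, the individual addends are information-theoretically unrecoverable in general, no matter how $\mathsf{lift}$ is designed. Any such upgrade would have to modify the \emph{construction} of $(y,k)$ on a per-problem basis (as the paper does for {\sc \#Min $(s,t)$-Cut} in Lemma~\ref{lem:lowerExact}, by adding weighted families of parallel paths), and it is unclear that this can always be done while keeping the parameter bounded by $\mathrm{poly}(\max_i|x_i| + \log t)$. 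That is exactly why the authors had to design separate, problem-specific EXACT-cross-compositions and why they could not show the EXACT variant for {\sc \#$k$-Odd Cycle Transversal} or {\sc \#$m$-Vertex Cover}.
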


We first analyze the {\sc \#$k$-Min $(s,t)$-Cut} problem, whose unparameterized version is \#P-hard:

\begin{proposition}[\cite{provan1983complexity}]\label{prop:minCutHard}
{\sc \#Min $(s,t)$-Cut} is \#P-hard.
\end{proposition}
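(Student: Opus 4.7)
My plan is to give a polynomial-time Turing (and ideally parsimonious) reduction from a canonical \#P-hard counting problem to \textsc{\#Min $(s,t)$-Cut}. The first-choice source would be \textsc{\#Antichains in a Poset}, which is \#P-complete; a backup route is \textsc{\#Monotone 2-SAT}/\textsc{\#Vertex Cover}, both \#P-complete by Valiant's classical results. Antichain counting is especially natural here because the max-flow/min-cut duality already hints at a correspondence between minimum $(s,t)$-cuts and down-closed sets in an auxiliary poset.

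Given a poset $P=(X,\preceq)$, I would build a directed capacitated flow network $G$ as follows. For each $x\in X$, introduce a unit-capacity ``element edge'' $e_x$; cutting $e_x$ will encode the statement ``$x\in A$''. Place infinite-capacity skeleton edges so that every $s$-to-$t$ path through $e_y$ with $x\prec y$ is forced to also traverse $e_x$, ruling out as a finite cut any edge set that is not a down-closed antichain selector. Balance the source and sink sides so that the min-cut value equals the width of $P$ and minimum cuts biject with maximum antichains of $P$. To extend this to counting \emph{all} antichains rather than only maximum ones, I would attach at each element a small symmetric padding gadget that contributes exactly one unit to any cut regardless of whether the element is selected, equalising every antichain's contribution to the total cut size.

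The main obstacle I expect is enforcing an exact bijection (or, failing that, a uniform polynomial-time-computable multiplicity) between antichains of $P$ and minimum $(s,t)$-cuts of $G$. Two things must be controlled: first, the size of an antichain must not affect the cost of its image, which is precisely what the symmetric padding gadgets are intended to achieve and the part that requires careful design; second, the infinite-capacity skeleton must preclude spurious minimum cuts that do not come from a genuine down-closed antichain selector. Once both are resolved, the number of minimum $(s,t)$-cuts in $G$ will equal $w\cdot\#\textsc{Antichains}(P)$ for an explicit polynomial-time-computable $w$, and dividing out $w$ recovers $\#\textsc{Antichains}(P)$, establishing \#P-hardness of \textsc{\#Min $(s,t)$-Cut}. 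If the padding bookkeeping turns out to be delicate, I would fall back on a direct reduction from \textsc{\#Vertex Cover} through the analogous König-type bipartite flow construction, using the same padding idea to convert ``min'' counting into ``all-sizes'' counting.
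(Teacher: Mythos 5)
The paper does not prove this statement; it invokes Provan and Ball (1983) by citation, so there is no in-paper argument to compare your proposal against. That said, your proposal is not yet a proof: it is a plan with explicitly acknowledged unresolved steps. Two of those steps are genuine gaps that matter.

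First, the ``symmetric padding gadget'' intended to make antichains of different sizes contribute the same amount to the cut is left entirely unspecified, and I believe you are fighting against the grain by needing it at all. The Picard--Queyranne structure theorem already gives a size-independent bijection between minimum $(s,t)$-cuts of a network and the closed sets (order ideals) of an associated DAG. Since ideals of a poset and antichains of a poset are in bijection (an ideal is determined by its set of maximal elements), the natural target of the reduction is ideals, not antichains-as-selected-edge-sets. In your formulation, cutting an ``element edge'' $e_x$ encodes $x\in A$, which makes the cut cost scale with $|A|$ and forces the padding problem on you; in the ideal picture every minimum cut already has the same cost and no equalizer is needed. Until the gadget is written down and the exact multiplicity $w$ is proved to be both uniform and polynomial-time computable, the argument is incomplete.

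Second, the problem in this paper is counting \emph{undirected, unit-capacity} minimum edge cuts: an $(s,t)$-cut here is a set $S\subseteq E(G)$ whose removal disconnects $s$ from $t$, and ``minimum'' means minimum cardinality. Your construction is a \emph{directed, capacitated} flow network with infinite-capacity skeleton edges. You would still need to convert to the undirected unit-capacity setting (for instance replacing each infinite-capacity edge by sufficiently many internally disjoint length-two paths, and arguing that no minimum cut can afford to use any of them), and then re-verify that the bijection with ideals survives the conversion. None of this is carried out, so as written the proposal is a promising sketch in the spirit of Provan and Ball rather than a proof of the proposition.
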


Since {\sc \#Min $(s,t)$-Cut} is \#P-hard, we will derive the hardness of the kernelization of {\sc \#$k$-Min $(s,t)$-Cut} from the following lemma.

\begin{restatable}{lemma}{stCutSum}
\label{lem:stCutSum}
{\sc \#Min $(s,t)$-Cut} SUM-cross-composes into {\sc \#$k$-Min $(s,t)$-Cut}.
\end{restatable}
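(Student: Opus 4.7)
The plan is to follow exactly the construction sketched in the overview: chain the input instances together by identifying the sink of each instance with the source of the next, and show that the set of minimum $(s,t)$-cuts of the resulting graph is the disjoint union of (copies of) the sets of minimum cuts of the individual instances.

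First, I would define the polynomial equivalence relation $R$. Two well-formed instances $(G,s,t)$ and $(G',s',t')$ of \textsc{\#Min $(s,t)$-Cut} are declared equivalent if $|V(G)|=|V(G')|$, $|E(G)|=|E(G')|$, and the sizes of minimum $(s,t)$-cut and minimum $(s',t')$-cut coincide (these sizes are polynomial-time computable by max-flow). All malformed strings are put into one additional class. Since the minimum cut size is bounded by $|E(G)|$, this yields only polynomially many classes on strings of length at most $n$, so $R$ is a polynomial equivalence relation. The malformed class is handled by outputting a trivial hard-coded instance with no $(s,t)$-cut of the appropriate structure, contributing $0$ solutions.

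Next, given $R$-equivalent instances $(G_1,s_1,t_1),\dots,(G_\ell,s_\ell,t_\ell)$ with common minimum cut size $c$, the composition algorithm builds $G$ as the disjoint union of $G_1,\dots,G_\ell$ with $t_i$ identified with $s_{i+1}$ for every $i\in\{1,\dots,\ell-1\}$, then sets $s=s_1$ and $t=t_\ell$. The parameter $k$ is the minimum $(s,t)$-cut size in $G$. Since every $s$-$t$ path in $G$ is the concatenation of an $s_i$-$t_i$ path in each $G_i$ (the only vertices shared by consecutive copies are the glue vertices), max-flow decomposes across copies and $k=c\leq\max_i|E(G_i)|\leq\max_i|x_i|$, so the parameter bound is satisfied.

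The heart of the argument is the bijection. I will show: an edge set $C\subseteq E(G)$ is a minimum $(s,t)$-cut in $G$ if and only if there exist a unique $i\in\{1,\dots,\ell\}$ and a minimum $(s_i,t_i)$-cut $C_i$ of $G_i$ with $C=C_i$. One direction is easy: any minimum $(s_i,t_i)$-cut $C_i$ of $G_i$, viewed inside $G$, disconnects $s$ from $t$ because every $s$-$t$ path passes through $s_i$ and $t_i$, and $|C_i|=c=k$. For the converse, let $C$ be a minimum $(s,t)$-cut. For each $j$, the restriction $C\cap E(G_j)$ must be an $(s_j,t_j)$-cut of $G_j$: otherwise there would remain an $s_j$-$t_j$ path in $G_j-C$, and since each $G_i-C$ still contains either an $s_i$-$t_i$ path or the separation is entirely on one side of its glue vertices, one can stitch together an $s$-$t$ walk in $G-C$, a contradiction (I will formalize this by a short argument on connected components in each $G_j-(C\cap E(G_j))$ together with which side of the cut contains $s_j$ versus $t_j$). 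Hence $|C|\geq \sum_j c = \ell c$ unless the restrictions to all but one $G_j$ are empty; minimality $|C|=c$ therefore forces $C$ to lie entirely inside some $E(G_i)$ and to be a minimum cut there. Summing over $i$ gives the required SUM-identity.

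The main obstacle is the converse direction above: one must rule out the possibility that a clever $(s,t)$-cut of size $c$ uses edges from several different $G_j$'s and exploits the chain structure to be cheaper than a single in-copy cut. The clean way to handle it is to observe that the identification vertex $t_i=s_{i+1}$ is a cut vertex of $G$ separating $\{s,\dots,t_i\}$ from $\{s_{i+1},\dots,t\}$; thus any $(s,t)$-cut must separate $s_i$ from $t_i$ inside $G_i$ for some $i$, and because min cuts in each $G_j$ cost exactly $c$, a minimum global cut is achieved precisely when it equals a min $(s_i,t_i)$-cut in a single copy, with no edges spent elsewhere. This gives the bijection, hence the identity $Q(y,k)=\sum_{i=1}^\ell P(x_i)$ required by Definition \ref{def:sumCross}.
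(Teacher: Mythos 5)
Your proposal is correct and matches the paper's approach: chain the instances by identifying $t_i$ with $s_{i+1}$ and show that the minimum $(s,t)$-cuts of $G$ are exactly the minimum $(s_i,t_i)$-cuts of the individual copies (the paper states this as two claims and declares them immediate from the construction). One small slip worth flagging: the line ``for each $j$, the restriction $C\cap E(G_j)$ must be an $(s_j,t_j)$-cut of $G_j$'' is false as stated---only one restriction need be a cut, and the concatenation argument only yields a contradiction if \emph{no} $j$ is cut---but your final paragraph, via the cut-vertex observation at each glue vertex $t_i=s_{i+1}$, carries the correct argument and the proof stands.
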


\begin{proof}
First, we specify the equivalence relation $R$: Two strings $x$ and $x'$ satisfy $x\equiv_R x'$ if and only if they do not encode instances of {\sc \#Min $(s,t)$-Cut}, or they encode instances $x=(G,s,t)$ and $x'=(G',s',t')$ of {\sc \#Min $(s,t)$-Cut} and the size of a minimum $(s,t)$-cut in $G$ is equal to the size of a minimum $(s',t')$-cut in $G'$. Because the {\sc Min $(s,t)$-Cut} problem is solvable in polynomial time~\cite{cormen2001introduction}, it follows that $R$ is polynomial.

Now, we describe the SUM-cross-composition. For this purpose, consider a sequence of strings $x_1,x_2,\ldots,x_\ell$ that are equivalent with respect to $R$. If they do not encode instances of {\sc \#Min $(s,t)$-Cut}, then we can simply output a string that does not encode an instance of {\sc \#$k$-Min $(s,t)$-Cut}. Hence, we suppose that for every $i\in[\ell]$, $x_i=(G_i,s_i,t_i)$; then, the size of a minimum $(s_i,t_i)$-cut in $G_i$ is $k$. Without loss of generality, we suppose that the vertices of these graphs are distinct (otherwise, we can rename them). Now, we construct an instance $(G,s,t)$ of {\sc \#$k'$-Min $(s,t)$-Cut}. (We will argue that $k'=k$.) Let: \[V(G)=\bigcup_{i=1}^\ell(V(G_i)\setminus\{t_i\})\cup\{t_\ell\},\ \mathrm{and}\] 
\[E(G)=\bigcup_{i=1}^{\ell-1} (E(G_i-\{t_i\})\cup\{\{v,s_{i+1}\}: \{v,t_i\}\in E(G_i)\})\cup E(G_\ell).\]
Let $s=s_1$ and $t=t_\ell$. We refer to Fig.~\ref{fig:cut1} for an illustration.  Clearly, the construction can be done in polynomial time. For the sake of simplicity of the presentation, for every $i\in[\ell-1]$, we abuse notation and refer to the vertex $s_{i+1}$ in $G$ also as $t_{i}$; thus, for example, we refer to an edge $\{u,s_{i+1}\}$ in $G$ also as the edge $\{u,t_{i}\}$ (which belongs to $G_{i}$). Observe that, under this notation abuse, we simply have that $E(G)=\bigcup_{i=1}^\ell E(G_i)$. 

\begin{figure}
\begin{center}
  \includegraphics[scale=0.8]{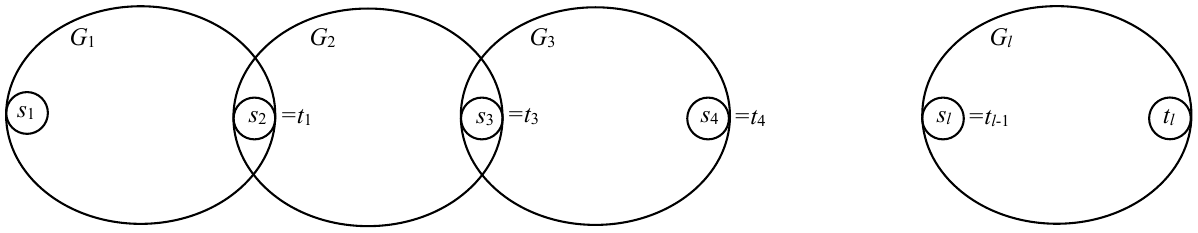}
  \caption{The construction in the proof of Lemma \ref{lem:stCutSum}.}
  \label{fig:cut1}
  \end{center}
\end{figure}

For the correctness of the composition, we state the two following claims. The correctness of these two claims is immediate from the construction of $G$.

\begin{claim}\label{claim:st1}
Let $S$ be a minimum $(s,t)$-cut in $G$. Then, there exists $i\in[\ell]$ such that $S\subseteq E(G_i)$ and $S$ is a minimum $(s_i,t_i)$-cut in $G_i$.
\end{claim}

\begin{claim}\label{claim:st2}
Let $S$ be a minimum $(s_i,t_i)$-cut in $G_i$, for some $i\in[\ell]$. Then, $S$ is a minimum $(s,t)$-cut in $G$.
\end{claim}

Observe that, from Claims \ref{claim:st1} and \ref{claim:st2}, it follows that $k'=k$, and that the number of minimum $(s,t)$-cuts in $G$ is equal to the sum, over all $i\in[\ell]$, of the number of minimum $(s_i,t_i)$-cuts in $G_i$. Thus, the proof is complete.
\end{proof}

Having Lemma \ref{lem:stCutSum} at hand, we proceed to consider PPTs that transfer the hardness to {\sc \#$k$-Odd Cycle Transversal} and {\sc \#$\ell$-Vertex Cover} (and {\sc \#$m$-Vertex Cover}). First, we present a PPT from {\sc \#$k$-Min $(s,t)$-Cut} to  {\sc \#$k$-Odd Cycle Transversal}. For the PPT from {\sc \#$k$-Odd Cycle Transversal} to {\sc \#$\ell$-Vertex Cover} (and {\sc \#$m$-Vertex Cover}), we will suppose that the instances of {\sc \#$k$-Odd Cycle Transversal} satisfy a particular property, hence we already define it now, and prove that our PPT from {\sc \#$k$-Min $(s,t)$-Cut} to  {\sc \#$k$-Odd Cycle Transversal} only produces instances with this property.

\begin{definition}[{\bf Nice Instances of {\sc \#$k$-Odd Cycle Transversal}}]
An instance $(G,k)$ of {\sc \#$k$-Odd Cycle Transversal} is {\em nice} if for every odd cycle transversal $S$ of $G$ of size at most $k$, $G-S$ is a connected graph.
\end{definition}

We now present our PPT from {\sc \#$k$-Min $(s,t)$-Cut} to  {\sc \#$k$-Odd Cycle Transversal}.

\begin{lemma}\label{lem:PPTOCT}
There exists a PPT from {\sc \#$k$-Min $(s,t)$-Cut} to {\sc \#$k$-Odd Cycle Transversal}. Moreover, the PPT only produces nice instances of {\sc \#$k$-Odd Cycle Transversal}.
\end{lemma}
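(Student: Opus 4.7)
The plan is to adapt the classical reduction from edge cuts to odd cycle transversals to the counting setting, so that solution counts match exactly. Given an instance $(G,s,t)$ of {\sc \#$k$-Min $(s,t)$-Cut} with $G$ connected, I construct $G'$ in three stages, as suggested in the overview: (i) subdivide each edge of $G$ once to obtain $G_1$, whose vertex set splits canonically into ``old'' vertices (inherited from $V(G)$) and ``edge'' vertices (one per edge of $G$); (ii) replace each old vertex $v$ by $k+1$ false-twin copies $v^1,\dots,v^{k+1}$, yielding $G_2$; (iii) add $k+1$ new adjacent pairs $(x_i,y_i)$ and connect every $x_i$ to every copy of $s$ and every $y_i$ to every copy of $t$, resulting in $G'$. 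The $\mathsf{reduce}$ procedure outputs $(G',k'=k)$ in polynomial time, and $\mathsf{lift}$ simply echoes its input count (correctness below shows this is the right answer); the parameter is trivially preserved.

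The structural observation driving correctness is that $G_2$ is bipartite with parts ``old copies'' vs.\ ``edge vertices,'' so every odd cycle in $G'$ must use at least one edge $\{x_i,y_i\}$; tracing such a cycle produces an $(s^j,t^{j'})$-path in $G_2$ of even length that projects to an $(s,t)$-walk in $G$, and conversely every $(s,t)$-path in $G$ lifts to such a path in $G_2$ and closes, via any pair $(x_i,y_i)$, into an odd cycle of $G'$. The proposed bijection between minimum $(s,t)$-cuts of $G$ and odd cycle transversals of $G'$ of size at most $k$ is the obvious one: a minimum cut $S_E$ of size $k$ maps to the set $S$ of corresponding edge vertices, which clearly destroys every $(s,t)$-path in $G_2$ and hence every odd cycle of $G'$. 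For the harder direction, take any OCT $S$ of $G'$ with $|S|\le k$. Since $|S|<k+1$, some pair $(x_i,y_i)$ survives intact, so in any proper $2$-coloring of $G'-S$ the vertices $x_i$ and $y_i$ lie in opposite classes; this propagates to force every surviving $s^j$ into the class of $y_i$ and every surviving $t^{j'}$ into the class of $x_i$. But all $s^j,t^{j'}$ lie on the same side of the bipartition of $G_2$, so any path between them in the bipartite graph $G_2-S$ has even length, which would force them into the \emph{same} class. Reconciliation requires that every surviving pair $(s^j,t^{j'})$ lies in different connected components of $G_2-S$. Since $|S|\le k<k+1$ and every old vertex has $k+1$ copies, at least one copy of every old vertex survives; projecting to $G$, this separation forces the edge-vertex part $S_E$ of $S$ to be an $(s,t)$-edge-cut of $G$. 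Minimality then yields $|S_E|\ge k$, whence $|S|=|S_E|=k$, $S$ contains no old copies and no $x_i$ or $y_i$, and $S_E$ is a minimum $(s,t)$-cut. The main obstacle is exactly this projection step; the false-twin multiplicity $k+1$ and the $k+1$ pairs $(x_i,y_i)$ are calibrated precisely so that an OCT of size $\le k$ cannot afford to delete an entire copy-class of any old vertex nor both endpoints of any pair $(x_i,y_i)$.

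For the niceness property, the characterization above shows that any OCT of size $\le k$ is a set of $k$ edge vertices corresponding to a minimum cut $S_E\subseteq E(G)$. Since $G$ is connected, a minimum cut partitions $V(G)$ into two sides $V_s\ni s$ and $V_t\ni t$, both connected in $G-S_E$ (a standard consequence of minimality: any disconnected part of either side would admit the removal of a cut-edge without reconnecting $s$ to $t$). Consequently, in $G_2-S$ each side lifts to a connected ``super-component'' containing all surviving old copies and all intra-side edge vertices, and the pair $(x_i,y_i)$ together with the bipartite connections from the $x_i$'s to every $s^j$ and from the $y_i$'s to every $t^{j'}$ bridges the two super-components, so $G'-S$ is connected. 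Together with the bijection (which yields $Q(G',k)=P(G,s,t)$ so $\mathsf{lift}$ is the identity), this completes the plan for the lemma.
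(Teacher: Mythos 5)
Your proposal is correct and uses the same construction as the paper (subdivide each edge once, replace each original vertex by $k+1$ false twins, attach $k+1$ adjacent pairs $(x_i,y_i)$ with $x_i$'s linked to all $s$-copies and $y_i$'s to all $t$-copies), and it establishes the same bijection between minimum $(s,t)$-cuts of $G$ and odd cycle transversals of $G'$ of size at most $k$, with the identity $\mathsf{lift}$ and the same route to niceness via the two-component structure of a minimum cut in a connected graph. Where you genuinely diverge from the paper is the hard direction, OCT $\Rightarrow$ minimum cut: the paper's Claim~\ref{claim:OCTtoMINCUT} proceeds by explicit cycle construction---assuming $S_E$ is not a cut, it picks a surviving $(s,t)$-path in $G-S_E$, lifts it to $G_2-S$ using surviving copies, closes it through a surviving pair $(x_j,y_j)$ via Claim~\ref{claim:oct2}, and exhibits a surviving odd cycle, a contradiction---whereas you argue indirectly via the proper $2$-coloring of the bipartite graph $G'-S$: a surviving pair $(x_i,y_i)$ forces every surviving $s$-copy and every surviving $t$-copy into opposite color classes, but bipartiteness of $G_2$ (all such copies on the same side) would force equal colors if they were joined by a path in $G_2-S$, so they must be separated in $G_2-S$, and the lifting argument then shows $S_E$ is a cut. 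Both arguments hinge on the same two pigeonhole facts (some pair $(x_i,y_i)$ survives; every original vertex has a surviving copy) and on the same bipartiteness observation; the coloring version is a slicker global reformulation that avoids explicitly exhibiting a bad odd cycle. One step you state without proof, which the paper proves carefully as Claim~\ref{claim:oct1}, is that every odd cycle of $G'$ traces to an $(s^j,t^{j'})$-path in $G_2$ and hence to an $(s,t)$-walk in $G$; you use this to justify the easy direction (cut $\Rightarrow$ OCT), and it requires a short but non-trivial parity argument over cycles that visit several pairs $(x_j,y_j)$.
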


\begin{proof}
For the description of the PPT, let $(G,s,t)$ be an instance of {\sc \#$k$-Min $(s,t)$-Cut}. Without loss of generality, we suppose that $G$ is a connected graph, else we can discard all connected components that do not contain $s$ or $t$, and, if $s$ and $t$ are not in the same connected component, then we already know that the solution is $1$ (the only minimum $(s,t)$-cut is the empty set), and hence the PPT is trivial. Then, we construct an instance $(G',k)$ of {\sc \#$k$-Odd Cycle Transversal}. Here, the parameter $k$ is the size of a minimum $(s,t)$-cut in $G$.

First, let $G_1$ be the graph obtained from $G$ by subdividing each edge (once). For an edge $\{u,v\}\in E(G)$, we denote the corresponding vertex in $G_1$ by $a_{\{u,v\}}$.
Let $G_2$ be the graph whose vertex set is $\{v_i: v\in V(G), i\in [k+1]\}\cup(V(G_1)\setminus V(G))$, and whose edge set is $\{\{u_i,a_{\{u,v\}}\}: \{u,v\}\in E(G), i\in[k+1]\}$. That is, $G_2$ is the result of the replacement of every vertex of $G_1$ that belongs to $G$ by $k+1$ copies (false twins) of that vertex.
Lastly, we define $G'$:
\[V(G')=V(G_2)\cup\{x_i: i\in[k+1]\}\cup\{y_i: i\in[k+1]\},\ \mathrm{and}\]
\[E(G')=E(G_2)\cup \{\{x_i,y_i\}:i\in[k+1]\}\cup\{\{s_i,x_j\}: i,j\in[k+1]\}\cup \{\{t_i,x_j\}: i,j\in[k+1]\}.\]
Clearly, the construction (performed by the reduction procedure of the PPT) can be done in polynomial time.

For correctness, we have the following claims.

\begin{claim}\label{claim:bipartite}
Let $G''=G'-\{\{x_i,y_i\}: i\in[k+1]\}$. Then, $G''$ is bipartite.
\end{claim}

\begin{proof}
Consider the following partition $(X,Y)$ of $V(G'')$:
\[X=\{a_{\{u,v\}}: \{u,v\}\in E(G)\}\cup\{x_i: i\in[k+1]\}\cup\{y_i: i\in[k+1]\},\ \mathrm{and}\]
\[Y=V(G'')\setminus X=\{v_i: v\in V(G), i\in[k+1]\}.\]
From the construction of $E(G'')$, it should be clear that $E(G'')\subseteq\{\{x,y\}: x\in X, y\in Y\}$.
\end{proof}

\begin{claim}\label{claim:oct1}
Let $C$ be an odd cycle in $G'$. Then, there exists a path $P=v^1-v^2-\ldots-v^\ell$ where $v^1=s$ and $v^\ell=t$ in $G$, and $i_1,i_2,\ldots,i_\ell\in[k+1]$ such that: \[C\supseteq v^1_{i_1}-a_{\{v^1,v^2\}}-v^2_{i_2}-a_{\{v^2,v^3\}}-v^3_{i_3}-\cdots - v^{\ell-1}_{i_{\ell-1}}-a_{\{v^{\ell-1},v^\ell\}}-v^\ell_{i_\ell}.\]
\end{claim}

\begin{proof}
Due to Claim \ref{claim:bipartite} and since a graph is bipartite if and only if it does not contain any odd cycle, there exist $j\in[k+1]$ such that $\{x_j,y_j\}\subseteq E(C)$. Targeting a contradiction, suppose that $C$ does not contain a path of the form stated in the lemma. Thus, the definition of $G'$ implies that there exist $r_1,r_2,\ldots,r_s$ such that
\[C=x_{r_0}-P_{r_1}-x_{r_1}-y_{r_1}-P_{r_2}-y_{r_2}-x_{r_2}-P_{r_3}-x_{r_3}-y_{r_3}-\cdots-P_{r_{s-1}}-x_{r_{s-1}}-y_{r_{s-1}}-P_{r_s}-y_{r_s}-x_{r_s},\]
where $x_j=x_{r_0}=x_{r_s}$ and $y_j=y_{r_s}$, and, for every $i\in[s]$, $P_{r_i}$ is a path in $G''$ (defined in Claim \ref{claim:bipartite}) whose endpoints satisfy that they are adjacent to the vertices specified above ($x_{r_{i-1}}$ and $x_{r_i}$ if $i$ is odd, and $y_{r_{i-1}}$ and $y_{r_i}$ if $i$ is even). Observe that, necessarily, $s$ is even. From Claim \ref{claim:bipartite} (specifically, consider the bipartition defined in the proof), we know that each $P_{r_i}$, $i\in[s]$, along with the edge before it and the edge after it, has an even number of edges. Besides this, all other edges of $C$ are $\{x_{r_i},y_{r_i}\}$, $i\in[s]$. However, since $s$ is even, this means that their number is even as well. Overall, we derive that $C$ contains an even number of edges, which is a contradiction (since $C$ is an odd cycle).
\end{proof}

\begin{claim}\label{claim:oct2}
Let $P=v^1-v^2-\ldots-v^\ell$, where $v^1=s$ and $v^\ell=t$, be a path in $G$, and let $i_1,i_2,\ldots,i_\ell,j\in[k+1]$. Additionally, let: \[C=v^1_{i_1}-a_{\{v^1,v^2\}}-v^2_{i_2}-a_{\{v^2,v^3\}}-v^3_{i_3}-\cdots - v^{\ell-1}_{i_{\ell-1}}-a_{\{v^{\ell-1},v^\ell\}}-v^\ell_{i_\ell}-x_j-y_j-v^1_{i_1}.\]
Then, $C$ is an odd cycle in $G'$.
\end{claim}

\begin{proof}
From the definition of $G'$, it is immediate that $C$ is a cycle in $G'$, and, clearly, $C$ contains an odd number (being $2(\ell-1)+3=2\ell+1$) of edges.
\end{proof}

\begin{claim}\label{claim:MINCUTtoOCT}
Let $S$ be a minimum $(s,t)$-cut in $G$. Then, $S'=\{a_e: e\in S\}$ is an odd cycle transversal of size at most $k$ in $G'$.
\end{claim}

\begin{proof}
Since $|S|=k$, it follows that $|S'|\leq k$. Now, targeting a contradiction, suppose that $S'$ is not an odd cycle transversal of $G'$. So, $G'-S'$ contains some odd cycle cycle $C$. By Claim \ref{claim:oct1}, there exists a path $P=v^1-v^2-\ldots-v^\ell$ where $v^1=s$ and $v^\ell=t$ in $G$, and $i_1,i_2,\ldots,i_\ell\in[k+1]$ such that $C\supseteq v^1_{i_1}-a_{\{v^1,v^2\}}-v^2_{i_2}-a_{\{v^2,v^3\}}-v^3_{i_3}-\cdots - v^{\ell-1}_{i_\ell}-a_{\{v^{\ell-1},v^\ell\}}-v^\ell_{i_\ell}$. Since $V(C)\cap S'=\emptyset$, we have that $a_{\{v^1,v^2\}},a_{\{v^2,v^2\}},\ldots,a_{\{v^{\ell-1},v^{\ell}\}}\notin S$. In turn, this implies that $P$ exists in $G-S$. However, since $S$ is an $(s,t)$-cut in $G$, we have thus reached a contradiction.
\end{proof}

\begin{claim}\label{claim:OCTtoMINCUT}
Let $S'$ be an odd cycle cycle transversal of $G'$ of size at most $k$. Then, $S=\{e\in E(G): a_e\in S\}$ is a minimum $(s,t)$-cut in $G$. Moreover, $G'-S'$ is a connected graph.
\end{claim}

\begin{proof}
We first show that $S$ is a minimum $(s,t)$-cut in $G$. Since $|S'|\leq k$, it follows that $|S|\leq k$. Now, targeting a contradiction, suppose that $S$ is not an $(s,t)$-cut in $G$. So, $G-S$ contains some path $P=v^1-v^2-\ldots-v^\ell$ where $v^1=s$ and $v^\ell=t$. Because $|S'|\leq k$, there exist $i_1,i_2,\ldots,i_\ell,j\in[k+1]$ such that, for every $r\in[\ell]$, $v^r_{i_r}\notin S'$, and $x_j,y_j\notin S'$. By Claim \ref{claim:oct2}, $C=v^1_{i_1}-a_{\{v^1,v^2\}}-v^2_{i_2}-a_{\{v^2,v^3\}}-v^3_{i_3}-\cdots - v^{\ell-1}_{i_\ell}-a_{\{v^{\ell-1},v^\ell\}}-v^\ell_{i_\ell}-x_j-y_j-v^1_{i_1}$ is an odd cycle in $G'$. Since $P$ belongs to $G-S$, we have that $a_{\{v^1,v^2\}}, a_{\{v^2,v^3\}},\ldots,a_{\{v^{\ell-1},v^\ell\}}\notin S'$. Hence, from our choice of $i_1,i_2,\ldots,i_\ell,j$, it follows that $C$ belongs to $G'-S'$. However, since $S'$ is an odd cycle transversal of $G'$, we have thus reached a contradiction.

Because $S$ is a minimum $(s,t)$-cut in $G$, it follows that $|S|=k$. So, $S'=\{a_e: e\in S\}$. Moreover, because $G$ is a connected graph and $S$ is a minimum $(s,t)$-cut in $G$, $G-S$ consists of exactly  two connected components: one component that contains $s$, and the other component that contains $t$. However, by the definition of $G_2$, this implies that in $G_2-S'$, every connected component must contain $s_i$ or $t_i$ for some $i\in[k+1]$. In turn, by the definition of $G'$, this implies that $G'-S'$ is a connected graph.
\end{proof}

Observe that, from Claims \ref{claim:MINCUTtoOCT} and \ref{claim:OCTtoMINCUT}, it follows that the number of minimum $(s,t)$-cuts in $G$ is equal to the number of odd cycle transversals of $G'$ of size at most $k$. (In fact, every odd cycle transversal of $G'$ of size at most $k$ is of size exactly $k$.)  Moreover, the second part of Claim \ref{claim:OCTtoMINCUT} shows that $(G',k)$ is nice. So, given the the number of odd cycle transversals of $G'$ of size at most $k$, the lifting procedure of the PPT simply outputs this number. Thus, the proof is complete.
\end{proof}

\begin{restatable}{lemma}{PPTVC}
\label{lem:PPTVC}
There exists a PPT from {\sc \#$k$-Odd Cycle Transversal} restricted to nice instances to {\sc \#$\ell$-Vertex Cover} and {\sc \#$m$-Vertex Cover}.
\end{restatable}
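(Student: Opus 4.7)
The plan is to apply the classical polynomial parameter transformation from Odd Cycle Transversal to Vertex Cover above Matching, namely Lemma 3.10 of the textbook by Cygan et al., which is based on the bipartite double cover. Given a nice instance $(G,k)$ of \#$k$-Odd Cycle Transversal, the \textsf{reduce} procedure outputs $(G',k')$ where $V(G')=V_1\cup V_2$ consists of two disjoint copies of $V(G)$, for every $\{u,v\}\in E(G)$ we add the edges $\{u_1,v_2\}$ and $\{u_2,v_1\}$, and $k'=|V(G)|+k$. Since $G'$ is bipartite, K\"onig's theorem gives $\mathsf{LP}_{VC}(G')=\mu(G')$, so the parameters $\ell$ and $m$ of the output coincide, and a direct calculation shows $k'-m=k$; hence the transformation is parameter-preserving. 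Since $\ell\le m$ always, it suffices to establish the PPT to \#$m$-Vertex Cover, and \#$\ell$-Vertex Cover follows as a corollary.

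For the counting correspondence, I will establish that the number of vertex covers of $G'$ of size at most $k'$ is precisely twice the number of odd cycle transversals of $G$ of size at most $k$, so that the \textsf{lift} procedure simply returns the $Q$-count divided by two. In the forward direction, given an OCT $S$ of $G$ with $|S|\le k$, the graph $G-S$ is bipartite and, by the nice assumption, connected, hence admits exactly two proper $2$-colorings $\alpha:V(G)\setminus S\to\{1,2\}$; each pair $(S,\alpha)$ lifts to a distinct vertex cover of $G'$ of size $|V(G)|+|S|\le k'$ by selecting both copies of each $v\in S$ and the $\alpha(v)$-copy of each $v\notin S$ together with the appropriate complementary copies dictated by the edges of the double cover.

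The main obstacle, and where the nice condition is used crucially, is the backward direction: showing that every vertex cover $C$ of $G'$ of size at most $k'$ arises from exactly one such pair. Setting $S=\{v\in V(G):v_1,v_2\in C\}$ and, for $v\notin S$, reading off $\alpha(v)$ as the index of the unique copy of $v$ that lies in $C$, the vertex cover property on the double-cover edges translates into an edge-consistency constraint for $\alpha$ on $G-S$. Combined with the connectivity of $G-S$ guaranteed by niceness, this constraint will be shown to force $\alpha$ to be a proper $2$-coloring of $G-S$ (ruling out ``pathological'' covers in which whole components would be mis-assigned), and in turn to force $|S|\le k$ and $S$ to be an OCT. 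The delicate part, which I expect to constitute the bulk of the technical work, is this case analysis verifying that non-OCT subsets of size at most $k$ do not give rise to any valid vertex cover of $G'$ of size at most $k'$ under the nice hypothesis, thereby yielding the clean $2$-to-$1$ count.
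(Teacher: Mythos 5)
Your reduction uses the wrong graph. You construct $G'$ as the bipartite double cover of $G$ (cross-edges $\{u_1,v_2\}$ and $\{u_2,v_1\}$ for each $\{u,v\}\in E(G)$, plus the matching), but Lemma~3.10 of Cygan et al., which the paper invokes, takes two \emph{disjoint copies of $G$} (within-copy edges $\{u_1,v_1\}$ and $\{u_2,v_2\}$) joined by the perfect matching $\{v_1,v_2\}$. These are genuinely different graphs: yours is always bipartite, as you yourself note, whereas the correct $G'$ contains two copies of $G$ as subgraphs and so is bipartite only when $G$ already is.

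This is not a cosmetic mislabelling; the counting correspondence fails for the double cover. In your forward lift, for an edge $\{u,v\}$ of $G-S$ with $\alpha(u)=1$ and $\alpha(v)=2$, the cross-edge $\{u_2,v_1\}$ has neither endpoint selected, so the proposed set is not a vertex cover of your $G'$. In the backward direction, covering the two cross-edges associated with an edge $\{u,v\}$ of $G-S$ forces $\alpha(u)=\alpha(v)$, not $\alpha(u)\neq\alpha(v)$; hence $\alpha$ must be constant on each component of $G-S$, which imposes no bipartiteness constraint on $G-S$ whatsoever, so non-OCT sets $S$ do produce small vertex covers. Concretely, take $G=K_3$ and $k=0$, a (vacuously) nice instance with zero odd cycle transversals of size at most~$0$. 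Your $G'$ is $K_{3,3}$ with $k'=3$, which has exactly two vertex covers of size at most~$3$, so your $2$-to-$1$ lift would report one odd cycle transversal of size~$0$ in $K_3$ --- false. Once you replace the double cover by the within-copy construction, the rest of your outline (niceness gives a connected $G-S$, hence exactly two proper $2$-colorings, hence the factor of two) is the same as the paper's argument.
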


\begin{proof}
We refer to the PPT in \cite{DBLP:books/sp/CyganFKLMPPS15} (see Lemma 3.10) from {\sc $k$-Odd Cycle Transversal} to {\sc $\ell$-Vertex Cover} and {\sc $m$-Vertex Cover}. From the construction (and the proof of Lemma 3.10), {\em and because we only deal with nice instances}, we can see that, given an instance $(G,k)$ of {\sc $k$-Odd Cycle Transversal}, and the produced instance $(G',k')$ of {\sc $\ell$-Vertex Cover} (or {\sc $m$-Vertex Cover}), the number of odd cycle transversals of $G$ of size at most $k$ is exactly half the number of vertex covers of $G'$ of size at most $k'$. Thus, the correctness of the lemma follows. For the sake of completeness, we present the details in Appendix \ref{sec:vcProof}.
\end{proof}

Observe that all problems considered in this section are well-behaved: for a graph on $n$ and $m$ vertices and edges, $2^{n+m}$ is a trivial upper bound on the number of solutions for all of these problems. So, from Lemmas \ref{lem:countingPPT}, \ref{lem:stCutSum}, \ref{lem:PPTOCT} and \ref{lem:PPTVC}, we directly conclude the following theorem.

\lowerSUM*

\section{Lower Bound Based on EXACT-Cross-Composition}\label{sec:lower}

Our second new notion of a cross-composition is defined as follows. 

\begin{definition}[{\bf EXACT-Cross-Composition}]\label{def:exactCross}
Let $P: \Sigma^\star\rightarrow\mathbb{N}_0$ be a counting problem and $Q : \Sigma^\star \times\mathbb{N}_0\rightarrow\mathbb{N}_0$ be a parameterized counting problem. We say that $P$ {\em EXACT-cross-composes} into $Q$ if there exists a polynomial equivalence relation $R$ and an algorithm $A$, called an {\em EXACT-cross-composition}, satisfying the following conditions. The algorithm $A$ takes as input a sequence of strings $x_1,x_2,\ldots,x_t \in\Sigma^\star$ that are equivalent with respect to $R$, runs in time polynomial in $\sum_{i=1}^t |x_i|$, and outputs one instance $(y, k)\in \Sigma^\star \times\mathbb{N}_0$ such that:
\begin{itemize}
\item  $k \leq p(\max^t_{i=1} |x_i| + \log t)$ for some polynomial function $p$, and
\item there exists a polynomial-time procedure that, given $x_1,x_2,\ldots,x_t ,(y,k)$ and $Q(y,k)$, outputs $P(x_1),P(x_2),\ldots,P(x_t)$.
\end{itemize}
\end{definition}

We remark that EXACT-cross-compositions seem to be harder to devise than SUM-cross-compositions. In particular, for EXACT-cross-compositions, but not for SUM-cross-compositions, we are able to prove the following theorem.

\lowerThmEXACT*

By \#P $\subseteq$ ``NP/poly'', we mean that, for  any \#P-complete problem $P$, there exists a nondeterministic polynomial-time algorithm $A$  and a sequence of strings $(\alpha_n)_{n=0,1,2,\ldots}$, called {\em advice}, such that:
\begin{enumerate}
\item Given an instance $x$ of $P$ of size $n$, $A$ has access to $\alpha_n$, and:
	 \begin{enumerate}
	 \item For every computation path of $A$, the output is either $P(x)$ or ``Do Not Know''.
	 \item There exists a computation path of $A$ whose output is $P(x)$.
	 \end{enumerate}
\item There exists a polynomial $p:\mathbb{N}\rightarrow\mathbb{N}$ such that $|\alpha_i|\leq p(i)$ for every $i\in\mathbb{N}$.
\end{enumerate}

Another way to think of the phrase ``NP/poly'' is as follows. Observe that we can ``force'' a counting problem $P$ to be a decision problem $P'$ by the addition, to each of its instances, of another argument $s$, and, accordingly, modifying its task to that of deciding whether $P(x)\geq s$. Further, if we let $N_n$ be the maximum bitsize of the encoding of $P(x)$ for any instance $x$ of $P$ of size $n$ (that is polynomially bounded, since $P$ is well-behaved), then we can solve $P$ itself by making polynomially many calls to an algorithm for $P'$: We use this algorithm to perform a binary search on the range $[N_n]$. Under this interpretation, our proof implies the following statement. If a \#P-hard problem $P$ EXACT-cross-composes into a parameterized counting problem $Q$, then $Q$ does not admit a polynomial compression unless the ``forced'' decision version $P'$ of $P$ can be solved by a (standard) NP/poly algorithm, or, alternatively, $P$ can be solved by making polynomially many calls to a (standard) NP/poly algorithm.

Additionally, we would like to point out that the supposition that coNP is not contained in NP/poly (which is widely believed to be true, and it is the standard supposition on which hardness results for kernelization algorithms are based~\cite{fomin2019kernelization}) implies the supposition that \#P is not contained in ``NP/poly''. To see this, suppose that \#P $\subseteq$ ``NP/poly''. For example, this implies that {\sc \#$k$-Vertex Cover}, which is \#P-hard~\cite{DBLP:journals/cc/Greenhill00}, belongs ``NP/poly''. Now, consider the complement of {\sc $k$-Vertex Cover}, denoted by $Q$: Given a graph $G$ and a non-negative integer $k$, decide whether all vertex covers of $G$ are of size larger than $k$. Since {\sc $k$-Vertex Cover} is NP-hard~\cite{karp1972reducibility}, $Q$ is coNP-hard. However, because we suppose that {\sc \#$k$-Vertex Cover} belongs to ``NP/poly'', the above discussion implies we can determine whether its solution is at least one by making a single call to a (standard) NP/poly algorithm. However, this solves $Q$, and, hence, we derive that coNP $\subseteq$ NP/poly.

\begin{proof}[Proof of Theorem \ref{thm:exact}]
The proof follows lines similar to that of Proposition \ref{prop:crossComp}. For the sake of completeness, we present the details in Appendix \ref{sec:exactProof}.
\end{proof}

We proceed to present an EXACT-cross-composition for {\sc \#$w$-Min $(s,t)$-Cut}. We remark that we do not know how to present EXACT-cross-composition for the problems in Section \ref{sec:lowerSum}.

\begin{lemma}\label{lem:lowerExact}
{\sc \#Min $(s,t)$-Cut} EXACT-cross-composes into {\sc \#$w$-Min $(s,t)$-Cut}.
\end{lemma}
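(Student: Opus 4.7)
The plan is to augment the SUM-cross-composition of Lemma~\ref{lem:stCutSum} so that each individual count $q_i := P(G_i,s_i,t_i)$ can be recovered from the total count of the composed instance, not merely their sum. I will reuse the same polynomial equivalence relation $R$ (two instances are $R$-equivalent iff they share the same minimum $(s,t)$-cut size, which is polynomially checkable since min-cut is in P). Given $R$-equivalent instances $(G_1,s_1,t_1),\ldots,(G_\ell,s_\ell,t_\ell)$ with common min-cut size $k$, I begin by constructing the concatenated graph $G$ exactly as in Lemma~\ref{lem:stCutSum}, with $s=s_1$, $t=t_\ell$, and $t_i$ identified with $s_{i+1}$. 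By Claims~\ref{claim:st1}--\ref{claim:st2}, every minimum $(s,t)$-cut of $G$ lies in a single $G_i$ and is a minimum $(s_i,t_i)$-cut of $G_i$, so $G$ has exactly $\sum_i q_i$ minimum cuts.

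The new ingredient is to weight each copy's contribution by a distinct power of $2$. Set $m := 2\max_j |E(G_j)|$. For every $i\in[\ell]$, attach to $s_i$ and $t_i$ a bundle of $m(\ell-1)$ pairwise internally vertex-disjoint paths, of which exactly $m(i-1)$ have three internal vertices (four edges each) and the remaining $m(\ell-i)$ have a single internal vertex (two edges each). Let $G'$ denote the resulting graph, with parameter $w := \tw(G')$. Because the added paths only link $s_i$ to $t_i$, and every $s$-$t$ path in $G'$ must traverse every cut vertex $v_j := t_j = s_{j+1}$, I will show that every minimum $(s,t)$-cut of $G'$ arises uniquely by choosing an index $i\in[\ell]$, a minimum $(s_i,t_i)$-cut of $G_i$, and one edge from each of the $m(\ell-1)$ added paths of copy~$i$; paths of other copies lie entirely on one side of the cut. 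The resulting count is
\[
q' \;:=\; Q(G',s,t) \;=\; \sum_{i=1}^{\ell} q_i \cdot 4^{m(i-1)}\cdot 2^{m(\ell-i)} \;=\; 2^{m(\ell-1)}\sum_{i=1}^{\ell} q_i\cdot 2^{m(i-1)}.
\]

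The lifting procedure divides $q'$ by $2^{m(\ell-1)}$ and reads off the $q_i$'s as the digits of the base-$2^m$ expansion of the quotient; this is unambiguous because $q_i \leq 2^{|E(G_i)|} \leq 2^{m/2} < 2^m$ rules out carries, exactly as in the digit-extraction arguments of Theorems~\ref{thm:vcKernel} and~\ref{thm:exact}. For the parameter bound I will show $w \leq \max_i \tw(G_i) + 1$, which is polynomial in $\max_i|x_i|$ as required: gluing graphs at single cut vertices preserves treewidth, so $\tw(G)=\max_i\tw(G_i)$; then, after at most one additive increase to force $s_i$ and $t_i$ into a common bag of a tree decomposition of $G$, each added path can be attached by a leaf bag containing $s_i$, $t_i$, and one path-internal vertex without raising the width further.

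The main obstacle will be the exact structural characterisation of the minimum $(s,t)$-cuts in $G'$: that every such cut has size $k+m(\ell-1)$ and the form described above. The upper bound is immediate, since any choice as above separates $s$ from $t$ at cost $k+m(\ell-1)$. For the lower bound I will argue that any valid cut must \emph{fully block} some single copy: if no copy is fully blocked, then for each $i$ there is an uncut $s_i$-$t_i$ route (through $G_i$ or through an added path), and splicing them across $i=1,\ldots,\ell$ yields an uncut $s$-$t$ path, a contradiction. Fully blocking copy $i$ requires cutting all $m(\ell-1)$ internally disjoint added paths (at least one edge per path) together with an $(s_i,t_i)$-cut of $G_i$ of size at least $k$, giving the lower bound $k+m(\ell-1)$; equality forces the $G_i$-part to be a minimum cut and forbids any extra edges elsewhere, which pins down the displayed formula for $q'$ and completes the EXACT-cross-composition.
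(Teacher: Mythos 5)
Your proposal is correct and matches the paper's proof essentially step for step: same concatenated graph from Lemma~\ref{lem:stCutSum}, same bundles of $m(\ell-1)$ internally disjoint $s_i$--$t_i$ paths with $m(i-1)$ of length four and $m(\ell-i)$ of length two, the same resulting formula $q' = \sum_i q_i \cdot 2^{m(i-1)}\cdot 2^{m(\ell-1)}$, the same leading-digit extraction, and the same tree-decomposition argument obtained by inserting $t_i$ into all bags of $\mathcal{T}_i$ and hanging short chains off a bag containing $s_i$. The only cosmetic differences are that the paper also dispatches the degenerate case $\ell \geq 2^{m'}$ by brute-force enumeration (which your base-$2^m$ reading does not actually need) and states the treewidth bound as $\max\{2,\max_i \tw(G_i)+1\}$ to cover forests, neither of which affects the argument.
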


\begin{proof}
The equivalence relation $R$ is the same as the one in the proof of Lemma \ref{lem:stCutSum}. Now, we describe the EXACT-cross-composition. For this purpose, consider a sequence of strings $x_1,x_2,\ldots,x_\ell$ that are equivalent with respect to $R$. Similarly to the proof of Lemma \ref{lem:stCutSum}, we suppose that for every $i\in[\ell]$, $x_i=(G_i,s_i,t_i)$; then, the size of a minimum $(s_i,t_i)$-cut in $G_i$ is $k$. Let $m'=\max_{i=1}^\ell|E(G_i)|$. For every $i\in[\ell]$, let $q_i$ denote the (unknown) number of minimum $(s_i,t_i)$-cuts in $G_i$ (which are of size $k$). Observe that, if $\ell\geq 2^{m'}$, then, in polynomial time, we can iterate over every subset of edges of each of the graphs $G_i,i\in[\ell]$, and thereby compute $q_1,q_2,\ldots,q_\ell$. In this case, the design of an EXACT-cross-composition is trivial, and hence we suppose that $\ell<2^{m'}$.

Now, we consider the construction of $(G,s,t)$ given in the proof of Lemma \ref{lem:stCutSum}. However, here, we modify $G$ further in order to attain the output instance. Let $m=2m'$. Then, the output instance of {\sc \#$w$-Min $(s,t)$-Cut} is $(G',s,t)$, where $G'$ is defined as follows:
\[V(G')=V(G)\cup\left(\bigcup_{i=1}^\ell\{x^j_i: j\in[m(\ell-1)]\}\cup\{y^j_i,z^j_i: j\in[m(i-1)]\}\right),\ \mathrm{and}\]
\[E(G')=E(G)\cup\left(\bigcup_{i=1}^\ell E_i\right),\ \mathrm{where}\]
\[E_i=\{\{s_i,x^j_i\},\{x^j_i,y^j_i\},\{y^j_i,z^j_i\},\{z^j_i,t_i\}: j\in[m(i-1)]\} \cup \{\{s_i,x^j_i\},\{x^j_i,t_i\}: j\in[m(\ell-1)]\setminus[m(i-1)]\}.\]
We refer to Fig.~\ref{fig:cut2} for an illustration.  Clearly, the construction can be done in polynomial time. 

\begin{figure}
\begin{center}
  \includegraphics[scale=0.8]{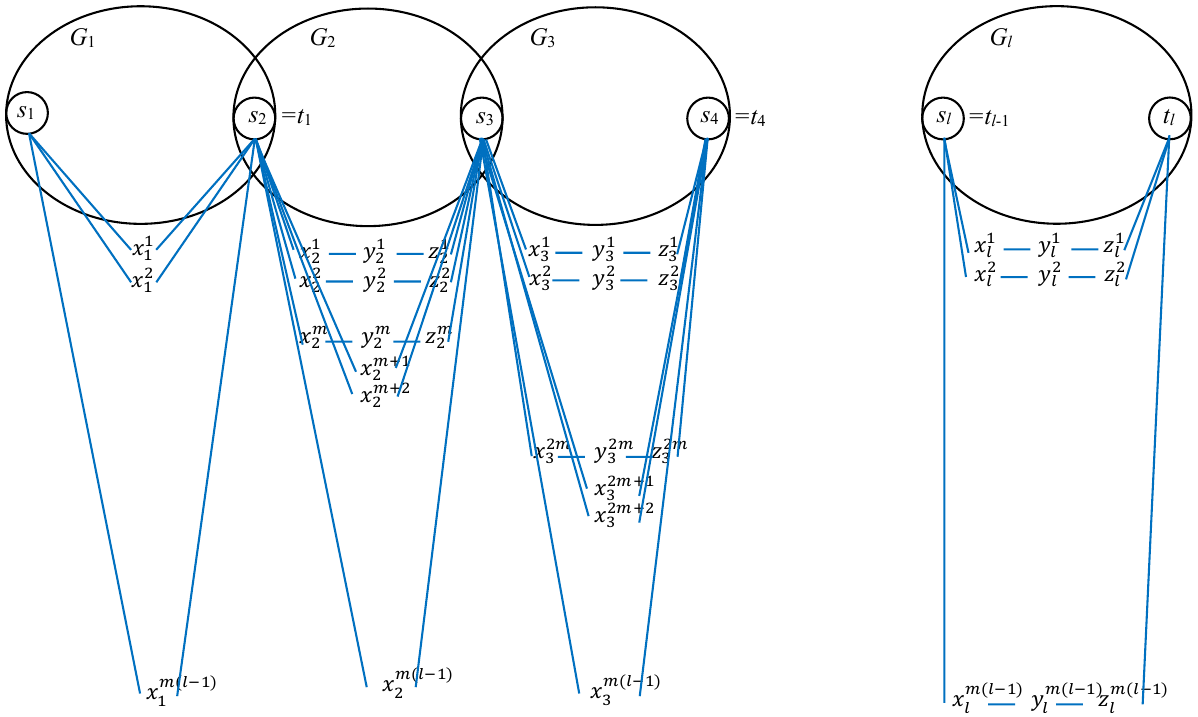}
  \caption{The construction in the proof of Lemma \ref{lem:stCutSum}.}
  \label{fig:cut2}
  \end{center}
\end{figure}

For the correctness of the composition, we first present an upper bound on the treewidth of $G'$, which is the parameter $w$ associated with $(G',s,t)$.

\begin{claim}\label{claim:tw}
$w=\mathsf{tw}(G')\leq\max\{2,\max_{i=1}^\ell\mathsf{tw}(G_i)+1\}$.
\end{claim}

\begin{proof}
Let $w^\star=\max_{i=1}^\ell\mathsf{tw}(G_i)$. For every $i\in[\ell]$, let ${\cal T}_i=(T_i,\beta_i)$ be a tree decomposition of $G_i$ of width $\mathsf{tw}(G_i)\leq w^\star$, and define ${\cal T}'_i=(T'_i,\beta'_i)$ as follows:
\begin{itemize}
\item Choose $v_i\in V(T_i)$ such that $s_i\in\beta(a_i)$.
\item $V(T'_i)=V(T_i)\cup \{a^j_i: j\in[m(\ell-1)]\}\cup\{b^j_i,c^j_i: j\in[m(i-1)]\}$.
\item $E(T'_i)=E(T_i)\cup \{\{v_i,a^j_i\}: j\in[m(\ell-1)]\}\cup\{\{a^j_i,b^j_i\},\{b^j_i,c^j_i\}: j\in [m(i-1)]\}$.
\item For every $u\in V(T_i)$, $\beta'_i(u)=\beta_i(u)\cup\{t_i\}$.
\item For every $j\in [m(\ell-1)]$,$\beta'_i(a^j_i)=\{s_i,x^j_i,t_i\}$.
\item For every $j\in [m(i-1)]$, $\beta'_i(b^j_i)=\{x^j_i,y^j_i,t_i\}$ and $\beta'_i(c^j_i)=\{y^j_i,z^j_i,t_i\}$.
\end{itemize}
It is straightforward to verify that ${\cal T}'_i$ is a tree decomposition of $G'[V(G_i)\cup \{x^j_i: j\in[m(\ell-1)]\}\cup\{y^j_i,z^j_i: j\in[m(i-1)]\}]$, and its width is $w_i'=\max\{2,\mathsf{tw}(G_i)+1\}\leq \max\{2,w^\star+1\}$.

Now, we define ${\cal T}'=(T',\beta')$ as follows:
\begin{itemize}
\item $V(T')=\bigcup_{i=1}^\ell V(T'_i)$.
\item $E(T')=\bigcup_{i=1}^{\ell-1} (E(T'_i)\cup\{\{v_i,v_{i+1}\}\})\cup E(T'_{\ell})$.
\item For every $i\in[\ell]$ and $u\in V(T'_i)$, $\beta'(u)=\beta'_i(u)$.
\end{itemize}
It is straightforward to verify that ${\cal T}'$ is a tree decomposition of $G'$, and its width is bounded from above by $\max\{2,w^\star+1\}$. This completes the proof of the claim.
\end{proof}

Second, we present two immediate claims about the correspondence between the cuts of $G_i$, $i\in[\ell]$, and the cuts of $G'$. Towards this, for all $i\in[\ell]$, let $C_i=\{(\{a_1,b_1\},\{a_2,b_2\},\ldots,\{a_{m(\ell-1)},b_{m(\ell-1)}\}):$ for all $j\in [m(i-1)]$, $\{a_j,b_j\}\in\{\{s_i,x^j_i\},\{x^j_i,y^j_i\},\{y^j_i,z^j_i\},\{z^j_i,t_i\}\}$, and for all $j\in[m(\ell-1)]\setminus[m(i-1)], \{a_j,b_j\}\in\{\{s_i,x^j_i\},\{x^j_i,t_i\}\}\}$; observe that $|C_i|=2^{m(i-1)}\cdot 2^{m(\ell-1)}$.

\begin{claim}\label{claim:mincut1}
Let $S$ be a minimum $(s,t)$-cut in $G$. Then, there exists $i\in[\ell]$ such that $S=A\cup B$ where $A$ is a minimum $(s,t)$-cut in $G_i$ and $B\in C_i$.
\end{claim}

\begin{claim}\label{claim:mincut2}
Let $A$ be a minimum $(s_i,t_i)$-cut in $G_i$, for some $i\in[\ell]$. Then, for all $B\in C_i$, $S=A\cup B$ is a minimum $(s,t)$-cut in $G$.
\end{claim}

Additionally, let $q$ denote the number of minimum $(s,t)$-cuts in $G'$. So, from Claims \ref{claim:mincut1} and \ref{claim:mincut2}, we arrive at the following conclusion.

\begin{claim}\label{claim:q}
$q=\sum_{i=1}^\ell q_i\cdot 2^{m(i-1)}\cdot 2^{m(\ell-1)}$.
\end{claim}

We are now ready to show how to extract each $q_i$, $i\in[\ell]$, given $q$.

\begin{claim}\label{claim:extractq}
There exists a polynomial-time procedure that, given $(G_1,s_1,t_1),(G_2,s_2,t_2),\ldots,(G_\ell,s_\ell,t_\ell),$ $(G',s,t)$ and $q$, outputs $q_1,q_2,\ldots,q_\ell$.
\end{claim}

\begin{proof}
The procedure performs the following operations:
\begin{enumerate}
\item Initialize $\widehat{q}\leftarrow q$.
\item For $i=\ell,\ell-1,\ldots,1$:
	\begin{enumerate}
	\item Let $q_i\leftarrow \lfloor\widehat{q}/(2^{m(i-1)}\cdot 2^{m(\ell-1)})\rfloor$.
	\item Update $\widehat{q}\leftarrow \widehat{q}-q_i\cdot 2^{m(i-1)}\cdot 2^{m(\ell-1)}$.
	\end{enumerate}
\item Return $q_1,q_2,\ldots,q_\ell$.
\end{enumerate}
Clearly, the procedure runs in polynomial time. Additionally, recall that $\ell<2^{m'}$ and $m=2m'$, and observe that for every $i\in[\ell]$, we have that $q_i\in[2^{m'}]$. Thus, for every $i\in[\ell]$, we have that \[\begin{array}{ll}
2^{m(i-1)}\cdot 2^{m(\ell-1)}& > \ell\cdot 2^{m'}\cdot 2^{m((i-1)-1)}\cdot 2^{m(\ell-1)}\\
& \geq \sum_{j=1}^{i-1}2^{m'}\cdot 2^{m((i-1)-1)}\cdot 2^{m(\ell-1)}\\
& \geq \sum_{j=1}^{i-1}q_j\cdot 2^{m(j-1)}\cdot 2^{m(\ell-1)}.
\end{array}\]
Due to this inequality, the correctness of the procedure follows from Claim \ref{claim:q}.
\end{proof}

Thus, the correctness of the composition follows from Claims \ref{claim:tw} and \ref{claim:extractq}.
\end{proof}

As already noted in the previous section, {\sc \#Min $(s,t)$-Cut} (and, hence, also any parameterized version of it) is well-behaved. So, from Proposition \ref{prop:minCutHard}, Theorem \ref{thm:exact} and Lemma \ref{lem:lowerExact}, we directly conclude the following theorem.

\lowerEXACT*

\bibliographystyle{siam}
\bibliography{references}

\begin{thebibliography}{10}

\bibitem{agarwal2019parameterized}
{\sc A.~Agarwal, S.~Saurabh, and P.~Tale}, {\em On the parameterized complexity
  of contraction to generalization of trees}, Theory of Computing Systems, 63
  (2019), pp.~587--614.

\bibitem{berge2019fixed}
{\sc P.~Berg{\'e}, B.~Mouscadet, A.~Rimmel, and J.~Tomasik}, {\em
  Fixed-parameter tractability of counting small minimum (s, t)-cuts}, in
  International Workshop on Graph-Theoretic Concepts in Computer Science,
  Springer, 2019, pp.~79--92.

\bibitem{DBLP:journals/jcss/BodlaenderDFH09}
{\sc H.~L. Bodlaender, R.~G. Downey, M.~R. Fellows, and D.~Hermelin}, {\em On
  problems without polynomial kernels}, J. Comput. Syst. Sci., 75 (2009),
  pp.~423--434.

\bibitem{bodlaender2009problems}
{\sc H.~L. Bodlaender, R.~G. Downey, M.~R. Fellows, and D.~Hermelin}, {\em On
  problems without polynomial kernels}, Journal of Computer and System
  Sciences, 75 (2009), pp.~423--434.

\bibitem{bodlaender2014kernelization}
{\sc H.~L. Bodlaender, B.~M. Jansen, and S.~Kratsch}, {\em Kernelization lower
  bounds by cross-composition}, SIAM Journal on Discrete Mathematics, 28
  (2014), pp.~277--305.

\bibitem{DBLP:conf/focs/0002R21}
{\sc M.~Bressan and M.~Roth}, {\em Exact and approximate pattern counting in
  degenerate graphs: New algorithms, hardness results, and complexity
  dichotomies}, in 62nd {IEEE} Annual Symposium on Foundations of Computer
  Science, {FOCS} 2021, Denver, CO, USA, February 7-10, 2022, 2021,
  pp.~276--285.

\bibitem{buss1993nondeterminism}
{\sc J.~F. Buss and J.~Goldsmith}, {\em Nondeterminism within p\^{}}, SIAM
  Journal on Computing, 22 (1993), pp.~560--572.

\bibitem{cai1997advice}
{\sc L.~Cai, J.~Chen, R.~G. Downey, and M.~R. Fellows}, {\em Advice classes of
  parameterized tractability}, Annals of pure and applied logic, 84 (1997),
  pp.~119--138.

\bibitem{cormen2001introduction}
{\sc T.~H. Cormen, C.~E. Leiserson, R.~L. Rivest, and C.~Stein}, {\em
  Introduction to algorithms second edition}, MIT Press and McGraw-Hill, 2001.

\bibitem{creignou2017paradigms}
{\sc N.~Creignou, A.~Meier, J.-S. M{\"u}ller, J.~Schmidt, and H.~Vollmer}, {\em
  Paradigms for parameterized enumeration}, Theory of Computing Systems, 60
  (2017), pp.~737--758.

\bibitem{DBLP:conf/iwpec/Curticapean18}
{\sc R.~Curticapean}, {\em Counting problems in parameterized complexity}, in
  13th International Symposium on Parameterized and Exact Computation, {IPEC}
  2018, August 20-24, 2018, Helsinki, Finland, 2018, pp.~1:1--1:18.

\bibitem{DBLP:conf/stoc/Curticapean21}
\leavevmode\vrule height 2pt depth -1.6pt width 23pt, {\em A full complexity
  dichotomy for immanant families}, in {STOC} '21: 53rd Annual {ACM} {SIGACT}
  Symposium on Theory of Computing, Virtual Event, Italy, June 21-25, 2021,
  2021, pp.~1770--1783.

\bibitem{DBLP:conf/stoc/CurticapeanDM17}
{\sc R.~Curticapean, H.~Dell, and D.~Marx}, {\em Homomorphisms are a good basis
  for counting small subgraphs}, in Proceedings of the 49th Annual {ACM}
  {SIGACT} Symposium on Theory of Computing, {STOC} 2017, Montreal, QC, Canada,
  June 19-23, 2017, 2017, pp.~210--223.

\bibitem{DBLP:conf/soda/CurticapeanLN18}
{\sc R.~Curticapean, N.~Lindzey, and J.~Nederlof}, {\em A tight lower bound for
  counting hamiltonian cycles via matrix rank}, in Proceedings of the
  Twenty-Ninth Annual {ACM-SIAM} Symposium on Discrete Algorithms, {SODA} 2018,
  New Orleans, LA, USA, January 7-10, 2018, 2018, pp.~1080--1099.

\bibitem{DBLP:books/sp/CyganFKLMPPS15}
{\sc M.~Cygan, F.~V. Fomin, L.~Kowalik, D.~Lokshtanov, D.~Marx, M.~Pilipczuk,
  M.~Pilipczuk, and S.~Saurabh}, {\em Parameterized Algorithms}, Springer,
  2015.

\bibitem{DBLP:conf/soda/DellLM20}
{\sc H.~Dell, J.~Lapinskas, and K.~Meeks}, {\em Approximately counting and
  sampling small witnesses using a colourful decision oracle}, in Proceedings
  of the 2020 {ACM-SIAM} Symposium on Discrete Algorithms, {SODA} 2020, Salt
  Lake City, UT, USA, January 5-8, 2020, 2020, pp.~2201--2211.

\bibitem{diestel2017extremal}
{\sc R.~Diestel}, {\em Extremal graph theory}, in Graph Theory, Springer, 2017,
  pp.~173--207.

\bibitem{downey2013fundamentals}
{\sc R.~G. Downey and M.~R. Fellows}, {\em Fundamentals of parameterized
  complexity}, vol.~4, Springer, 2013.

\bibitem{drucker2015new}
{\sc A.~Drucker}, {\em New limits to classical and quantum instance
  compression}, SIAM Journal on Computing, 44 (2015), pp.~1443--1479.

\bibitem{eiben2017lossy}
{\sc E.~Eiben, D.~Hermelin, and M.~Ramanujan}, {\em Lossy kernels for hitting
  subgraphs}, in 42nd International Symposium on Mathematical Foundations of
  Computer Science (MFCS 2017), Schloss Dagstuhl-Leibniz-Zentrum fuer
  Informatik, 2017, pp.~67:1--67:14.

\bibitem{eiben2019lossy}
{\sc E.~Eiben, M.~Kumar, A.~E. Mouawad, F.~Panolan, and S.~Siebertz}, {\em
  Lossy kernels for connected dominating set on sparse graphs}, SIAM Journal on
  Discrete Mathematics, 33 (2019), pp.~1743--1771.

\bibitem{fellows2006lost}
{\sc M.~R. Fellows}, {\em The lost continent of polynomial time: Preprocessing
  and kernelization}, in International Workshop on Parameterized and Exact
  Computation, Springer, 2006, pp.~276--277.

\bibitem{DBLP:journals/jcss/FellowsKRS18}
{\sc M.~R. Fellows, A.~Kulik, F.~A. Rosamond, and H.~Shachnai}, {\em
  Parameterized approximation via fidelity preserving transformations}, J.
  Comput. Syst. Sci., 93 (2018), pp.~30--40.

\bibitem{DBLP:conf/stoc/FockeR22}
{\sc J.~Focke and M.~Roth}, {\em Counting small induced subgraphs with
  hereditary properties}, in {STOC} '22: 54th Annual {ACM} {SIGACT} Symposium
  on Theory of Computing, Rome, Italy, June 20 - 24, 2022, 2022,
  pp.~1543--1551.

\bibitem{fomin2012planar}
{\sc F.~V. Fomin, D.~Lokshtanov, N.~Misra, and S.~Saurabh}, {\em Planar
  f-deletion: Approximation, kernelization and optimal fpt algorithms}, in 2012
  IEEE 53rd Annual Symposium on Foundations of Computer Science, IEEE, 2012,
  pp.~470--479.

\bibitem{fomin2019kernelization}
{\sc F.~V. Fomin, D.~Lokshtanov, S.~Saurabh, and M.~Zehavi}, {\em
  Kernelization: theory of parameterized preprocessing}, Cambridge University
  Press, 2019.

\bibitem{DBLP:journals/jcss/GolovachKKL22}
{\sc P.~A. Golovach, C.~Komusiewicz, D.~Kratsch, and V.~B. Le}, {\em Refined
  notions of parameterized enumeration kernels with applications to matching
  cut enumeration}, J. Comput. Syst. Sci., 123 (2022), pp.~76--102.

\bibitem{DBLP:conf/esa/0001KW19}
{\sc F.~Grandoni, S.~Kratsch, and A.~Wiese}, {\em Parameterized approximation
  schemes for independent set of rectangles and geometric knapsack}, in 27th
  Annual European Symposium on Algorithms, {ESA} 2019, September 9-11, 2019,
  Munich/Garching, Germany, 2019, pp.~53:1--53:16.

\bibitem{DBLP:journals/cc/Greenhill00}
{\sc C.~S. Greenhill}, {\em The complexity of counting colourings and
  independent sets in sparse graphs and hypergraphs}, Comput. Complex., 9
  (2000), pp.~52--72.

\bibitem{isihara2013statistical}
{\sc A.~Isihara}, {\em Statistical physics}, Academic Press, 2013.

\bibitem{newIPEC}
{\sc B.~M.~P. Jansen and B.~van~der Steenhoven}, {\em Kernelization for
  counting problems on graphs: Preserving the number of minimum solutions}, in
  Accepted to 18th IInternational Symposium on Parameterized and Exact
  Computation (IPEC 2023), Schloss Dagstuhl-Leibniz-Zentrum fuer Informatik,
  2023.

\bibitem{karp1972reducibility}
{\sc R.~M. Karp}, {\em Reducibility among combinatorial problems}, in
  Complexity of computer computations, Springer, 1972, pp.~85--103.

\bibitem{DBLP:conf/isaac/0002ST18}
{\sc E.~J. Kim, M.~J. Serna, and D.~M. Thilikos}, {\em Data-compression for
  parametrized counting problems on sparse graphs}, in 29th International
  Symposium on Algorithms and Computation, {ISAAC} 2018, December 16-19, 2018,
  Jiaoxi, Yilan, Taiwan, 2018, pp.~20:1--20:13.

\bibitem{kratsch2020representative}
{\sc S.~Kratsch and M.~Wahlstr{\"o}m}, {\em Representative sets and irrelevant
  vertices: New tools for kernelization}, Journal of the ACM (JACM), 67 (2020),
  pp.~1--50.

\bibitem{krithika2018revisiting}
{\sc R.~Krithika, D.~Majumdar, and V.~Raman}, {\em Revisiting connected vertex
  cover: Fpt algorithms and lossy kernels}, Theory of Computing Systems, 62
  (2018), pp.~1690--1714.

\bibitem{krithika2016lossy}
{\sc R.~Krithika, P.~Misra, A.~Rai, and P.~Tale}, {\em Lossy kernels for graph
  contraction problems}, in 36th IARCS Annual Conference on Foundations of
  Software Technology and Theoretical Computer Science (FSTTCS 2016), Schloss
  Dagstuhl-Leibniz-Zentrum fuer Informatik, 2016.

\bibitem{lokshtanov2017lossy}
{\sc D.~Lokshtanov, F.~Panolan, M.~Ramanujan, and S.~Saurabh}, {\em Lossy
  kernelization}, in Proceedings of the 49th Annual ACM SIGACT Symposium on
  Theory of Computing, 2017, pp.~224--237.

\bibitem{DBLP:conf/stoc/LokshtanovPRS17}
{\sc D.~Lokshtanov, F.~Panolan, M.~S. Ramanujan, and S.~Saurabh}, {\em Lossy
  kernelization}, in Proceedings of the 49th Annual {ACM} {SIGACT} Symposium on
  Theory of Computing, {STOC} 2017, Montreal, QC, Canada, June 19-23, 2017,
  2017, pp.~224--237.

\bibitem{DBLP:conf/soda/LokshtanovSZ21}
{\sc D.~Lokshtanov, S.~Saurabh, and M.~Zehavi}, {\em Efficient computation of
  representative weight functions with applications to parameterized counting
  (extended version)}, in Proceedings of the 2021 {ACM-SIAM} Symposium on
  Discrete Algorithms, {SODA} 2021, Virtual Conference, January 10 - 13, 2021,
  2021, pp.~179--198.

\bibitem{DBLP:conf/soda/Manurangsi19}
{\sc P.~Manurangsi}, {\em A note on max k-vertex cover: Faster fpt-as, smaller
  approximate kernel and improved approximation}, in 2nd Symposium on
  Simplicity in Algorithms, {SOSA} 2019, January 8-9, 2019, San Diego, CA,
  {USA}, 2019, pp.~15:1--15:21.

\bibitem{milo2002network}
{\sc R.~Milo, S.~Shen-Orr, S.~Itzkovitz, N.~Kashtan, D.~Chklovskii, and
  U.~Alon}, {\em Network motifs: simple building blocks of complex networks},
  Science, 298 (2002), pp.~824--827.

\bibitem{provan1983complexity}
{\sc J.~S. Provan and M.~O. Ball}, {\em The complexity of counting cuts and of
  computing the probability that a graph is connected}, SIAM Journal on
  Computing, 12 (1983), pp.~777--788.

\bibitem{thilikos2021compactors}
{\sc D.~M. Thilikos}, {\em Compactors for parameterized counting problems},
  Computer Science Review, 39 (2021), p.~100344.

\bibitem{DBLP:conf/tamc/Thurley07}
{\sc M.~Thurley}, {\em Kernelizations for parameterized counting problems}, in
  Theory and Applications of Models of Computation, 4th International
  Conference, {TAMC} 2007, Shanghai, China, May 22-25, 2007, Proceedings, 2007,
  pp.~703--714.

\bibitem{valiant1979complexity}
{\sc L.~G. Valiant}, {\em The complexity of computing the permanent},
  Theoretical computer science, 8 (1979), pp.~189--201.

\bibitem{valiant1979complexity1}
\leavevmode\vrule height 2pt depth -1.6pt width 23pt, {\em The complexity of
  enumeration and reliability problems}, SIAM Journal on Computing, 8 (1979),
  pp.~410--421.

\bibitem{van2020approximate}
{\sc R.~van Bevern, T.~Fluschnik, and O.~Y. Tsidulko}, {\em On approximate data
  reduction for the rural postman problem: Theory and experiments}, Networks,
  76 (2020), pp.~485--508.

\end{thebibliography}

\appendix


\section{Proofs Omitted from Section \ref{sec:prelims}}\label{sec:prelimsProofs}

\countingFPTKernel*

\begin{proof}
In one direction, suppose that $P$ admits a kernel, and let $\mathsf{reduce}$ and $\mathsf{lift}$ be the procedures corresponding to it. Let $\cal F$ be a finite-time algorithm for $P$.  Then, we design an FPT algorithm $\cal A$ for $P$ as follows. Given an instance $(x,k)$ of $P$, $\cal A$ calls $\mathsf{reduce}$ on $(x,k)$ to obtain, in polynomial time, an instance $(x',k')$ of $P$ whose size is bounded from above by $f(k)$ for some computable function $f$. Then, $\cal A$ calls $\cal F$ on $(x',k')$ to obtain $s'=P(x',k')$ in time at most $g(|x'|+k')\leq g(f(k))$ for some computable function $g$. Lastly, $\cal A$ calls $\mathsf{lift}$ on $(x,k),(x',k'),s'$ to obtain, in polynomial time, $s=P(x,k)$. Overall, the running time is FPT, and hence the proof of this direction is complete.

In the second direction, suppose that $P$ admits an FPT algorithm $\cal A$, whose running time is bounded from above by $f(k)\cdot |x|^c$ for some computable function $f$ and a fixed constant $c$. We define $\mathsf{reduce}$ and $\mathsf{lift}$ as follows. Let $(x,k)$ be an instance of $P$. Then:
\begin{enumerate}
\item If $|x|\leq f(k)$, then the output of $\mathsf{reduce}$ on $(x,k)$ is $(x,k)$ itself. Observe that, in this case, the requirements concerning the running time and the size of the output are trivially satisfied. Given the input $(x,k)$, the output $(x,k)$ and $s=P(x,k)$, $\mathsf{lift}$ simply outputs $s$.  
\item If $|x|>f(k)$, then $\mathsf{reduce}$ outputs some arbitrary instance $(x',k')$ of constant-size.  So, the requirements concerning the running time and the size of the output are trivially satisfied.  Given $(x,k),(x',k')$ and $s'=P(x',k')$, $\mathsf{lift}$ calls $\cal A$ on $(x,k)$, and returns its output, which is $s=P(x,k)$. Then, the running time of $\mathsf{lift}$ is bounded from above by $\OO(f(k)\cdot |x|^c)\leq \OO(|x|^{c+1})$, and, hence, $\mathsf{lift}$ runs in polynomial time.
\end{enumerate}
This completes the proof.
\end{proof}

\countingPPT*

\begin{proof}
Suppose that $Q$ admits a polynomial compression into some parameterized counting problem $R$, and let $\mathsf{reduce}_Q$ and $\mathsf{lift}_Q$ be the procedures corresponding to it. Let $\mathsf{reduce}_{\mathrm{PPT}}$ and $\mathsf{lift}_{\mathrm{PPT}}$ denote the procedures corresponding to the PPT from $P$ to $Q$. We define $\mathsf{reduce}_P$ and $\mathsf{lift}_P$ as follows. Given an instance $(x,k)$ of $P$, $\mathsf{reduce}_P$ calls $\mathsf{reduce}_{\mathrm{PPT}}$ on $(x,k)$ to obtain an instance $(x',k')$ of $Q$. Then, $\mathsf{reduce}_P$ calls $\mathsf{reduce}_Q$ on $(x',k')$ to obtain an instance $(x'',k'')$ of $R$, and returns $(x'',k'')$. Clearly, the running time is polynomial. Further, since the size of $(x'',k'')$ is bounded from above by some polynomial in $k'$, and $k'$ is bounded from above by some polynomial in $k$, so we get that the size of $(x'',k'')$ is bounded from above by some polynomial in $k$.

Now, given $(x,k),(x'',k'')$, and $s''=R(x'',k'')$, $\mathsf{lift}_P$ calls $\mathsf{lift}_Q$ on $(x',k'),(x'',k''),s''$ to obtain $s'$ such that $s'=Q(x',k')$. Then, $\mathsf{lift}_P$ calls $\mathsf{lift}_{\mathrm{PPT}}$ on $(x,k),(x',k'),s'$ to obtain $s$ such that $s=P(x,k)$. Clearly, the running time is polynomial. This completes the proof.
\end{proof}

\section{Proof of Lemma \ref{lem:PPTVC}}\label{sec:vcProof}

\PPTVC*

\begin{proof}
For the description of the construction, let $(G,k)$ be a nice instance of {\sc \#$k$-Odd Cycle Transversal}. The output of the reduction procedure of the PPT is the instance $(G',k')$ of {\sc \#$\ell$-Vertex Cover} (or {\sc \#$m$-Vertex Cover}) defined as follows. Let $G_1$ and $G_2$ be two copies of $G$. For $i\in[2]$ and $v\in V(G)$, let $v_i$ be the copy of $v$ in $G_i$. Then, $V(G')=V(G_1)\cup V(G_2)$ and $E(G')=E(G_1)\cup E(G_2)\cup\{\{v_1,v_2\}: v\in V(G)\}$. Additionally, $k'=|V(G)|+k$. Clearly, the construction can be done in polynomial time.

We first consider the value of the parameter of $(G',k')$:

\begin{claim}
$\ell=m=k$.
\end{claim}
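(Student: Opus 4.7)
The plan is to verify the two claimed equalities by computing $\mu(G')$ and $\mathsf{LP_{VC}}(G')$ exactly, and then substituting into the definitions $\ell = k' - \mathsf{LP_{VC}}(G')$ and $m = k' - \mu(G')$ together with $k' = |V(G)| + k$.

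First I would handle $\mu(G')$. The key observation is that the edge set $M = \{\{v_1, v_2\} : v \in V(G)\}$ is by construction a matching in $G'$ of cardinality exactly $|V(G)|$, and moreover $M$ saturates every vertex of $G'$ since $V(G') = V(G_1) \cup V(G_2)$ has exactly $2|V(G)|$ vertices. Thus $M$ is a perfect matching, which immediately gives both the lower bound $\mu(G') \ge |V(G)|$ and the upper bound $\mu(G') \le |V(G')|/2 = |V(G)|$; so $\mu(G') = |V(G)|$.

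Next I would pin down $\mathsf{LP_{VC}}(G')$. For the upper bound, the assignment $x_v = 1/2$ for every $v \in V(G')$ is a feasible solution of the standard vertex cover LP (since every edge contributes $1/2+1/2 = 1$), and it has value $|V(G')|/2 = |V(G)|$; hence $\mathsf{LP_{VC}}(G') \le |V(G)|$. For the matching lower bound, the standard LP-duality bound $\mathsf{LP_{VC}}(G') \ge \mu(G')$ (obtained by summing the covering constraints over the edges of any matching) together with $\mu(G') = |V(G)|$ yields $\mathsf{LP_{VC}}(G') \ge |V(G)|$, so $\mathsf{LP_{VC}}(G') = |V(G)|$.

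Combining, $\ell = k' - \mathsf{LP_{VC}}(G') = (|V(G)| + k) - |V(G)| = k$ and $m = k' - \mu(G') = (|V(G)| + k) - |V(G)| = k$, which proves the claim. No step here is expected to be a real obstacle; the only subtlety worth noting in passing is that we must use the standard LP formulation of vertex cover (as defined in the preliminaries of the paper via Section 3.4 of \cite{DBLP:books/sp/CyganFKLMPPS15}) so that the half-integral assignment $x_v = 1/2$ is valid and the matching lower bound $\mathsf{LP_{VC}} \ge \mu$ applies.
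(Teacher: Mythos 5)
Your proof is correct and takes essentially the same approach as the paper: both use the fact that $\{\{v_1,v_2\}:v\in V(G)\}$ is a perfect matching of $G'$ and deduce $\mu(G')=\mathsf{LP_{VC}}(G')=|V(G)|$, then substitute into the definitions of $\ell$ and $m$. You merely spell out the standard sandwich argument (half-integral LP solution for the upper bound, the matching bound $\mathsf{LP_{VC}}\ge\mu$ for the lower bound) that the paper leaves implicit in one sentence.
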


\begin{proof}
Since $G$ has a perfect matching (consisting of the edges in $\{\{v_1,v_2\}: v\in V(G)\}$), we have that $\mathsf{LP}_{\mathsf{VC}}(G)=\mu(G)=|V(G)|$. So, $\ell=m=k'-|V(G)|=k$.
\end{proof}

Now, let ${\cal S}$ be the set of odd cycle transversals of $G$ of size at most $k$, and let ${\cal S}'$ be the set of vertex covers of $G'$ of size at most $k'$. Observe that for any $S\in {\cal S}$, because $G-S$ is connected, we have that $G-S$ has exactly two bipartitions, $(A,V(G-S)\setminus A)$ and $(V(G-S)\setminus A,A)$, for some $A\subseteq V(G-S)$, which we denote by $A_S$; also, let $B_S=V(G-S)\setminus A_S$.
We define a function $\mathsf{map}: {\cal S}\rightarrow{\cal S}'$ as follows. For any $S\in {\cal S}$, let $\mathsf{map}=\{U^1_S,U^2_S\}$ where $U^1_S=\{v_1: v\in A_S\cup S\}\cup\{v_2: v\in B_S\cup S\}$ and $U^2_S=\{v_2: v\in A_S\cup S\}\cup\{v_1: v\in B_S\cup S\}$.

First, we assert that $U^1_S,U^2_S\in{\cal S}'$.

\begin{claim}\label{claim:vc1}
For every $S\in{\cal S}$, $\mathsf{map}(S)\subseteq{\cal S}'$. Moreover, for distinct $S,T\in{\cal S}$, $\mathsf{map}(S)\cap\mathsf{map}(T)=\emptyset$.
\end{claim}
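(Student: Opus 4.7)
The plan is to prove the claim in two parts: first the membership $U^1_S, U^2_S \in \mathcal{S}'$, then the disjointness of the images.

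For membership, I would begin by computing sizes. Since $A_S$ and $B_S$ are a partition of $V(G) \setminus S$, the sets $A_S \cup S$ and $B_S \cup S$ have cardinalities $|A_S| + |S|$ and $|B_S| + |S|$, respectively. The definition of $U^1_S$ takes disjoint copies (one from $G_1$, one from $G_2$), so $|U^1_S| = |A_S| + |B_S| + 2|S| = |V(G)| + |S| \leq |V(G)| + k = k'$, and similarly for $U^2_S$. To verify that $U^1_S$ is a vertex cover, I would consider the three types of edges in $G'$. For an edge $\{u_1,v_1\} \in E(G_1)$, either $\{u,v\} \cap S \neq \emptyset$ (both copies are in $U^1_S$), or $u, v \in V(G-S)$, in which case $\{u,v\}$ is an edge of $G-S$, so the unique bipartition $(A_S,B_S)$ places them on opposite sides and exactly one of $u_1, v_1$ lies in $U^1_S$. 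The same reasoning applies symmetrically to edges of $G_2$ using the $B_S$-side copies. For a matching edge $\{v_1,v_2\}$, one of the three cases $v \in S$, $v \in A_S$, $v \in B_S$ always puts at least one endpoint in $U^1_S$. A symmetric argument handles $U^2_S$. Here, the niceness of the instance is essential: it guarantees that $G-S$ is connected and hence has a unique bipartition (up to swapping the two sides), which is what makes $U^1_S$ and $U^2_S$ well-defined from $S$ alone.

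For disjointness, my strategy is to show that $S$ is uniquely recoverable from any single element of $\mathsf{map}(S)$. Specifically, I claim that for $U \in \mathsf{map}(S)$, one has $S = \{v \in V(G) : v_1 \in U \text{ and } v_2 \in U\}$. Indeed, for $U = U^1_S$: if $v \in S$, then $v \in A_S \cup S$ and $v \in B_S \cup S$, so both $v_1$ and $v_2$ lie in $U^1_S$; conversely, if $v \notin S$, then $v$ belongs to exactly one of $A_S, B_S$, so exactly one of $v_1, v_2$ lies in $U^1_S$. The identical statement for $U^2_S$ holds by the analogous case analysis. Consequently, if $\mathsf{map}(S) \cap \mathsf{map}(T) \neq \emptyset$, choosing any common element $U$ yields $S = \{v : v_1, v_2 \in U\} = T$, contradicting that $S$ and $T$ are distinct.

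The main obstacle, if any, is being careful about the role of niceness: without it, $G-S$ could have multiple connected components and therefore multiple bipartitions, which would make the definition of $A_S$ and $B_S$ ambiguous and break the clean bijection between $S$ and the pair $\{U^1_S, U^2_S\}$. Once this point is isolated, both the vertex-cover verification and the recovery-based disjointness argument are routine case analyses on edges of $G'$ and on membership of vertices in $S$ versus $A_S, B_S$.
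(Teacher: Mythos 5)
Your proof is correct and follows essentially the same route as the paper, with two minor differences worth noting. For membership, the paper dispatches the edges of $G_1$ (resp.\ $G_2$) in one line by citing that $B_S$ (resp.\ $A_S$) is an independent set of $G$, so every $G$-edge has an endpoint in $A_S\cup S$ (resp.\ $B_S\cup S$); your case analysis on whether the edge touches $S$ unpacks the same fact. For disjointness, the paper declares it ``immediate from the definition of $\mathsf{map}$,'' whereas you make this explicit by showing $S$ is recoverable from any $U\in\mathsf{map}(S)$ as $\{v: v_1,v_2\in U\}$; this is exactly the right justification and is the same recovery that underlies the paper's Claim \ref{claim:vc2}, so spelling it out here is a genuine improvement in readability. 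Your remark on the role of niceness (uniqueness of the bipartition of $G-S$ up to swapping sides, hence well-definedness of $A_S,B_S$) is accurate and matches the paper's setup preceding the claim.
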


\begin{proof}
Consider some $S\in{\cal S}$. For the first part of the claim, we only show that $U^1_S\in{\cal S}'$, since the proof that $U^2_S\in{\cal S}'$ is symmetric. Since $B_S$ is an independent set in $G$, we have that $\{v_1: v\in A_S\cup S\}\subseteq U^1_S$ covers all edges in $G_1$, and since $A_S$ is an independent set in $G$, we have that $\{v_2: v\in B_S\cup S\}\subseteq U^1_S$ covers all edges in $G_2$. Additionally, for every $v\in V(G)$, $v$ is in $S$, in $A_S$ or in $B_S$, and hence $\{v_1,v_2\}\cap U^1_S\neq\emptyset$. So, $U^1_S$ covers $E(G')$. Moreover, $|U^1_S|=|V(G)|+|S|\leq |V(G)|+k=k'$. Hence, $U^1_S\in{\cal S}'$.

The second part of the claim is immediate from the definition of $\mathsf{map}$.
\end{proof}

\begin{claim}\label{claim:vc2}
For every $S'\in{\cal S}'$, there exists $S\in{\cal S}$ such that $S'\in\mathsf{map}(S)$. (That is, $\mathsf{map}$ is surjective.)
\end{claim}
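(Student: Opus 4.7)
The plan is to construct, from any vertex cover $S' \in {\cal S}'$, a corresponding odd cycle transversal $S$ of $G$ and verify that $S' \in \mathsf{map}(S)$. The construction is the natural one dual to $\mathsf{map}$: let
\[
S = \{v \in V(G) : v_1 \in S' \text{ and } v_2 \in S'\},
\]
and then partition $V(G) \setminus S$ into $A = \{v : v_1 \in S', v_2 \notin S'\}$ and $B = \{v : v_2 \in S', v_1 \notin S'\}$. Note that this is indeed a partition: every $v$ must have at least one of $v_1, v_2$ in $S'$ to cover the edge $\{v_1, v_2\}\in E(G')$, and the remaining case (both in $S'$) is exactly $S$.

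First I would bound $|S|$. Since $S'$ covers each of the $|V(G)|$ edges $\{v_1, v_2\}$, we have $|S'| \geq |V(G)| + |S|$, so from $|S'| \leq k' = |V(G)|+k$ we immediately get $|S| \leq k$. Next I would verify that $(A,B)$ is a bipartition of $G-S$, which is where the vertex cover property of $S'$ is used twice (once on $G_1$ and once on $G_2$). Take any edge $\{u,v\} \in E(G-S)$, so $u, v \notin S$. If $u \in A$, then $u_2 \notin S'$, so to cover the edge $\{u_2, v_2\} \in E(G_2)$ we need $v_2 \in S'$; combined with $v \notin S$ this forces $v \in B$. The symmetric argument handles $u \in B$. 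So every edge of $G-S$ goes between $A$ and $B$, which shows $G-S$ is bipartite, i.e.\ $S$ is an odd cycle transversal, and hence $S \in {\cal S}$.

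Finally I would invoke the niceness hypothesis: since $(G,k)$ is a nice instance and $S \in {\cal S}$, the graph $G-S$ is connected, so its bipartition is unique up to swapping sides. Therefore either $(A,B) = (A_S, B_S)$ or $(A,B) = (B_S, A_S)$. In the first case, unpacking definitions gives $S' = \{v_1 : v \in A \cup S\} \cup \{v_2 : v \in B \cup S\} = U^1_S$, and in the second case $S' = U^2_S$. Either way, $S' \in \mathsf{map}(S)$, as required.

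The main obstacle is ensuring the bipartition recovered from $S'$ matches one of the two bipartitions used to define $\mathsf{map}(S)$; this is precisely why the niceness restriction is needed, since without the connectivity of $G-S$, the graph could admit many distinct bipartitions and $S'$ could witness a ``hybrid'' of them that is not equal to either $U^1_S$ or $U^2_S$. Together with Claim~\ref{claim:vc1}, this surjectivity gives a $2$-to-$1$ correspondence between ${\cal S}$ and ${\cal S}'$, so $|{\cal S}'| = 2|{\cal S}|$, and the $\mathsf{lift}$ procedure of the PPT simply outputs half the number of vertex covers of $G'$ of size at most $k'$ that it is given.
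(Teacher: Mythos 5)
Your proof is correct and follows essentially the same approach as the paper: define $S$ as the set of vertices with both copies in $S'$, observe that the remaining vertices split into the two classes $A$ and $B$ (the paper's $Y_S$ and $X_S$) which form a bipartition of $G-S$, bound $|S|\le k$, and then use connectivity of $G-S$ (the niceness hypothesis, built into the definitions of $A_S,B_S$) to conclude $S'\in\{U^1_S,U^2_S\}=\mathsf{map}(S)$. The only cosmetic differences are the naming of the two sides of the bipartition and your use of an inequality rather than the exact identity $|S'|=|V(G)|+|S|$, both of which are harmless.
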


\begin{proof}
Consider some $S'\in{\cal S}'$.  Let $S=\{v\in V(G): v_1,v_2\in S'\}$. We claim that $S\in{\cal S}$. For this purpose, let $X_S=\{v\in V(G): v_1\notin S', v_2\in S'\}$ and $Y_S=\{v\in V(G): v_2\notin S', v_1\in S'\}$. Since $S'$ covers $E(G')$ and in particular $\{\{v_1,v_2\}: v\in V(G)\}$, we have that $(S,X_S,Y_S)$ is a partition of $V(G)$, and $|S'|=|V(G)|+|S|$. Moreover, since $S'$ covers $E(G_1)$ (resp., $E(G_2)$), $G[X_S]$ (resp., $G[Y_S]$) must be an independent set. So, $(X_S,Y_S)$ is a bipartition of $G-S$, which means that $S$ is an odd cycle transversal of $G$. Since $|S'|\leq k'=|V(G)|+k$, we get that $|S|\leq k$. Hence, $S\in{\cal S}$. Now, notice that either $A_S=X_S$ and $B_S=Y_S$ (then, $S'=U^2_S$) or $B_S=X_S$ and $A_S=Y_S$ (then, $S'=U^1_S$). So, $S'\in\mathsf{map}(S)$.
\end{proof}

Observe that, from Claims \ref{claim:vc1} and \ref{claim:vc2}, it follows $|{\cal S}|=\frac{1}{2}|{\cal S}'|$. So, given $|{\cal S}'|$, the lifting procedure of the PPT simply outputs $\frac{1}{2}|{\cal S}'|$. Thus, the proof is complete.
\end{proof}


\section{Proof of Theorem \ref{thm:exact}}\label{sec:exactProof}

Towards the proof of Theorem \ref{thm:exact}, we define the notion of EXACT-distillation as follows.

\begin{definition}[{\bf $t$-Bounded EXACT-Distillation}]
Let $P,Q$ be two counting problems, and let $t:\mathbb{N}\rightarrow\mathbb{N}$. A {\em $t$-bounded EXACT-distillation} from $P$ into $Q$ is a pair $(\mathsf{reduce},\mathsf{lift})$ of two polynomial-time procedures such that:
\begin{itemize}
\item Given $n\in\mathbb{N}$, and $t(n)$ instances $x_1,x_2,\ldots,x_{t(n)}$ of $P$ with $|x_i|=n$ for all $i\in[t(n)]$, $\mathsf{reduce}$ outputs an instance $y$ of $Q$ with $|y|\leq t(n)\log n$.
\item Given $n,x_1,x_2,\ldots,x_{t(n)},y$ and $Q(y)$, $\mathsf{lift}$ outputs $Q(x_1),Q(x_2),\ldots,Q(x_{t(n)})$.
\end{itemize}
\end{definition}

Further, we will use the following proposition.

\begin{proposition}[Lemma 17.4 in \cite{fomin2019kernelization}]\label{prop:cover}
Let $X,Y$ be finite sets, $p\in\mathbb{N}$, and $\beta: X^p\rightarrow Y$.\footnote{Here, $X^p=X\times X\times\cdots\times X$ where $X$ appears $p$ times.} We say that $y\in Y$ {\em covers} $x\in X$ if there exist $x_1,x_2,\ldots,x_p\in X$ such that $x_i=x$ for some $i\in[p]$ and $\beta(x_1,x_2\ldots,x_p)=y$. Then, there exists $y\in Y$ that covers at least $|X|/|Y|^{1/p}$ elements in $X$.
\end{proposition}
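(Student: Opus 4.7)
The plan is to prove the covering lemma by a standard double-counting (averaging) argument on the preimages of $\beta$. The key insight is that if $y \in Y$ has preimage $\beta^{-1}(y) \subseteq X^p$, then every coordinate of every tuple in that preimage must be an element covered by $y$. This immediately bounds the size of $\beta^{-1}(y)$ in terms of the cover-size of $y$, and summing over $y$ recovers $|X|^p$.

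More concretely, first I would introduce, for each $y \in Y$, the set $C(y) = \{x \in X : y \text{ covers } x\}$, so that the quantity we wish to bound from below is $\max_{y \in Y} |C(y)|$. Next I would observe the crucial containment: if $(x_1, x_2, \ldots, x_p) \in \beta^{-1}(y)$, then by the definition of covering each $x_i$ is covered by $y$, hence $x_i \in C(y)$ for every $i \in [p]$. Consequently $\beta^{-1}(y) \subseteq C(y)^p$, which gives the pointwise bound $|\beta^{-1}(y)| \leq |C(y)|^p$.

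Then I would apply the partition $X^p = \bigsqcup_{y \in Y} \beta^{-1}(y)$ to write
\[
|X|^p \;=\; \sum_{y \in Y} |\beta^{-1}(y)| \;\leq\; \sum_{y \in Y} |C(y)|^p \;\leq\; |Y| \cdot \bigl(\max_{y \in Y} |C(y)|\bigr)^p.
\]
Rearranging yields $\bigl(\max_{y \in Y} |C(y)|\bigr)^p \geq |X|^p / |Y|$, and taking $p$-th roots gives $\max_{y \in Y} |C(y)| \geq |X| / |Y|^{1/p}$, which is exactly the desired conclusion.

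There is essentially no obstacle here; the statement is a one-line pigeonhole once the containment $\beta^{-1}(y) \subseteq C(y)^p$ is noticed. The only mildly subtle point is that the definition of ``covering'' allows the distinguished coordinate $i$ with $x_i = x$ to be arbitrary, but this is exactly what makes the containment hold for \emph{all} coordinates of any preimage tuple, so no care beyond stating the observation is required.
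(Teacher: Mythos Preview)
Your argument is correct and is exactly the standard proof of this lemma: the containment $\beta^{-1}(y)\subseteq C(y)^p$ followed by summing over $y$ and pigeonholing. The paper does not give its own proof of this proposition---it is quoted as Lemma~17.4 from the cited kernelization textbook---and your write-up matches that standard argument.
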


For a $t$-bounded EXACT-distillation, we prove the following statement.

\begin{lemma}\label{lem:EXACTdistillation}
Let $P,Q$ be two counting problems such that there exists a $t$-bounded EXACT-distillation from $P$ into $Q$ for some polynomially bounded function $t$. Then, $P\in$NP/poly. In particular, if $P$ is \#P-hard, then \#P $\subseteq$ ``NP/poly''.
\end{lemma}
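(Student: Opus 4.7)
The plan is to mimic the classical OR-distillation lower-bound argument (as in the proof of Proposition \ref{prop:crossComp}), but adapt it from the decision setting to the counting setting, where the advice will carry not only the ``good'' reduced instances but also their $Q$-values. Fix an input length $n$, let $X_n$ denote the set of instances of $P$ of size exactly $n$, and let $Y_n$ denote the (finite) set of strings that $\mathsf{reduce}$ may output on inputs drawn from $X_n^{t(n)}$. Since $t$ is polynomially bounded and $|y|\leq t(n)\log n$ for every $y\in Y_n$, we have $|Y_n|\leq 2^{t(n)\log n}=n^{t(n)}$, and hence $|Y_n|^{1/t(n)}\leq n$.

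My first step is to construct a polynomially bounded advice sequence $\alpha_n$ by iterating Proposition \ref{prop:cover}. Applying it with $\beta=\mathsf{reduce}$, $X=X_n$, $Y=Y_n$, and $p=t(n)$ yields some $y^{(1)}\in Y_n$ that ``covers'' at least $|X_n|/n$ elements, i.e.\ there are at least $|X_n|/n$ instances $x\in X_n$ for which some tuple $(x_1,\ldots,x_{t(n)})$ with $x_j=x$ satisfies $\mathsf{reduce}(x_1,\ldots,x_{t(n)})=y^{(1)}$. Removing the covered instances and reapplying the proposition to the residual set produces further strings $y^{(2)},y^{(3)},\ldots$; a standard $(1-1/n)^k\leq 2^{-n}$ calculation shows that after $k=O(n^2)$ iterations every instance in $X_n$ is covered. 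I then set $\alpha_n$ to be the list of pairs $(y^{(i)},Q(y^{(i)}))$ for $i=1,\ldots,k$. Each $y^{(i)}$ has bitsize at most $t(n)\log n$, and since any natural counting problem one cares about here is well-behaved, each $Q(y^{(i)})$ also has polynomially bounded bitsize, so $|\alpha_n|$ is polynomial in $n$.

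The NP/poly algorithm $A$ for $P$ proceeds as follows. On input $x$ of length $n$ with advice $\alpha_n=((y^{(i)},v^{(i)}))_{i\in[k]}$, it nondeterministically guesses an index $i\in[k]$, a position $j\in[t(n)]$, and a tuple $(x_1,\ldots,x_{t(n)})\in X_n^{t(n)}$ with $x_j=x$. It runs $\mathsf{reduce}(x_1,\ldots,x_{t(n)})$ and compares the output with $y^{(i)}$; if they disagree the branch outputs ``do not know''. Otherwise, $v^{(i)}=Q(y^{(i)})$ is correct by construction of $\alpha_n$, so $A$ can invoke $\mathsf{lift}$ on $(n,x_1,\ldots,x_{t(n)},y^{(i)},v^{(i)})$ to recover $P(x_1),\ldots,P(x_{t(n)})$ in polynomial time, and outputs $P(x_j)=P(x)$. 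Every branch therefore outputs either $P(x)$ or ``do not know'', and the covering property guarantees that there exists at least one $i$ and at least one tuple for which the branch succeeds. This shows $P\in\mathsf{NP/poly}$.

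For the ``in particular'' clause, observe that if $P$ is \#P-hard then every $P'\in\#\mathsf{P}$ admits a polynomial-time counting reduction to $P$; composing this reduction with the NP/poly algorithm for $P$ (using the same advice sequence, possibly augmented by a polynomial-size description of the reduction table) yields an NP/poly algorithm for $P'$, so $\#\mathsf{P}\subseteq{}$``$\mathsf{NP/poly}$''. The only delicate point is the covering iteration: one must check that the residual sets remain non-empty at every step so that Proposition \ref{prop:cover} keeps applying, and that the bound $|Y_n|^{1/t(n)}\leq n$ really does force each step to cover a $1/n$-fraction of what remains; both follow directly from the size constraints on $\mathsf{reduce}$'s output.
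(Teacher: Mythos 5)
Your proposal is correct and follows essentially the same route as the paper's proof: iterate Proposition~\ref{prop:cover} to build a polynomial-size set of "covering" reduced instances, pair each one with its $Q$-value to form the advice, and have the nondeterministic algorithm guess a tuple containing $x$, run $\mathsf{reduce}$, match against the advice, and apply $\mathsf{lift}$. The only cosmetic difference is that you bound the covering fraction by $1/n$ (via $|Y_n|^{1/t(n)}\leq n$) and iterate $O(n^2)$ times, whereas the paper bounds it by $1/t(n)$ and iterates $O(t(n)\,n)$ times; both yield polynomially bounded advice, and you additionally spell out the reduction for the "\#P-hard" clause, which the paper leaves implicit.
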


\begin{proof}
Let $(\mathsf{reduce},\mathsf{lift})$ be a $t$-bounded EXACT-distillation from $P$ into $Q$ for some polynomially bounded function $t$. For every $n\in\mathbb{N}$, let $X_n$ be the set of instances of $P$ of size $n$, let $\alpha(n)=t(n)\log n$, and let $Y_{\alpha(n)}$ be the set of instances of $Q$ of size at most $\alpha(n)$; because we suppose (w.l.o.g.) that our alphabet is binary, $|X_n|=2^n$ and $|Y_{\alpha(n)}|\leq \alpha(n)\cdot 2^{\alpha(n)} = (t(n)\log n)\cdot n^{t(n)} \leq t(n)^{t(n)}$. Here, we suppose (w.l.o.g.) that $t(n)$ is large enough compared to $n$.

By Proposition \ref{prop:cover}, there exists $y_1\in Y_{\alpha(n)}$ that covers at least $|X_n|/|Y_{\alpha(n)}|^{1/t(n)}\geq |X_n|/t(n)$ elements in $X_n$. Let $Z_1=\{x\in X_n: y_1$ covers $x\}$. By Proposition \ref{prop:cover} again,  there exists $y_2\in Y_{\alpha(n)}\setminus\{y_1\}$ that covers at least $|X_n\setminus Z_1|/(|Y_{\alpha(n)}|-1)^{1/t(n)}\geq |X_n\setminus Z_1|/t(n)$ elements in $X_n$. Let $Z_2=\{x\in X_n\setminus Z_1: y_2$ covers $x\}$. Generally, we let $y_i\in Y_{\alpha(n)}\setminus\{y_1,\ldots,y_{i-1}\}$ be an element that covers at least $|X_n\setminus(Z_1\cup\cdots\cup Z_{i-1})|/t(n)$ elements in $X_n\setminus(Z_1\cup\cdots\cup Z_{i-1})$, where $Z_i$ is the set of new elements covered by $y_i$.

\begin{claim}
For every $i\in[t(n)]$, $|Z_1\cup\cdots\cup Z_i|\geq\min(|X_n|,C_i|X_n|)$ for $C_i=\frac{t(n)-1}{t(n)}C_{i-1}+\frac{1}{t(n)}$ where $C_1=\frac{1}{t(n)}$.
\end{claim}

\begin{proof}
We use induction on $i$. Notice that we have already proved the base case ($i=1$). Now, we suppose that the claim is correct for $i-1$, and let us prove it for $i$. If $|Z_1\cup\cdots\cup Z_i|=|X_n|$, then we are done, and hence we next suppose that this is not the case. Then, by the inductive hypothesis, 
\[\begin{array}{ll}
|Z_1\cup\cdots\cup Z_i| &= |Z_1\cup\cdots\cup Z_{i-1}|+|Z_i| \geq |Z_1\cup\cdots\cup Z_{i-1}| + \displaystyle{\frac{|X_n|-|Z_1\cup\cdots\cup Z_{i-1}|}{t(n)}}\\
&\geq C_{i-1}|X_n| + \displaystyle{\frac{|X_n|-C_{i-1}|X_n|}{t(n)} = \left(\frac{t(n)-1}{t(n)}C_{i-1}+\frac{1}{t(n)}\right)\cdot|X_n|}.
\end{array}\]
This completes the proof of the claim.
\end{proof}

Now, observe that the above recurrence evaluates to
\[C_i = \displaystyle{\frac{1}{t(n)}\cdot\sum_{j=1}^{i-1}(\frac{t(n)-1}{t(n)})^j = 1-(1-\frac{1}{t(n)})^i}.\]
Specifically, the last equality can be verified by induction on $i$. Setting $i=2t(n)n$, we have that $C_i|X_n|\geq \displaystyle{\left(1-(1-\frac{1}{t(n)})^i\right)|X_n|\geq (1-\frac{1}{e^n})|X_n|}$. Because $|X_n|=2^n$, this means that all elements in $X_n$ are covered. 
We conclude that there exists $S_n\subseteq Y_{\alpha(n)}$ of size $t(n)^{\OO(1)}\leq n^{\OO(1)}$ that covers all elements in $X_n$.

Having $S_n$ at hand, we are ready to present an ``NP/poly'' algorithm $A$ for $P$. Given an instance $x$ of $P$ of size $n$, the advice used is the encoding of $\{(y,s): y\in S_n, s=Q(y)\}$. Observe that, since $|S_n|\leq n^{\OO(1)}$ and $Q$ is well-behaved, the size of the encoding is bounded polynomially in $n$. Using nondeterminism,  $A$ guesses a set of $t(n)$ strings of size $n$ each, $x_1,x_2\ldots,x_{t(n)}$, such that at least one of these strings is $x$. Then, $A$ calls $\mathsf{reduce}$ on $x_1,x_2\ldots,x_{t(n)}$, and obtains a string $y$. If there exists $s$ such that $(y,s)$ belongs to the advice, then observe that this $s$ is unique (being $Q(y)$), and $A$ returns  the output of $\mathsf{lift}$ on $s$. Otherwise, it returns ``Do Not Know''. 

From the construction of the advice, and the correctness of $(\mathsf{reduce},\mathsf{lift})$, it should be clear that, for every computation path of $A$, the output is either $P(x)$ or ``Do Not Know'', and that there exists a computation path of $A$ whose output is $P(x)$. Further, since $\mathsf{reduce}$ and $\mathsf{lift}$ are polynomial-time procedures, we have that $A$ runs in nondeterministic polynomial-time. This completes the proof.
\end{proof}

Having Lemma \ref{lem:EXACTdistillation} at hand, we are ready to prove Theorem \ref{thm:exact}.

\lowerThmEXACT*

\begin{proof}
Let $A$ and $R$ be the EXACT-cross-composition from $P$ into $Q$  and the corresponding equivalence relation, respectively, in the premise of the theorem. Targeting a contradiction, suppose that $Q$ has a polynomial compression $(\mathsf{reduce},\mathsf{lift})$ into some parameterized counting problem $W$.
Since $W$ is well-behaved, for any $n\in\mathbb{N}$, we can compute, in polynomial time, $N_n\in\mathbb{N}$ such that for every $(x,k)\in \Sigma^\star \times\mathbb{N}_0$ of size at most $n$, $W(x,k)\leq N_n$. We define a new parameterized counting problem, called EXACT$(W)$, as follows. An instance of EXACT$(W)$, $(z,k)$,  is of the form $z=z_1\#z_2\# . . . \#z_q$ and $k=\sum_{i=1}^qk_i$ for some $q\in\mathbb{N}$, where, for every $i\in[q]$, $(z_i,k_i)$ is an instance of $W$; then, $W(z,k)=\sum_{i=1}^qM_n^iW(z_i,k_i)$, where $M_n$ is determined later.  For some polynomially bounded function $t$, we will construct a $t$-bounded EXACT-distillation $(\mathsf{reduce}',\mathsf{lift}')$ from $P$ into EXACT$(W)$.  Due to Lemma \ref{lem:EXACTdistillation}, this will yield a contradiction, which will conclude the proof of the theorem.

Towards the construction of the above-mentioned EXACT-distillation, we identify three constants, $c_1,c_2$ and $c_3$:
\begin{enumerate}
\item Let $c_1$ be a fixed constant such that, given instances $x_1,x_2,\ldots,x_\ell$ of $P$ for some $\ell\in\mathbb{N}$ that are equivalent with respect to $R$, $A$ outputs an instance $(y,k)$ of $Q$  such that $k\leq(\max_{i=1}^\ell|x_i|+\log\ell)^{c_1}$. 
\item Let $c_2$ be a fixed constant such that, given an instance $(y,k)$ of $Q$, $\mathsf{reduce}$ outputs an instance $(z,k')$ of $W$ such that $|z|,k'\leq k^{c_2}$.
\item Let $c_3$ be a fixed constant such that, for every $n\in\mathbb{N}$, any set of strings that are each of size $n$ can be partitioned in polynomial time into at most $n^{c_3}$ $R$-equivalent classes.
\end{enumerate}

Let $t(n)=n^{2(c_1\cdot c_2+c_3)}$ be a polynomially bounded function. Let $M_n=n^{c_3}\cdot N_n$.

First, we describe the procedure $\mathsf{reduce}'$. For this purpose, let $x_1,x_2,\ldots,x_{t(n)}$ be $t(n)$ instances of $P$ of size $n$. We partition them, in polynomial time, into $R$-equivalent classes $X_1,X_2,\ldots,X_r$, where $r\leq n^{c_3}$. For every $i\in[r]$, we call $A$ on $X_i$ to obtain, in polynomial time, an instance $(y_i,k_i)$ of $Q$ with $k_i\leq (n+\log t(n))^{c_1}$. Then, for every $i\in[r]$, we call $\mathsf{reduce}$ on $(y_i,k_i)$ to obtain, in polynomial time, an instance $(z_i,k'_i)$ of $W$ with $|z_i|,k'_i\leq k_i^{c_2}\leq (n+\log t(n))^{c_1\cdot c_2}$.  The output instance of EXACT($W$) is $(z,k')$ where $z=z_1\#z_2\# . . . \#z_r$ and $k'=\sum_{i=1}^rk_i$. So, the running time is polynomial, and $|z|,k'\leq r+r\cdot (n+\log t(n))^{c_1\cdot c_2}\leq n^{2(c_1\cdot c_2+c_3)}=t(n)$. 

Second, we describe the procedure $\mathsf{lift}'$. Here, we are given $x_1,x_2,\ldots,x_{t(n)},(z,k')$ and $s'=\mathrm{EXACT}(W)(z,k')=\sum_{i=1}^qM_n^iW(z_i,k_i)$. By the choice of $N_n$, for every $i\in[r]$, we have that \[\begin{array}{ll}
M_n^i& \geq r\cdot N_n\cdot M_n^{i-1}\\
& >(i-1)\cdot N_n\cdot M_n^{i-1}\\
& \geq \sum_{j=1}^{i-1}N_n\cdot M_n^j\\
& \geq \sum_{j=1}^{i-1}W(z_j,k_j)\cdot M_n^j.
\end{array}\]
Hence, we can extract from $s'$, in polynomial time, $s'_1,s'_2,\ldots,s'_r$ such that $s'_i=W(z_i,k_i)$ for all $i\in[r]$. Indeed, this can be done by the following procedure:
\begin{enumerate}
\item Initialize $\widehat{s}\leftarrow s'$.
\item For $i=r,r-1,\ldots,1$:
	\begin{enumerate}
	\item Let $s'_i\leftarrow \lfloor\widehat{s}/M_n^i\rfloor$.
	\item Update $\widehat{s}\leftarrow \widehat{s}-s_i\cdot M_n^i$.
	\end{enumerate}
\item Return $s'_1,s'_2,\ldots,s'_\ell$.
\end{enumerate}
Then, using $\mathsf{lift}$ on each of $(y_i,k_i),(z_i,k'_i),s'_i$, $i\in[r]$, we obtain, in polynomial time, $s_1,s_2,\ldots,s_r$ such that $s_i=Q(y_i,k_i)$ for every $i\in[r]$. Finally, by the second item in the definition of an EXACT-cross-composition (Definition \ref{def:exactCross}), for each $i\in[r]$, we can apply a polynomial-time procedure that, given $X_i,(y_i,k_i)$ and $s_i$, outputs $P(x)$ for every $x\in X_i$. Thus, we derive $P(x_1),P(x_2),\ldots,P(x_{t(n)})$. This completes the proof.
\end{proof}

\end{document}